\newtheorem{thrm}{Theorem}
\newtheorem{lemma}{Lemma}
\newtheorem{ex}{Example}
\newtheorem{prop}{Proposition}
\newtheorem{defi}{Definition}
\newtheorem{coro}{Corollary}
\DeclareMathOperator*{\argmin}{arg\,min}
\DeclareMathOperator*{\argmax}{arg\,max}
\def\BibTeX{{\rm B\kern-.05em{\sc i\kern-.025em b}\kern-.08em
    T\kern-.1667em\lower.7ex\hbox{E}\kern-.125emX}}
\newcommand{\overbar}[1]{\mkern 1.5mu\overline{\mkern-1.5mu#1\mkern-1.5mu}\mkern 1.5mu}    
\begin{document}

\title{Minimum Age TDMA Scheduling\thanks{This paper was presented in part 
at IEEE INFOCOM 2019~\cite{conf}.}\thanks{This work is partially supported 
by the Ministry of Science and Technology of R.O.C. under contract No. 
MOST 106-2221-E-004-005-MY3.}}

\author{\IEEEauthorblockN{Tung-Wei Kuo}\\
\IEEEauthorblockA{\textit{Department of Computer Science} \\
\textit{National Chengchi University}\\
Taipei, Taiwan \\
twkuo@cs.nccu.edu.tw}
}

\maketitle
\begin{abstract}
We consider a transmission scheduling problem in which
multiple systems receive update information through a shared 
Time Division Multiple Access (TDMA) channel. 
To provide timely delivery of update information,
the problem asks for a schedule that minimizes the overall age of information.
We call this problem the Min-Age problem. 
This problem is first studied by He \textit{et al.} 
[IEEE Trans. Inform. Theory, 2018], 
who identified several special cases where the problem can be solved 
optimally in polynomial time. Our contribution is threefold. 
First, we introduce a new job scheduling problem called the Min-WCS problem, 
and we prove that, for any constant $r \geq 1$, 
every $r$-approximation algorithm for the Min-WCS problem can be transformed 
into an $r$-approximation algorithm for the Min-Age problem. 
Second, we give a randomized 2.733-approximation algorithm 
and a dynamic-programming-based exact algorithm for the Min-WCS problem. 
Finally, we prove that the Min-Age problem is NP-hard. 
\end{abstract}


\section{Introduction}
We consider systems whose states change upon reception of update messages. 
Such systems include, for example, web caches~\cite{Yu:1999:SWC:316188.316219}, 
intelligent vehicles~\cite{5307471}, and real-time databases~\cite{1275297}. 
The timely delivery of update messages is often critical to the smooth and
secure functioning of the system. Moreover, since any given update is 
likely dependent on previous updates, the update messages should not be 
delivered out of order. In most cases, the system does not have 
exclusive access to a communication channel. 
Instead, it must share the channel with other systems. 
Hence, the transmission schedule 
plays a crucial role in determining the performance 
of the systems that share the channel.

This scenario can be modeled by multiple sender-receiver pairs 
and a channel shared by these sender-receiver pairs. The sender 
sends update messages to the receiver through the shared channel, 
and the receiver changes its state upon reception of an update message.\footnote{The 
sender may serve as a relay or hub for the system and thus may not be 
responsible for generating update messages.} 
This paper discusses the design of transmission scheduling algorithms 
for such channels. 
Specifically, we assume that the channel has a buffer in which the update 
messages are stored, and a transmission schedule for the messages 
in the buffer must be determined.\footnote{The buffer 
may be a logical one that stores the inputs to a scheduler.} 
In this paper, we refer to a system that changes its state upon reception 
of an update message as a \textit{receiver}.

To keep the state of a receiver as fresh as possible, 
it is important to keep the age of the receiver as small as possible.
Specifically, the age of a receiver is the age of the receiver's  
most recently received message $M$, i.e.,
the difference between the current time and the time at which $M$ is generated.
Most prior research analyzes 
the age of a receiver through stochastic 
process models~\cite{5984917, 6195689, 7282742, 6620189, bestpaper, 8006590, 
7852321, DBLP:journals/corr/KaulY17, DBLP:journals/corr/abs-1803-06469},
where the randomness comes from the state of the channel or 
the arrival process of update messages. 
In this paper, we take a combinatorial optimization approach to 
minimize the overall age of all receivers on a reliable channel.
In particular, we study the problem defined by He \textit{et al.},
who considered a scenario in which the transmission 
scheduling algorithm is invoked repeatedly~\cite{IT}. 
Specifically, after the scheduling algorithm computes a schedule, 
the channel then delivers the messages according to the schedule.
New messages may arrive while the channel is delivering the scheduled messages. 
These new messages are stored in the buffer and scheduled for transmission 
during the next invocation of the algorithm.

The scheduling algorithm should be designed with the characteristics of the 
channel in mind. For example, He \textit{et al.} considered a wireless channel, 
in which various senders might interfere with one another~\cite{IT}.
They also considered a Time Division Multiple Access (TDMA) channel, 
in which the channel delivers one message at a time.
They identified some conditions in which optimal schedules can 
be obtained by sorting the sender-receiver pairs 
according to the number of messages to be sent to the receiver~\cite{IT}.
However, even if the channel is TDMA-based, 
it remained open whether the problem can be solved 
optimally in polynomial time. In this paper, we therefore 
focus on TDMA channels. In the remainder of this paper, 
we refer to this scheduling problem on 
a TDMA channel as the Min-Age problem.

In this paper, we cast the Min-Age problem as a job scheduling problem called 
the Min-WCS problem. 
The Min-WCS problem has a simple formulation 
inspired by a geometric interpretation of the Min-Age problem.
The simplicity of the formulation also 
facilitates algorithm design. 
As we will see in Section~\ref{sec: conclusion}, 
one may solve variants of the Min-Age problem by modifying the 
geometric interpretation and then solving the corresponding job scheduling 
problem. 

Job scheduling has been studied for decades. 
In fact, the Min-WCS problem is a special case 
of single-machine scheduling 
with a non-linear objective function under precedence constraints, which has been 
studied by Schulz and Verschae~\cite{schulz_et_al:LIPIcs:2016:6415} 
and Carrasco \textit{et al.}~\cite{CARRASCO2013436}. 
Specifically, for any $\epsilon>0$, the algorithm proposed by Schulz and Verschae 
approximates the optimum within a factor of $(2+\epsilon)$ 
when the objective function is concave~\cite{schulz_et_al:LIPIcs:2016:6415}. 
When the objective function is convex, Carrasco \textit{et al.} proposed
a $(4+\epsilon)$-speed 1-approximation algorithm for any 
$\epsilon > 0$~\cite{CARRASCO2013436}.\footnote{Specifically, 
let $OPT$ be the optimal objective value.
An $s$-speed $r$-approximation algorithm for a minimization problem
finds a solution of objective value at most $r \cdot OPT$ when using a machine 
that is $s$ times faster than the original machine.}
The solutions proposed by 
Schulz and Verschae~\cite{schulz_et_al:LIPIcs:2016:6415} 
and Carrasco \textit{et al.}~\cite{CARRASCO2013436} are based on linear 
programming rounding. The objective function of the Min-WCS 
problem is convex, and we give a randomized 
2.733-approximation algorithm for the Min-WCS problem without 
linear programming.
We summarize our major results as follows:

\noindent \textbf{Theorem~\ref{thrm: equiv}:} 
We introduce the Min-WCS problem and prove that, 
for any constant $r \geq 1$,
every $r$-approximation algorithm of the Min-WCS problem 
can be transformed into an $r$-approximation algorithm for 
the Min-Age problem.

\noindent \textbf{Theorem~\ref{thrm: appox}:} 
We solve the Min-WCS problem by combining two feasible schedules. 
Specifically, we propose a deterministic 
4-approximation algorithm and a randomized 2.733-approximation algorithm 
for the Min-WCS problem.

\noindent \textbf{Theorem~\ref{thrm: dp}:}
We give a dynamic-programming-based exact algorithm for the Min-WCS problem. 
The result implies that the Min-Age problem can be solved 
optimally in polynomial time when the number of sender-receiver pairs is a 
constant. The result holds even if there are arbitrarily many messages.

\noindent \textbf{Theorem~\ref{thrm: NP}:}
We show that the Min-Age problem is NP-hard. 

\section{Problem Definition}
The studied problem is first considered by He \textit{et al.}, 
and is referred to as the minimum age scheduling problem with TDMA~\cite{IT}. 
Throughout this paper, we simply refer to this problem as the Min-Age problem.
To make the paper self-contained, 
we rephrase the definition of the Min-Age problem.

\textbf{Inputs:}
We consider $n$ sender-receiver pairs, 
$(s_1, r_1), (s_2, r_2), \cdots, (s_n, r_n)$, 
where $s_i$ and $r_i$ are the sender and receiver of the $i$th 
sender-receiver pair, respectively.
Time is indexed by non-negative integers, and the current time is $T_0$. 
These $n$ sender-receiver pairs share one transmission channel, 
which can transmit one message in one unit of time (hence the name TDMA).
Each sender $s_i$ has a set of messages $\mathcal{M}_i$ 
to be sent to receiver $r_i$.
Our task is to schedule the transmissions of messages in 
$\mathcal{M}_1 \cup \mathcal{M}_2 \cup \cdots \cup \mathcal{M}_n$.

We use $b(M)$ (the birthday of $M$) to indicate 
the time at which message $M$ is generated.
Let $M_i^0$ be the latest message that has been received by $r_i$ by time 
$T_0$.\footnote{Recall that a receiver is defined as a system 
that changes its state upon reception of an update message. 
The system is first assigned a state during the initialization phase. 
Thus, if $r_i$ has not received any message sent from $s_i$, 
$M_i^0$ is the initial information installed on $r_i$ during the initialization phase.}
Thus, $M_i^0 \notin \mathcal{M}_i$.
Let $M_i^j$ be the $j$th oldest message in $\mathcal{M}_i$.
Thus, for all $1 \leq i \leq n$, 
$0 \leq b(M_i^0) < b(M_i^1) < b(M_i^2) < \cdots 
< b(M_i^{|\mathcal{M}_i|}) \leq T_0$.

\textbf{Output and constraints:} 
The goal is to find a schedule $S$ of message transmissions so that the overall 
age of information (to be defined later) is minimized. 
Let $S(M_i^j)$ be the time at which message $M_i^j$ is received by 
$r_i$ under schedule $S$.
Hence, by the channel capacity constraint, 
$S(M_i^j) - 1$ is the time at which the channel starts to send $M^j_i$ 
under schedule $S$.
Let $T = |\mathcal{M}_1|+|\mathcal{M}_2|+\cdots+|\mathcal{M}_n|$ be the 
time needed to send all the messages. 
A feasible schedule $S$ has to satisfy the following constraints.
\begin{enumerate}
\item Due to the channel capacity constraint, 
$S$ is a one-to-one and onto mapping from $\mathcal{M}_1 \cup \mathcal{M}_2 \cup \cdots \cup \mathcal{M}_n$ to $\{T_0+1, T_0+2, \cdots, T_0+ T\}$.
\item Since a message may depend on previous messages, 
the schedule must follow the order of message generation. 
Specifically, for all $1 \leq i \leq n$, 
$S(M_i^1) < S(M_i^2) < \cdots < S(M_i^{|\mathcal{M}_i|})$.
In other words, for each sender-receiver pair, the transmission 
schedule must follow the first-come-first-served (FCFS) discipline.
\end{enumerate}

\textbf{Age:}
Let $lm(S, i, t)$ be the latest message received by receiver $r_i$ 
at or before time $t$ under schedule $S$. 
The \textbf{age} of $r_i$ at time $t$ is the age of $lm(S, i, t)$ at time $t$, 
i.e., $t - b(lm(S, i, t))$. Like~\cite{IT}, we assume that, 
once $r_i$ receives all messages in $\mathcal{M}_i$, 
the age of $r_i$ becomes zero. 
Intuitively, under this assumption, a scheduling algorithm that minimizes
the overall age would have the side benefit that the last 
message of each sender-receiver pair is sent as early as possible (under the
FCFS discipline).
More supporting arguments for this assumption can be found in~\cite{IT}.
Specifically, the age of $r_i$ at time $t$ under schedule $S$, 
$age(S, i, t)$, is defined as follows.
\begin{align*}
&age(S, i, t) = t - b(lm(S, i, t)), &\text{ if } 
lm(S, i, t) \neq M_i^{|\mathcal{M}_i|},\\
&age(S, i, t) = 0,  &\text{ otherwise.}
\end{align*}
Notice that 
$b(M_i^{|\mathcal{M}_i|})$ is not used when evaluating the age of $r_i$.
Moreover, $age(S, i, T_0) = T_0 - b(M_i^0)$ is referred to as the 
\textit{initial age} of receiver $r_i$.
In Section~\ref{sec: conclusion}, 
we will discuss the case 
where the age of $r_i$ does not become zero even if $r_i$ receives 
all messages in $\mathcal{M}_i$.

\textbf{Objective function:}
In the Min-Age problem, the goal is to minimize the 
overall age, which adds up
the ages of all receivers at all time indices.
Specifically, the goal is to find a feasible schedule $S$ that minimizes
\begin{equation*}
Age(S) = \sum_{i=1}^{n}
{\sum_{t = T_0}^{T_0+T}{age(S, i, t)}}.
\end{equation*}

\begin{figure}[t]
    \includegraphics[width=13 cm]{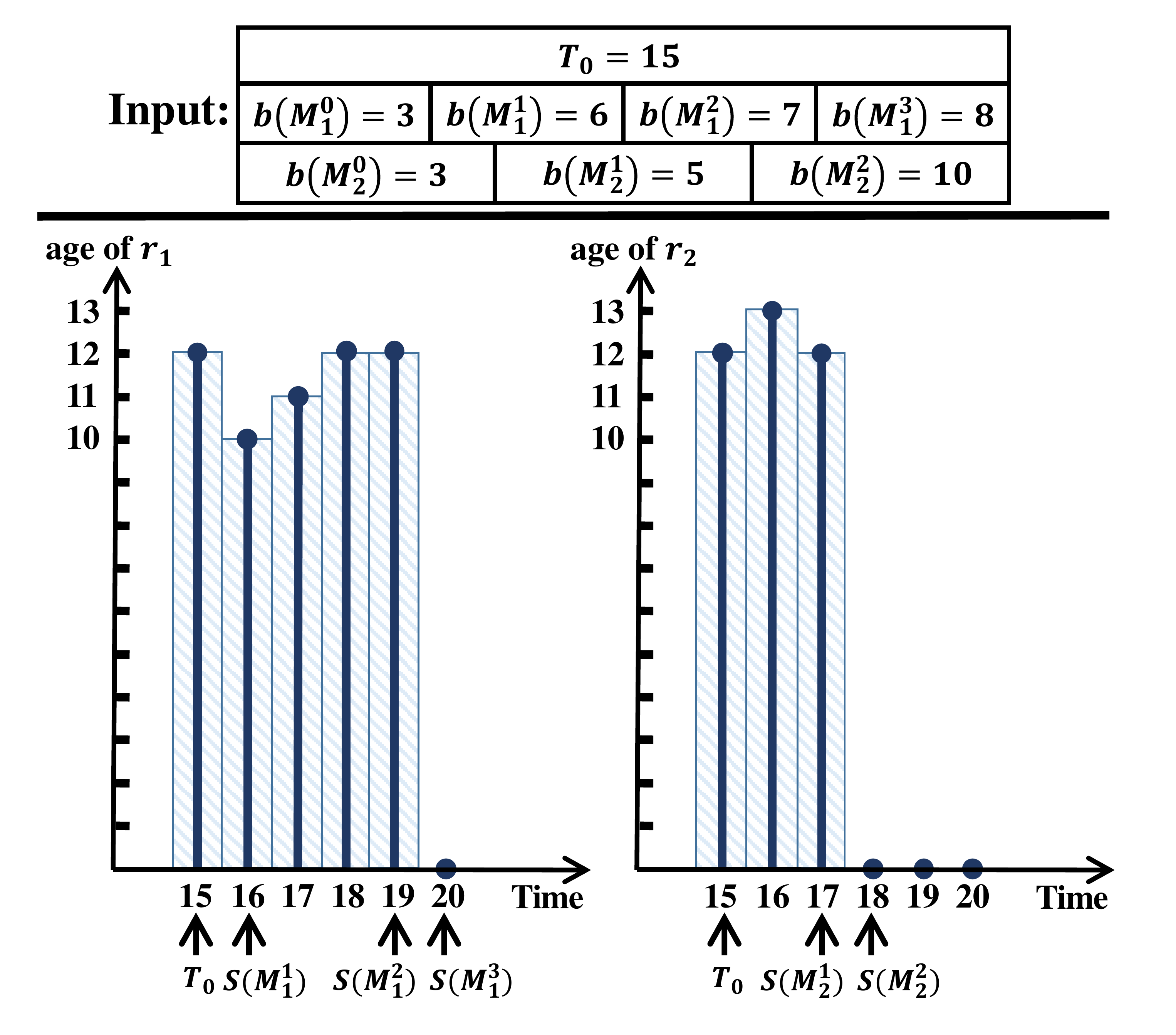}
    \caption{An example of the Min-Age problem.}
    \label{fig: ex_defi}
\end{figure}
\begin{ex}[Min-Age Problem] \label{ex:min-age}
We give an example in~\cite{IT} with our notation.\footnote{The 
example is shown in Fig.~5 in~\cite{IT}.} 
We consider two sender-receiver pairs, where $|\mathcal{M}_1| = 3$ 
and $|\mathcal{M}_2| = 2$. 
Specifically, 
\begin{align*}
&T_0 = 15\\
&b(M_1^0) = 3, b(M_1^1) = 6, b(M_1^2) = 7, b(M_1^3) = 8\\ 
&b(M_2^0) = 3, b(M_2^1) = 5, b(M_2^2) = 10.
\end{align*}
Consider the schedule $S$ shown in Fig.~\ref{fig: ex_defi} with
\begin{align*}
&S(M_1^1) = 16, S(M_1^2) = 19, S(M_1^3) = 20\\ 
&S(M_2^1) = 17, S(M_2^2) = 18. 
\end{align*}
Observe that $S$ is a one-to-one and onto mapping from 
$\mathcal{M}_1 \cup \mathcal{M}_2$ to $\{T_0+1, T_0+2, \cdots, T_0+ T\}$, 
where $T_0 = 15$ and $T = 5$. 
Moreover, $S$ follows the first-come-first-served policy.
Hence, $S$ is a feasible schedule.
$\sum_{t = T_0}^{T_0+T}{age(S, 1, t)} = 
(15-3) + (16-6) + (17-6) + (18-6) + (19-7) + 0 = 57$.
$\sum_{t = T_0}^{T_0+T}{age(S, 2, t)} = 
(15-3) + (16-3) + (17-5) + 0 + 0 + 0 = 37$.
Hence, $Age(S) = 57+37=94$.
\end{ex}
\section{A Corresponding Job Scheduling Problem and Problem Transformation}
\label{sec: job}
In this paper, we cast the Min-Age problem as a job scheduling problem called the 
Min-WCS problem. 
We first give the definition of the Min-WCS problem in Section~\ref{sec: JS}. 
We then show that the Min-Age problem
can be transformed into the Min-WCS problem in Section~\ref{sec: trans}.

\subsection{The Min-WCS Problem}
\label{sec: JS}
We consider a job scheduling problem with 
precedence constraints. That is, the order of job completion has to follow 
a given precedence relation $\rightarrow$. Specifically, 
for any two jobs $J_1$ and $J_2$,
if $J_1 \rightarrow J_2$, then $S(J_1) < S(J_2)$, where 
$S(J)$ is the completion time of job $J$ under schedule $S$.
We consider chain-like precedence constraints. 
Specifically, the set of all jobs is divided into $n_{chain}$ \textbf{job chains}, 
$\mathcal{C}_1, \mathcal{C}_2, \cdots, \mathcal{C}_{n_{chain}}$, 
where $\mathcal{C}_i$ is a chain of $|\mathcal{C}_i|$ jobs, 
$J_i^1 \rightarrow J_i^2 \rightarrow \cdots \rightarrow J_i^{|\mathcal{C}_i|}$.
For any feasible job schedule $S$ and any $1 \leq i \leq n_{chain}$, 
$S(J_i^1) < S(J_i^2) < \cdots < S(J_i^{|\mathcal{C}_i|})$.
Throughout this paper, $J_i^j$ denotes the $j$th job of 
job chain $\mathcal{C}_i$. $J_i^j$ is called a \textbf{leaf job} 
if $j = |\mathcal{C}_i|$; 
otherwise, it is called an \textbf{internal job}. 

We are now ready to define the job scheduling problem considered in
this paper. The input consists of $n_{chain}$ job chains,
where each job $J_i^j$ is associated 
with a non-negative weight $w_i^j$.
The processing time of every job is one unit of time,  
and the system only has one machine, which starts processing jobs at time 0. 
All jobs are non-preemptive.
Hence, the completion time of the last completed job is
$T_{chain} = |\mathcal{C}_1| + |\mathcal{C}_2| + \cdots + |\mathcal{C}_{n_{chain}}|$.  
Since the processing time of each job is one unit of time, 
a feasible schedule is a one-to-one and onto mapping from the set of all jobs 
to $\{1, 2, \cdots, T_{chain}\}$.
The goal is to find a feasible schedule $S$ that minimizes $wcs(S) = wc(S)+cs(S)$,
where $wc(S)$ is the total weighted completion time of all jobs under $S$, 
and $cs(S)$ is the total completion time squared of all leaf jobs under $S$.
Specifically, 
\begin{equation*}
wc(S) = \sum_{\text{All jobs }J_i^j}{(w_i^j \cdot S(J_i^j))},
\end{equation*}
and 
\begin{equation*}
cs(S) = \sum_{\text{All leaf jobs }J_i^{|\mathcal{C}_i|}}
{(S(J_i^{|\mathcal{C}_i|}) \cdot S(J_i^{|\mathcal{C}_i|}))}.
\end{equation*}
In this paper, we refer to this job scheduling problem as 
the Min-WCS problem.

\subsection{Transformation from the Min-Age Problem to the Min-WCS Problem}
\label{sec: trans}
In this subsection, we give a method to solve the Min-Age problem by 
transforming it into the Min-WCS problem. The high-level idea is to 
construct a corresponding job $J_i^j$ for each message $M_i^j \in \mathcal{M}_i$.
Specifically, given a problem instance $I_{age}$ of the Min-Age problem, 
we construct a corresponding instance $I_{job}$ of the Min-WCS problem, where 
\begin{equation}\label{eq: trans1}
n_{chain} = n,
\end{equation}
and
\begin{equation}\label{eq: trans2}
|\mathcal{C}_i| = |\mathcal{M}_i|, \text{ for all } 1 \leq i \leq n.
\end{equation}
The job weight is determined by $T_0$ and $b(M)$.
Specifically, 
\begin{equation}
\label{eq: trans3} 
w_i^j = 2(b(M_i^j) - b(M_i^{j-1})), \text{ if } j \in \{1, 2, \cdots, 
|\mathcal{C}_i|-1\},
\end{equation}
and
\begin{equation}
\label{eq: trans4} 
w_i^{|\mathcal{C}_i|} = 2(T_0 - 0.5 - b(M_i^{|\mathcal{M}_i|-1})).
\end{equation}
Note that, since 
$b(M_i^{|\mathcal{M}_i|-1}) \leq b(M_i^{|\mathcal{M}_i|}) - 1 \leq T_0 - 1$,
all weights are non-negative, and thus this is a valid problem instance 
of the Min-WCS problem. 
Since we have $n_{chain} = n$ and $T_{chain} = T$ in the transformation, 
in what follows, we omit the subscript of 
$n_{chain}$ and $T_{chain}$.

\begin{ex}[The transformation]\label{ex: tran}
Consider the Min-Age problem instance $I_{age}$ in Example~\ref{ex:min-age}.
We transform $I_{age}$ into the following instance $I_{job}$ of the Min-WCS problem.
$I_{job}$ has two jobs chains. The first job chain has three jobs, 
and the second job chain has two jobs.
The weights of the first two jobs in $\mathcal{C}_1$ are 
\begin{equation*}
w_1^1 = 2(b(M_1^1) - b(M_1^0)) = 2(6 - 3) = 6,
\end{equation*}
and
\begin{equation*}
w_1^2 = 2(b(M_1^2) - b(M_1^1)) = 2(7 - 6) = 2.
\end{equation*}
The weight of the last job in $\mathcal{C}_1$ is 
\begin{equation*}
w_1^3 = 2(T_0 - 0.5 - b(M_1^2)) = 2(15 - 0.5 - 7) = 15.
\end{equation*}
Similarly, we have $w_2^1 = 4$ and $w^2_2 = 19$.
Recall that, in Fig.~\ref{fig: ex_defi}, $Age(S) = 94$.
Consider a schedule $S_{job}$ such that
$S_{job}(J^j_i) = S(M^j_i) - T_0$ for all $1 \leq i \leq 2$, 
$1 \leq j \leq |\mathcal{C}_i|$.
We then have 
$wc(S_{job}) = 6 \cdot 1 + 4 \cdot 2 + 19 \cdot 3 + 2 \cdot 4 + 15 \cdot 5 = 154$
and $cs(S_{job}) = 5 \cdot 5 + 3 \cdot 3 = 34$.
Notice that $wcs(S_{job}) = wc(S_{job}) + cs(S_{job}) = 154+34=188 = 2 \cdot Age(S)$.
\end{ex}

\begin{figure}
    \includegraphics[width=13 cm]{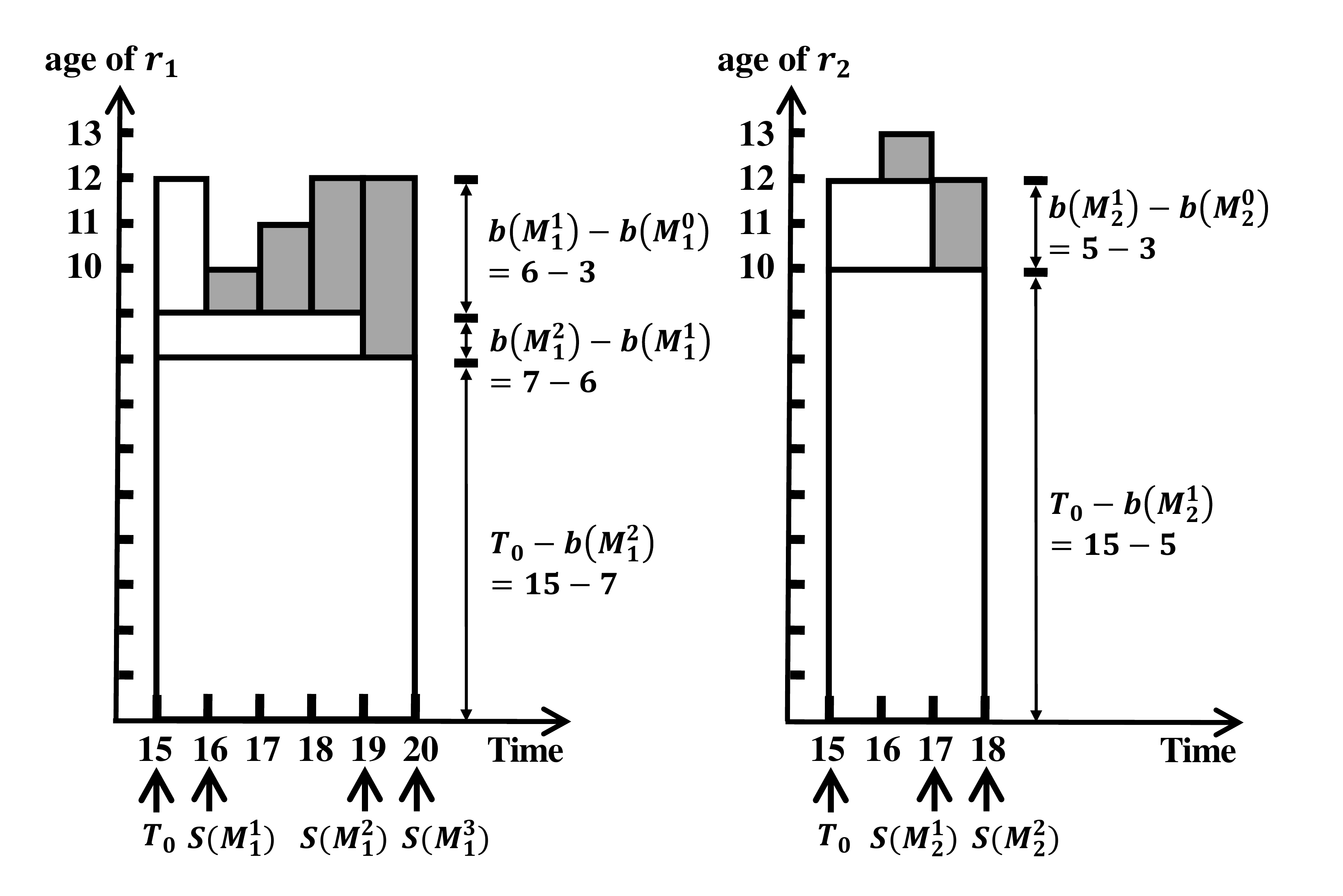}
    \caption{A geometric interpretation of $Age(S)$.}
    \label{fig: ex_trans}
\end{figure}

\textbf{The rationale behind the transformation:}
We give a geometric interpretation of $Age(S)$.\footnote{He \textit{et al.}
also gave a geometric interpretation of $Age(S)$~\cite{IT}.
The geometric interpretation proposed in this paper is different from that in \cite{IT}, 
and our interpretation naturally suggests a transformation into the 
job scheduling problem defined in this paper.}
We use Fig.~\ref{fig: ex_trans} to explain the idea.
Notice that in Fig.~\ref{fig: ex_defi}, $Age(S)$ 
is the total area of rectangles shown in Fig.~\ref{fig: ex_defi}.
In Fig.~\ref{fig: ex_trans}, we divide the overall age of $r_i$ into
white rectangles and gray rectangles.
Since we only consider the total area, we right-shift all rectangles by 0.5 unit.
For $r_i$, there are $|\mathcal{M}_i|$ white rectangles, 
and the width of the $j$th white rectangle is $S(M_i^j) - T_0$.
The height of the $j$th white rectangle is 
$b(M_i^j) - b(M_i^{j-1})$ (if $1 \leq j \leq |\mathcal{M}_i|-1$)
or $T_0 - b(M_i^{|\mathcal{M}_i|-1})$ (if $j = |\mathcal{M}_i|$).
The height can be interpreted as the age reduction after
receiving message $M_i^j$. Note that, after receiving the last message, 
the age becomes zero. Hence, the total height
of the white rectangles should be $T_0 - b(M_i^0)$, 
i.e., the initial age of $r_i$.
Therefore, the height of the bottom white rectangle is 
$T_0 - b(M_i^0) - \sum_{j = 1}^{|\mathcal{M}_i|-1}{(b(M_i^j) - b(M_i^{j-1}))} = 
T_0 - b(M_i^{|\mathcal{M}_i|-1})$.
After considering age reduction, we still need to increase the age by one 
after each unit of time. This is captured by the gray rectangles.
The width of every gray rectangle is one, and the heights of gray 
rectangles are $1, 2, \cdots, S(M_i^{|\mathcal{M}_i|}) - T_0 - 1$. 
Hence, the total area of the gray rectangles is 
$\frac{(S(M_i^{|\mathcal{M}_i|}) - T_0)(S(M_i^{|\mathcal{M}_i|}) - T_0 -1)}{2} 
= \frac{(S(M_i^{|\mathcal{M}_i|}) - T_0)^2}{2} - 
\frac{S(M_i^{|\mathcal{M}_i|}) - T_0}{2}$.
Let $S_{age}$ be any feasible schedule 
of a Min-Age problem instance $I_{age}$. We have 
\begin{align*}
&Age(S_{age})\\
&= \sum_{i=1}^{n}{\sum_{j=1}^{|\mathcal{M}_i|-1}
{(b(M_i^j) - b(M_i^{j-1}))(S_{age}(M_i^j)-T_0)}}\\
&+\sum_{i=1}^{n}
{(T_0 - b(M_i^{|\mathcal{M}_i|-1}))(S_{age}(M_i^{|\mathcal{M}_i|})-T_0)}\\
&+\sum_{i=1}^{n}{(\frac{(S_{age}(M_i^{|\mathcal{M}_i|}) - T_0)^2}{2} - 
\frac{S_{age}(M_i^{|\mathcal{M}_i|}) - T_0}{2})}.
\end{align*}

Let $I_{job}$ be $I_{age}$'s corresponding job scheduling problem instance. Specifically, 
$I_{age}$ and $I_{job}$ satisfy Eq.~\eqref{eq: trans1} to Eq.~\eqref{eq: trans4}.
Let $S_{job}$ be any feasible schedule of $I_{job}$. We have  
\begin{align*}
&wcs(S_{job}) = \sum_{i=1}^{n}{\sum_{j=1}^{|\mathcal{C}_i|-1}
{2(b(M_i^j) - b(M_i^{j-1}))S_{job}(J_i^j)}}\\
&+\sum_{i=1}^{n}
{2(T_0 -\frac{1}{2}- b(M_i^{|\mathcal{M}_i|-1}))S_{job}(J_i^{|\mathcal{C}_i|})}\\
&+\sum_{i=1}^{n}{(S_{job}(J_i^{|\mathcal{C}_i|}))^2}.
\end{align*}
Thus, if $S_{job}(J_i^j) = S_{age}(M_i^j) - T_0$ holds for all $1 \leq i \leq n$, 
$1 \leq j \leq |\mathcal{M}_i|$, we then have $2Age(S_{age}) = wcs(S_{job})$.

The above result then suggests the following method to construct 
a schedule $S_{age}$ for $I_{age}$.
First, obtain a schedule $S_{job}$ 
of the corresponding Min-WCS problem instance $I_{job}$.
We then view $S_{job}(J_i^j)$ as the transmission order of $M_i^j$ in $S_{age}$.
Specifically, we set
$S_{age}(M_i^j) = S_{job}(J_i^j) + T_0$. The following lemma establishes 
the relation between $S_{job}$ and $S_{age}$.
Throughout this paper, we use $I_{age}$ and $I_{job}$ to denote problem instances 
of the Min-Age problem and the Min-WCS problem, respectively.

\begin{lemma}\label{lemma: trans}
Let $S_{age}$ and $S_{job}$ be any two schedules of $I_{age}$ and $I_{job}$, 
respectively. If $I_{age}$ and $I_{job}$ satisfy 
Eq.~\eqref{eq: trans1} to Eq.~\eqref{eq: trans4}, and 
$S_{age}(M_i^j) = S_{job}(J_i^j) + T_0$, for all $1 \leq i \leq n$, 
$1 \leq j \leq |\mathcal{C}_i|$, then 
\begin{enumerate}
\item $S_{age}$ is feasible if and only if $S_{job}$ is feasible.
\item $2Age(S_{age}) = wcs(S_{job})$.
\end{enumerate}
\end{lemma}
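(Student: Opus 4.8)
The plan is to prove the two claims separately, with Claim~1 being essentially bookkeeping and Claim~2 requiring one genuine computation. For Claim~1, the hypothesis $S_{age}(M_i^j) = S_{job}(J_i^j) + T_0$ says the two schedules differ only by the constant shift $T_0$, and Eq.~\eqref{eq: trans2} gives $|\mathcal{C}_i| = |\mathcal{M}_i|$ for every $i$, hence $T_{chain} = T$. Since adding $T_0$ is an order-preserving bijection from $\{1,\dots,T\}$ onto $\{T_0+1,\dots,T_0+T\}$, the schedule $S_{job}$ is a one-to-one onto map onto $\{1,\dots,T_{chain}\}$ if and only if $S_{age}$ is a one-to-one onto map onto $\{T_0+1,\dots,T_0+T\}$, and $S_{job}(J_i^1) < \cdots < S_{job}(J_i^{|\mathcal{C}_i|})$ holds if and only if $S_{age}(M_i^1) < \cdots < S_{age}(M_i^{|\mathcal{M}_i|})$ holds. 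These are exactly the two conditions in the respective feasibility definitions, so Claim~1 follows.

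For Claim~2, the first and main step is to make the informal ``rectangle'' picture preceding the lemma into an identity. Writing $k_i = |\mathcal{M}_i|$ and $x_i^j = S_{age}(M_i^j) - T_0 = S_{job}(J_i^j)$ for $1 \le j \le k_i$, and setting $x_i^0 = 0$, feasibility gives $0 = x_i^0 < x_i^1 < \cdots < x_i^{k_i} \le T$. I would partition the horizon $\{T_0,\dots,T_0+T\}$ by the latest message received by $r_i$: for $0 \le j \le k_i-1$ the latest received message is $M_i^j$ exactly for $t \in \{T_0 + x_i^j, \dots, T_0 + x_i^{j+1}-1\}$ (no message arrives at time $T_0$, so the block with $j=0$ starts at $T_0$), on which $age(S_{age},i,t) = t - b(M_i^j)$ since $M_i^j$ is not the leaf message; for $j = k_i$ the age is $0$. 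Summing $t - b(M_i^j)$ over each block, substituting $u = t - T_0$, and rearranging the resulting terms $(x_i^{j+1}-x_i^j)(T_0 - b(M_i^j))$ by a telescoping (Abel) summation, together with $\sum_{j=0}^{k_i-1}\sum_{u=x_i^j}^{x_i^{j+1}-1} u = \sum_{u=0}^{x_i^{k_i}-1} u = \frac{(x_i^{k_i})^2}{2} - \frac{x_i^{k_i}}{2}$, reproduces exactly the three-line expression displayed for $Age(S_{age})$ just before the lemma.

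The final step is to match the doubled age identity with $wcs(S_{job})$ term by term. After doubling and substituting the weights from Eq.~\eqref{eq: trans3} and Eq.~\eqref{eq: trans4}, each index $i$ contributes $\sum_{j=1}^{k_i-1} w_i^j x_i^j$ from the internal white rectangles, $\bigl(2(T_0 - b(M_i^{k_i-1})) - 1\bigr) x_i^{k_i} = w_i^{k_i}\, x_i^{k_i}$ from the leaf white rectangle absorbing the linear part $-x_i^{k_i}$ of the doubled gray sum, and $(x_i^{k_i})^2$ from its quadratic part; the arithmetic identity $2(T_0 - 0.5 - b(M_i^{k_i-1})) = 2(T_0 - b(M_i^{k_i-1})) - 1$ is precisely what the $-0.5$ offset in the leaf-job weight is for. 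Summing over $i$ and using $S_{job}(J_i^j) = x_i^j$ turns $\sum_i\sum_j w_i^j x_i^j$ into $wc(S_{job})$ and $\sum_i (x_i^{k_i})^2$ into $cs(S_{job})$, giving $2\,Age(S_{age}) = wcs(S_{job})$; the degenerate case $k_i = 1$ (no internal jobs) is handled by reading the empty sum as $0$ and $b(M_i^{k_i-1}) = b(M_i^0)$. The only delicate point is the interval-partition step, and within it the two boundary blocks — the one containing $t = T_0$, where $M_i^0$ is still the latest message, and the tail block after $M_i^{k_i}$ arrives, which contributes zero; once those are pinned down, the remainder is mechanical.
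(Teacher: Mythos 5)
Your proof is correct and takes essentially the same route as the paper: part 1 is the same shift-by-$T_0$ bijection and FCFS/precedence equivalence, and part 2 is precisely the paper's white/gray-rectangle decomposition of $Age(S_{age})$, which you merely formalize by partitioning the horizon according to the latest received message and applying Abel summation before matching terms (including the $-0.5$ offset in the leaf weight) against $wc(S_{job})$ and $cs(S_{job})$.
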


\begin{proof}
By the above discussion, we already have $2Age(S_{age}) = wcs(S_{job})$.
Since $S_{age}(M_i^j) = S_{job}(J_i^j) + T_0$ for all $1 \leq i \leq n$, 
$1 \leq j \leq |\mathcal{C}_i|$, $S_{age}$ is a one-to-one and onto mapping 
from $\bigcup_{i=1}^n{\mathcal{M}_i}$ to $\{T_0+1, T_0+2, \cdots, T_0+ T\}$ 
if and only if $S_{age}$ is a one-to-one and onto mapping 
from the set of all jobs to $\{1, 2, \cdots, T\}$.
On the other hand, it is easy to see that $S_{age}$ follows the
first-come-first-served policy for each sender-receiver pair if and only if 
$S_{job}$ follows the chain-like precedence constraint.
Thus, $S_{age}$ is feasible if and only if $S_{job}$ is feasible.
\end{proof}

The next lemma establishes the relation between the optimums of a 
Min-Age problem instance and the corresponding Min-WCS problem instance.
\begin{lemma} \label{lemma: equiv}
Let $S^{*}_{age}$ and $S^{*}_{job}$ be the optimal schedules of $I_{age}$ 
and $I_{job}$, respectively.
If $I_{age}$ and $I_{job}$ satisfy 
Eq.~\eqref{eq: trans1} to Eq.~\eqref{eq: trans4}, 
then $2Age(S^{*}_{age}) = wcs(S^{*}_{job})$.
\end{lemma}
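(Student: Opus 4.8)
The plan is to leverage Lemma~\ref{lemma: trans} to show that the correspondence $S_{age}(M_i^j) = S_{job}(J_i^j)+T_0$ is a bijection between the feasible schedules of $I_{age}$ and those of $I_{job}$ which scales the objective by a factor of $2$. Once this is established, the claimed equality of optima follows by a routine two-sided comparison. I would first observe that both instances admit only finitely many feasible schedules, so the optima $S^{*}_{age}$ and $S^{*}_{job}$ are indeed attained.

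For one inequality, I start from $S^{*}_{job}$ and define $\widehat{S}_{age}$ by $\widehat{S}_{age}(M_i^j) = S^{*}_{job}(J_i^j) + T_0$ for all $1 \leq i \leq n$, $1 \leq j \leq |\mathcal{C}_i|$. Since $I_{age}$ and $I_{job}$ satisfy Eq.~\eqref{eq: trans1} to Eq.~\eqref{eq: trans4}, Lemma~\ref{lemma: trans} applies and gives that $\widehat{S}_{age}$ is feasible and $2Age(\widehat{S}_{age}) = wcs(S^{*}_{job})$. Optimality of $S^{*}_{age}$ then yields $2Age(S^{*}_{age}) \leq 2Age(\widehat{S}_{age}) = wcs(S^{*}_{job})$.

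For the reverse inequality, I go the other direction: from $S^{*}_{age}$ define $\widehat{S}_{job}$ by $\widehat{S}_{job}(J_i^j) = S^{*}_{age}(M_i^j) - T_0$. The pair $(S^{*}_{age}, \widehat{S}_{job})$ again satisfies the hypotheses of Lemma~\ref{lemma: trans}, since the defining relation between the two schedules is just an invertible shift by $T_0$ and the transformation equations are assumed throughout. Hence $\widehat{S}_{job}$ is feasible and $wcs(\widehat{S}_{job}) = 2Age(S^{*}_{age})$, so optimality of $S^{*}_{job}$ gives $wcs(S^{*}_{job}) \leq wcs(\widehat{S}_{job}) = 2Age(S^{*}_{age})$. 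Combining the two inequalities proves $2Age(S^{*}_{age}) = wcs(S^{*}_{job})$.

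I do not expect any genuine obstacle; the argument is essentially bookkeeping on top of Lemma~\ref{lemma: trans}. The only point deserving a moment of care is confirming that the lemma can be invoked in both directions — that is, that the schedule induced on one instance from a feasible schedule of the other always falls under the lemma's hypotheses — but this is immediate from the invertibility of the $T_0$-shift and the standing assumption that Eq.~\eqref{eq: trans1}–\eqref{eq: trans4} hold.
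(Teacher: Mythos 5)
Your proposal is correct and follows essentially the same route as the paper: define the induced schedules in both directions via the $T_0$-shift, invoke Lemma~\ref{lemma: trans} to get feasibility and the factor-of-2 objective correspondence, and conclude by a two-sided optimality comparison. Your explicit note that the induced schedules are feasible (so the optimality comparisons are legitimate) is the same fact the paper uses implicitly via Lemma~\ref{lemma: trans}, so there is no substantive difference.
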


\begin{proof}
Let $S'_{age}$ be a schedule such that 
$S'_{age}(M_i^j) = S^{*}_{job}(J_i^j) + T_0$ 
for all $1 \leq i \leq n$, $1 \leq j \leq |\mathcal{C}_i|$.
Similarly, 
let $S'_{job}$ be a schedule such that 
$S^{*}_{age}(M_i^j) = S'_{job}(J_i^j) + T_0$ 
for all $1 \leq i \leq n$, 
$1 \leq j \leq |\mathcal{C}_i|$.
By Lemma~\ref{lemma: trans}, we have
\begin{equation*}
2Age(S'_{age}) = wcs(S^{*}_{job}) \text{ and } 2Age(S^{*}_{age}) = wcs(S'_{job}).
\end{equation*}
Finally, since
\begin{equation*}
2Age(S^{*}_{age}) \leq 2Age(S'_{age}) = wcs(S^{*}_{job})
\end{equation*}
and
\begin{equation*}
wcs(S^{*}_{job}) \leq wcs(S'_{job}) = 2Age(S^{*}_{age}),
\end{equation*}
we have $wcs(S^{*}_{job}) = 2Age(S^{*}_{age})$.
\end{proof}

\begin{thrm}\label{thrm: equiv}
For any constant $r \geq 1$,
if there exists a polynomial-time $r$-approximation algorithm 
for the Min-WCS problem, 
then there exists a polynomial-time $r$-approximation algorithm 
for the Min-Age problem.
\end{thrm}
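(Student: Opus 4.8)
The plan is to treat the transformation of Section~\ref{sec: trans} as a black-box reduction. Given an arbitrary instance $I_{age}$ of the Min-Age problem, I would first build the corresponding Min-WCS instance $I_{job}$ by setting $n_{chain}$, the chain lengths, and the weights according to Eq.~\eqref{eq: trans1}--\eqref{eq: trans4}. This construction only inspects $T_0$ and the birthdays $b(M_i^j)$ and performs a constant number of arithmetic operations per message, so it runs in time polynomial in the size of $I_{age}$; moreover each weight has bit-length polynomial in that of the input (the lone half-integer in Eq.~\eqref{eq: trans4} can be cleared, e.g., by noting it is $2T_0 - 1 - 2b(M_i^{|\mathcal{M}_i|-1})$ before the outer factor), so $I_{job}$ is a legitimate polynomial-size instance of the Min-WCS problem.

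Next I would run the assumed polynomial-time $r$-approximation algorithm for Min-WCS on $I_{job}$ to obtain a feasible schedule $S_{job}$ with $wcs(S_{job}) \le r \cdot wcs(S^{*}_{job})$, where $S^{*}_{job}$ is an optimal schedule of $I_{job}$. I would then define a schedule $S_{age}$ of $I_{age}$ by $S_{age}(M_i^j) = S_{job}(J_i^j) + T_0$ for all $1 \le i \le n$ and $1 \le j \le |\mathcal{C}_i|$; this back-translation is clearly polynomial-time. Feasibility and the approximation guarantee then follow directly from the lemmas already established: by Lemma~\ref{lemma: trans}(1), $S_{age}$ is feasible since $S_{job}$ is; by Lemma~\ref{lemma: trans}(2), $2\,Age(S_{age}) = wcs(S_{job})$; and by Lemma~\ref{lemma: equiv}, $2\,Age(S^{*}_{age}) = wcs(S^{*}_{job})$ for an optimal Min-Age schedule $S^{*}_{age}$. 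Combining these,
\begin{equation*}
2\,Age(S_{age}) = wcs(S_{job}) \le r \cdot wcs(S^{*}_{job}) = r \cdot 2\,Age(S^{*}_{age}),
\end{equation*}
so $Age(S_{age}) \le r \cdot Age(S^{*}_{age})$, i.e., $S_{age}$ is an $r$-approximate solution for $I_{age}$, and the whole pipeline runs in polynomial time.

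I do not expect a genuine obstacle here, because the substantive work was done in proving Lemma~\ref{lemma: trans} and Lemma~\ref{lemma: equiv}, which show that the transformation preserves feasibility in both directions and scales \emph{every} objective value, including both optima, by exactly the same factor $2$. The only points meriting care are (i) confirming that building $I_{job}$ and translating the schedule back are genuinely polynomial-time, in particular that the weights do not blow up the encoding length, and (ii) observing that the factor $2$ occurs identically on both sides of the displayed inequality and hence cancels, so the ratio $r$ is transferred verbatim from Min-WCS to Min-Age. The hypothesis that $r$ is a constant is not actually needed for this argument, but keeping it in the statement is harmless and matches how the approximation results of later sections are invoked.
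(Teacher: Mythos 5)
Your proposal is correct and follows essentially the same route as the paper's own proof: construct $I_{job}$ via Eq.~\eqref{eq: trans1}--\eqref{eq: trans4}, run the assumed Min-WCS algorithm, back-translate via $S_{age}(M_i^j) = S_{job}(J_i^j) + T_0$, and conclude with Lemmas~\ref{lemma: trans} and \ref{lemma: equiv}, with the factor $2$ cancelling on both sides. Your added remarks on encoding length and on $r$ not needing to be constant are sound but not points the paper dwells on.
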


\begin{proof}
The $r$-approximation algorithm 
for the Min-Age problem proceeds as follows. 
First, given a problem instance $I_{age}$ of the Min-Age problem, 
the algorithm constructs a corresponding instance $I_{job}$ 
of the Min-WCS problem by the aforementioned transformation.
Obviously, the transformation can be done in polynomial time. 
We then apply the $r$-approximation algorithm for 
the Min-WCS problem on $I_{job}$ to get a schedule $S_{job}$. 
We construct a schedule $S_{age}$ for $I_{age}$ by
setting $S_{age}(M_i^j) = S_{job}(J_i^j) + T_0$ for all $1 \leq i \leq n$, 
$1 \leq j \leq |\mathcal{M}_i|$.
By Lemmas~\ref{lemma: trans} and \ref{lemma: equiv}, $S_{age}$ is feasible and 
$Age(S_{age}) = \frac{wcs(S_{job})}{2} \leq 
r \cdot \frac{wcs(S^{*}_{job})}{2} = 
r \cdot Age(S^{*}_{age})$.
\end{proof} 

\section{Approximation Algorithms for the Min-WCS Problem}\label{sec: approx}
By Theorem~\ref{thrm: equiv}, to solve the Min-Age problem, it suffices to solve 
the Min-WCS problem. Notice that the objective function of the 
Min-WCS problem is 
the sum of two functions, $wc$ and $cs$. 
When the objective function becomes $wc$ (respectively, $cs$), 
we refer to the problem as the Min-WC problem 
(respectively, the Min-CS problem).
Both the Min-WC problem and the Min-CS problem 
can be solved optimally in polynomial time. 
Given an instance of the Min-WCS problem, 
the high-level idea of our algorithm is to first
solve the corresponding instances of the Min-WC problem and the Min-CS problem.
Throughout this paper, we use $S^{*}_{wc}$ (respectively, $S^{*}_{cs}$) 
to denote the optimal schedule of the Min-WC problem
(respectively, the Min-CS problem).
We then \textit{interleave} $S^{*}_{cs}$ with $S^{*}_{wc}$ 
to approximate the Min-WCS problem.
We first discuss the solutions of the Min-WC problem 
and the Min-CS problem in Section~\ref{subsec: WC-CS}.
We then present our algorithm for the Min-WCS problem in 
Section~\ref{subsec: Min-WCS}.

\subsection{Algorithms for the Min-WC Problem and the Min-CS Problem}
\label{subsec: WC-CS}
\subsubsection{The Min-WC Problem}
The Min-WC problem is a special case of the minimum total 
weighted completion time
scheduling problem subject to precedence constraints, 
which has been studied over many 
years~\cite{Sidney, LAWLER197875, Hu}. 
When the precedence constraints are chain-like,
the problem can be solved in polynomial time~\cite{LAWLER197875, Hu}.
Recall that, in our problem, the processing time of every job is one.
The algorithm for the Min-WC problem proceeds as follows.
For each job $J_i^j$, define the job's priority $\rho_i^j$ as 
$\max_{k: j \leq k \leq |\mathcal{C}_i|}
{\frac{w_i^j+w_i^{j+1} + \cdots + w_i^k}{k-j+1}}$.
To minimize the total weighted completion time, 
the machine should first process the job with the highest priority.
We still need to follow the precedence constraints.
Hence, to determine the next processing job, we only consider the first 
unprocessed job in each job chain, and we choose the one that 
has the highest priority.
Algorithm~\ref{algo: WC} summarizes the pseudocode.

\begin{algorithm}[t]
\begin{small}
\caption{An Algorithm for the Min-WC Problem}
\label{algo: WC}
\For {$t \gets 1$ \KwTo 
$|\mathcal{C}_1|+|\mathcal{C}_2|+\cdots+|\mathcal{C}_n|$} {
  $\mathcal{U} \gets$ the set of the first unscheduled job in each job chain\\
  $J \gets \argmax_{J_i^j \in \mathcal{U}}{\rho_i^j}$\\    
  $S^*_{wc}(J) \gets t$\\
}
\end{small}
\end{algorithm}

\begin{lemma}[Lawler~\cite{LAWLER197875}]
Algorithm~\ref{algo: WC} solves the Min-WC problem optimally in polynomial time.
\end{lemma}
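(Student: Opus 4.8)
The plan is to check the running time (which is immediate) and then prove optimality by the classical interchange argument underlying Smith's ratio rule, adapted to chain precedence. For the running time: Algorithm~\ref{algo: WC} executes $T = |\mathcal{C}_1| + \cdots + |\mathcal{C}_n|$ iterations; all priorities $\rho_i^j$ can be precomputed in polynomial time (for each chain, one pass over its suffix averages), and each iteration then costs $O(n)$ to locate the current $\argmax$. So the algorithm runs in polynomial time, and nothing more is needed for that part.

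For optimality, I would first isolate a monotonicity property of the priorities. For a chain $\mathcal{C}_i$ and an index $j$, let $k_i^j$ be the largest index attaining the maximum in the definition of $\rho_i^j$, and call $B_i^j = \{J_i^j, \ldots, J_i^{k_i^j}\}$ the \emph{head block} of $\mathcal{C}_i$ at $j$, with \emph{density} $\rho_i^j$. Two elementary consequences of ``$\rho_i^j$ is a maximum over prefixes'' are: (a) for $j < m \le k_i^j$ one has $\rho_i^m \ge \rho_i^j$ (otherwise the segment $\{J_i^m,\ldots,J_i^{k_i^j}\}$ has density below $\rho_i^j$, which forces the complementary prefix $\{J_i^j,\ldots,J_i^{m-1}\}$ above $\rho_i^j$, a contradiction); and (b) for $j \le m \le k_i^j$, the particular segment $\{J_i^m, \ldots, J_i^{k_i^j}\}$ has density at least $\rho_i^j$, while every prefix $\{J_i^j,\ldots,J_i^{m}\}$ has density at most $\rho_i^j$ (the same arithmetic). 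Property (a) shows that, breaking ties in favour of the chain scheduled at the previous step, Algorithm~\ref{algo: WC} always schedules a whole head block in consecutive slots: once $J_i^j$ is picked because $\rho_i^j$ is maximal among the first unscheduled jobs, $J_i^{j+1}$ still has priority $\ge \rho_i^j$ and remains a valid $\argmax$. Iterating, each chain is, in the algorithm, partitioned into maximal \emph{strings} (the successive head blocks), each scheduled contiguously, with internal job order preserved.

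Given this, optimality reduces to the statement: \emph{some optimal schedule processes every chain's strings contiguously, and the contiguous strings appear in non-increasing order of density subject to the chain precedence.} I would prove this by an adjacent-interchange argument at the level of strings: regard each string as a composite job whose processing length is its number of jobs and whose weight is the sum of their weights, so that its ``ratio'' is its density; then show (i) in some optimal schedule no foreign job lies strictly inside a string --- using property (b) to bound the change in $wc(S)$ when such a job is slid out of the string to one side or the other --- and (ii) swapping two precedence-adjacent composite jobs that are out of ratio order never increases $wc(S)$, the usual Smith-rule computation. Since the algorithm's output is exactly a schedule of the composite jobs in non-increasing ratio order respecting the chains, it attains the optimum; a short induction on $T$ (after placing a globally highest-ratio available string in the leading slots, the remaining suffix is an optimal schedule of the sub-instance obtained by deleting that string, whose priorities are inherited unchanged) then finishes the proof.

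The main obstacle is step (i): showing rigorously that a maximal-density string can be consolidated at the front of an optimal schedule. The delicate point is that one must argue with whole strings, not with individual jobs --- a single internal job of a string can carry more weight than the string's density, so a naive job-by-job transposition can strictly worsen the schedule --- and it is precisely the definition of $\rho_i^j$ as a maximum over all prefixes (captured by property (b)) that makes the string-level interchange inequality go through. Once that inequality is in hand, the monotonicity observation, the reduction to the sub-instance, and the induction on $T$ are routine.
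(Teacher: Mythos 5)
The paper never proves this lemma: it is imported wholesale by citation from Lawler~\cite{LAWLER197875}, so there is no in-paper argument to compare yours against. What you wrote is essentially a reconstruction of that classical proof, specialized to unit processing times: your ``head blocks'' are Lawler's strings, your properties (a) and (b) are the standard consequences of defining $\rho_i^j$ as a maximum over prefixes, and your two-step plan --- (i) consolidate a maximum-density head block in some optimal schedule by sliding each maximal run of foreign jobs out to the left or to the right according to whether its density exceeds the block's, which is precedence-safe because the jobs surrounding the run inside the block all belong to the one chain, then (ii) an adjacent-interchange on the resulting composite jobs in the spirit of Smith's rule~\cite{smith}, followed by induction on $T$ --- is exactly how the cited result is established. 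So the approach is sound, and you correctly identify the one genuinely delicate point, namely that the exchange must be carried out on whole blocks rather than single jobs, with prefix-maximality (your property (b)) supplying the needed inequality. What your argument buys over the paper's citation is a self-contained check that the specific priority formula in Algorithm~\ref{algo: WC} is indeed the specialization of Lawler's rule to unit-length chain jobs. Two details you would still have to write out in full: the interchange computation behind step (i) (moving a foreign run $F$ past a suffix $Q$ of the block changes $wc$ by $\ell_F\,\ell_Q\,(\mathrm{dens}(F)-\mathrm{dens}(Q))$, and symmetrically for prefixes, which is precisely where property (b) enters), and the fact that the lemma is stated for Algorithm~\ref{algo: WC} with unspecified tie-breaking while your contiguity argument fixes a tie-break rule --- so you need the easy additional observation that exchanging equal-density pieces leaves $wc$ unchanged, or equivalently that every priority-consistent schedule attains the same optimum.
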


\begin{ex}[Algorithm~\ref{algo: WC}]\label{ex: Min-WC}
Consider the problem instance in Example~\ref{ex: tran}.
We have $\rho_1^1 = \max{(\frac{6}{1}, \frac{6+2}{2}, \frac{6+2+15}{3})} 
= \frac{23}{3}$ and 
$\rho_2^1 = \max{(\frac{4}{1}, \frac{4+19}{2})} = \frac{23}{2}$.
Since $\rho_2^1 > \rho_1^1$, Algorithm~\ref{algo: WC} first 
schedules $J_2^1$ and sets $S^*_{wc}(J_2^1) = 1$.
The job completion order under 
$S^*_{wc}$ is $J_2^1, J_2^2, J_1^1, J_1^2, J_1^3$.
\end{ex}

\subsubsection{The Min-CS Problem}
By a simple interchange argument, it is easy to see that the shortest job chain 
should be completed first in the Min-CS problem.
Algorithm~\ref{algo: CS} summarizes the pseudocode.
We have the following lemma.
\begin{lemma}
Algorithm~\ref{algo: CS} solves the Min-CS problem optimally in polynomial time.
\end{lemma}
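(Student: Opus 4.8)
The plan is to prove the lemma in two parts: (i) the optimal value of the Min-CS problem is determined entirely by the order in which the job chains are ``completed,'' and is attained by a schedule that runs each chain contiguously; and (ii) running the chains in order of non-decreasing length minimizes this value, which is exactly the rule used by Algorithm~\ref{algo: CS}. The polynomial running time is then immediate, since the algorithm merely sorts the $n$ chains by length and emits the resulting slot assignment.

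For part (i), I would fix an arbitrary feasible schedule $S$ and order the chains by the completion times of their leaf jobs, say the leaf of $\mathcal{C}_{\pi(1)}$ finishes first, then that of $\mathcal{C}_{\pi(2)}$, and so on. The key observation is that, because of the chain precedence constraint and unit processing times, a leaf job $J_i^{|\mathcal{C}_i|}$ occupies a slot only after all $|\mathcal{C}_i|$ jobs of $\mathcal{C}_i$ have occupied slots; hence by time $t_k := S(J_{\pi(k)}^{|\mathcal{C}_{\pi(k)}|})$ every job of $\mathcal{C}_{\pi(1)}, \dots, \mathcal{C}_{\pi(k)}$ has been scheduled, which forces $t_k \geq \sum_{l=1}^{k} |\mathcal{C}_{\pi(l)}|$. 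Summing squares gives $cs(S) \geq \sum_{k} \left( \sum_{l=1}^{k} |\mathcal{C}_{\pi(l)}| \right)^2$, and this bound is met with equality by the feasible schedule that runs the chains contiguously in the order $\pi(1), \pi(2), \dots$. Therefore the optimum equals $\min_{\pi} \sum_{k} \left( \sum_{l=1}^{k} |\mathcal{C}_{\pi(l)}| \right)^2$ over all permutations $\pi$ of the chains.

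For part (ii), I would apply a standard adjacent-transposition exchange argument to $\pi$: if some chain of length $a$ is run immediately before a chain of length $b$ with $a > b$, swapping the two chains leaves all other terms of the sum unchanged and changes the objective by $(P+a)^2 - (P+b)^2 = (a-b)(2P+a+b)$, where $P \geq 0$ is the total length of the chains run before this pair; since $a > b$ and $a+b > 0$, this is strictly positive, so the swap strictly decreases $cs$. Hence no optimal ordering can have a longer chain immediately before a shorter one, so the non-decreasing-length order is optimal, which is the ``shortest chain first'' rule implemented by Algorithm~\ref{algo: CS}.

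I do not anticipate a real obstacle; the only point needing care is making the lower-bound step in part (i) fully rigorous — precisely, arguing that all jobs of the first $k$ chains (in leaf-completion order) must lie in the first $t_k$ slots, and converting this into the inequality on $t_k$ — together with verifying that the contiguous shortest-first schedule both realizes this bound and is feasible (feasibility being trivial, since a contiguous ordering respects every chain constraint). Everything else is the routine interchange computation already flagged in the text.
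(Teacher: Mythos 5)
Your proposal is correct and follows essentially the same route as the paper, which simply asserts the lemma ``by a simple interchange argument'' that shortest chains should be completed first; your part (ii) is exactly that interchange step, and your part (i) just supplies the (implicit in the paper) justification that one may restrict attention to schedules running each chain contiguously.
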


\begin{algorithm}[t]
\begin{small}
\caption{An Algorithm for the Min-CS Problem}
\label{algo: CS}
$t \gets 1$\\
$\mathcal{U} \gets \{1, 2, \cdots, n\}$\\
\While {$\mathcal{U} \neq \varnothing$}{
    $i^* \gets \argmin_{i \in \mathcal{U}}{|\mathcal{C}_i|}$\\
    $\mathcal{U} \gets \mathcal{U} \setminus \{i^*\}$\\
    \For {$j \gets 1$ \KwTo $|\mathcal{C}_{i^*}|$} {
        $S^*_{cs}(J_{i^*}^j) \gets t$\\
        $t \gets t+1$\\ 
    	}
}
\end{small}
\end{algorithm}

\begin{ex}[Algorithm~\ref{algo: CS}]
\label{ex: Min-CS}
Consider the problem instance in Example~\ref{ex: tran}.
Since $|\mathcal{C}_2| < |\mathcal{C}_1|$, 
the job completion order under $S^*_{cs}$ is 
$J_2^1, J_2^2, J_1^1, J_1^2, J_1^3$.
\end{ex}

Observe that in Example~\ref{ex: Min-WC} and Example~\ref{ex: Min-CS},
$S^*_{wc} = S^*_{cs}$. It is easy to see that $S^*_{wc}$ and $S^*_{cs}$ 
are thus optimal schedules of the Min-WCS problem. Therefore, the 
optimal message transmission order in Example~\ref{ex:min-age} 
is $M_2^1, M_2^2, M_1^1, M_1^2, M_1^3$, and the optimal overall age is 
$(12+13+14+12+12)+(12+11)=86$.
\begin{prop}
Let $I_{job}$ be any instance of the Min-WCS problem, and 
let $S^*_{wc}$ and $S^*_{cs}$ be the optimal schedules of 
the corresponding instances of the Min-WC problem and the Min-CS problem, 
respectively. If $S^*_{wc} = S^*_{cs}$, then $S^*_{wc}$ and $S^*_{cs}$ are
optimal schedules of $I_{job}$.
\end{prop}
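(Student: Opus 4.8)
The plan is to use a standard lower-bound argument. Observe that the three problems Min-WCS, Min-WC, and Min-CS share exactly the same set of feasible schedules: all of them are defined over the same job chains $\mathcal{C}_1, \dots, \mathcal{C}_n$ with the same chain-like precedence constraints, the same single unit-speed machine, and unit processing times; they differ only in the objective. Since $wcs(S) = wc(S) + cs(S)$, the quantity $wc(S^*_{wc}) + cs(S^*_{cs})$ — the sum of the two separate optima — is a valid lower bound on $wcs$ over all feasible schedules, and the hypothesis $S^*_{wc} = S^*_{cs}$ is precisely what lets a single feasible schedule attain this lower bound.

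Concretely, first I would fix an arbitrary feasible schedule $S$ of $I_{job}$. Because $S$ is also feasible for the Min-WC problem and $S^*_{wc}$ is optimal for it (by the lemma on Algorithm~\ref{algo: WC}), we have $wc(S^*_{wc}) \leq wc(S)$; similarly $cs(S^*_{cs}) \leq cs(S)$ since $S^*_{cs}$ is optimal for the Min-CS problem. Adding these two inequalities yields
\begin{equation*}
wc(S^*_{wc}) + cs(S^*_{cs}) \leq wc(S) + cs(S) = wcs(S).
\end{equation*}

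Next I would invoke the hypothesis: writing $S^* := S^*_{wc} = S^*_{cs}$, this common schedule is feasible for $I_{job}$ and satisfies $wcs(S^*) = wc(S^*) + cs(S^*) = wc(S^*_{wc}) + cs(S^*_{cs})$. Combining with the displayed inequality gives $wcs(S^*) \leq wcs(S)$ for every feasible $S$, so $S^*$ — equivalently each of $S^*_{wc}$ and $S^*_{cs}$ — is an optimal schedule of $I_{job}$. I do not anticipate any real obstacle; the only point requiring a sentence of care is the identification of the feasible sets of the three problems, which is immediate from the definitions in Section~\ref{sec: JS}, so that the optimality guarantees for $S^*_{wc}$ and $S^*_{cs}$ are stated relative to the same feasible set that $S$ ranges over.
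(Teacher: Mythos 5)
Your proof is correct and matches the paper's (implicit) reasoning: the paper states this proposition as immediate, relying on exactly the same observation it later makes explicit, namely that $wc(S^*_{wc}) + cs(S^*_{cs})$ is a lower bound on the Min-WCS optimum over the common feasible set, which the common schedule attains when $S^*_{wc} = S^*_{cs}$.
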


\subsection{Interleaving $S^*_{wc}$ and $S^*_{cs}$ Randomly: 
A Randomized Approximation Algorithm for the Min-WCS Problem}
\label{subsec: Min-WCS}
While $S^*_{wc}$ and $S^*_{cs}$ solve the Min-WC problem and the Min-CS problem, 
respectively, neither $S^*_{wc}$ nor $S^*_{cs}$ can approximate 
the Min-WCS problem well. Specifically, we have the following results, whose proof can be 
found in the appendix.
\begin{prop}
\label{prop: wc}
The approximation ratio of Algorithm~\ref{algo: WC} for the Min-WCS problem
is $\Omega(n)$.
\end{prop}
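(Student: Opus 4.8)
The plan is to construct, for each sufficiently large $n$, a Min-WCS instance on which Algorithm~\ref{algo: WC} returns a schedule of objective value $\Omega(n)$ times the optimum. The instance has $n$ job chains: chain $\mathcal{C}_1$ has $|\mathcal{C}_1| = n^2$ jobs, each of weight $2$, while each of the remaining chains $\mathcal{C}_2, \dots, \mathcal{C}_n$ consists of a single job of weight $1$ (hence a leaf job). Because all weights in $\mathcal{C}_1$ equal $2$, we have $\rho_1^j = 2$ for every $j$, whereas each singleton chain has priority $1$. Therefore, as long as $\mathcal{C}_1$ still has an unscheduled job, that job is the unique maximizer in the $\argmax$ step of Algorithm~\ref{algo: WC}, so Algorithm~\ref{algo: WC} schedules all of $\mathcal{C}_1$ in the first $n^2$ time slots and the $n-1$ singleton jobs afterwards; call this schedule $S_{\mathrm{alg}}$. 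The strict inequality $2 > 1$ makes this conclusion independent of how ties are broken.

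Next I would lower-bound $wcs(S_{\mathrm{alg}})$ using its $cs$ term alone: the leaf jobs are $J_1^{n^2}$, completing at time $n^2$, together with the $n-1$ singleton jobs, each completing at a time strictly greater than $n^2$. Hence $cs(S_{\mathrm{alg}}) > n \cdot (n^2)^2 = n^5$, so $wcs(S_{\mathrm{alg}}) > n^5$. For the optimum, I would exhibit the schedule $S_{\mathrm{opt}}$ that runs the $n-1$ singleton jobs first (time slots $1, \dots, n-1$) and then $\mathcal{C}_1$ (time slots $n, \dots, n + n^2 - 1$). Since every job has weight at most $2$ and completes by time $n + n^2 - 1 = O(n^2)$, and there are $O(n^2)$ jobs in total, $wc(S_{\mathrm{opt}}) = O(n^4)$; likewise the $n$ leaf jobs complete at times $1, \dots, n-1$ and $n + n^2 - 1$, so $cs(S_{\mathrm{opt}}) = O(n^3) + O(n^4) = O(n^4)$. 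Thus the optimum is at most $wcs(S_{\mathrm{opt}}) = O(n^4)$, and the approximation ratio of Algorithm~\ref{algo: WC} on this instance is at least $n^5 / O(n^4) = \Omega(n)$.

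The only point requiring care is the claim that Algorithm~\ref{algo: WC} processes $\mathcal{C}_1$ as a single uninterrupted block; this is precisely why every job of $\mathcal{C}_1$—not merely its first job—is given positive weight, so that its priority $\rho_1^j$ never drops to or below that of a singleton chain. Everything else reduces to elementary estimates of $\sum_k k$ and $\sum_k k^2$, which I would carry out in the appendix.
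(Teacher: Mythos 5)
Your proposal is correct and follows essentially the same route as the paper: the same family of instances ($n-1$ singleton chains of weight $1$ plus one long chain of weight-$2$ jobs, which Algorithm~\ref{algo: WC} schedules first, inflating the $cs$ term), compared against the schedule that runs the singletons first; the only difference is that you fix the long chain's length to $n^2$ whereas the paper keeps it as a parameter $L$ chosen to satisfy an analogous inequality. The priority computation, the lower bound on the algorithm's cost via $cs$, and the $O(n^4)$ upper bound on the optimum are all sound, so the $\Omega(n)$ gap follows as claimed.
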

\begin{prop}
\label{prop: cs}
The approximation ratio of Algorithm~\ref{algo: CS} for the Min-WCS problem
is $\Omega(n)$.
\end{prop}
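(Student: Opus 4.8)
The plan is to exhibit, for each $n$, a Min-WCS instance with $n$ job chains on which Algorithm~\ref{algo: CS} returns a schedule whose objective is $\Omega(n)$ times the optimum. The intuition is that Algorithm~\ref{algo: CS} orders chains purely by length, so it can be tricked into processing a long chain last even when that chain contains a job of huge weight, whereas a schedule that runs the heavy job first pays essentially nothing in the $wc$ term.

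Concretely, I would let $\mathcal{C}_1$ consist of two jobs $J_1^1 \rightarrow J_1^2$ with $w_1^1 = W$ and $w_1^2 = 0$, where $W$ is a large parameter to be fixed later, and let each of $\mathcal{C}_2, \dots, \mathcal{C}_n$ be a single job of weight $0$. Since $|\mathcal{C}_1| = 2$ while every other chain has length $1$, Algorithm~\ref{algo: CS} schedules all the singleton chains first (how ties among them are broken is irrelevant) and $\mathcal{C}_1$ last, so $S^*_{cs}(J_1^1) = n$. As $J_1^1$ is the only job of nonzero weight, $wcs(S^*_{cs}) \geq wc(S^*_{cs}) = Wn$. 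For an upper bound on the optimum, I would consider the schedule $S'$ that runs $\mathcal{C}_1$ first ($J_1^1$ at time $1$, $J_1^2$ at time $2$) and then the singleton chains at times $3, \dots, n+1$: here $wc(S') = W$, and $cs(S')$ is a sum of $n$ squares each at most $(n+1)^2$, so the optimum is at most $wcs(S') \leq W + n(n+1)^2$.

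Combining the two bounds, the approximation ratio of Algorithm~\ref{algo: CS} on this instance is at least $\frac{Wn}{W + n(n+1)^2}$, and choosing $W = n(n+1)^2$ makes this at least $n/2$, which gives the claimed $\Omega(n)$ bound. I do not expect a genuine obstacle here beyond bookkeeping; the one point requiring care is to make the quadratic $cs$-contributions of both schedules negligible next to the linear $wc$-contribution, which is exactly what taking $W$ large enough accomplishes, together with the observation that the behavior of Algorithm~\ref{algo: CS} does not depend on how ties among the equal-length singleton chains are resolved.
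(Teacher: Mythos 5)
Your proof is correct and follows essentially the same construction as the paper: a single length-two chain whose first (internal) job carries a huge weight, plus $n-1$ singleton chains, so that Algorithm~\ref{algo: CS} completes the heavy job at time $n$ while a schedule running that chain first pays only about $W$. Your write-up is if anything more explicit than the paper's (you fix $W = n(n+1)^2$ and bound both terms concretely, whereas the paper argues with an informal ``extremely heavy weight'' approximation), but the underlying instance and comparison are the same.
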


Despite the above negative results, we will show that interleaving 
$S^*_{wc}$ and $S^*_{cs}$ gives an $O(1)$-approximation algorithm for 
the Min-WCS problem.
A critical observation of the Min-WC problem (respectively, the Min-CS problem) 
is that, if we multiply the optimal scheduled completion time $S^*_{wc}(J)$ 
(respectively, $S^*_{cs}(J)$) of every job $J$
by a factor $c > 1$ (i.e., we \textit{delay} the 
optimal schedule by a multiplicative \textit{delay factor} of $c$), 
then the total weighted completion time (respectively, 
the total completion time squared of all leaf jobs) 
is increased by a multiplicative factor of $c$ (respectively, $c^2$).
This immediately suggests the following deterministic 4-approximation algorithm: 
For each job $J$, set $S^{int}_{cs}(J) = 2S^*_{cs}(J)-1$.
Hence, $S^{int}_{cs}$ is a delayed version of $S^*_{cs}$ with a delay factor less than 
two\footnote{Although different jobs have different delay factors, 
every job has a delay factor less than two.}, 
and the time period $[2k-1,2k]$ is idle for any integer $k \geq 1$.
We call such an idle time period an \textit{idle time slot}.
Moreover, define the \textit{finish time} of an idle time slot $[t-1,t]$ as $t$.
Consider another schedule $S^{int}_{wc}$ obtained by setting
$S^{int}_{wc}(J) = 2S^*_{wc}(J)$ for each job $J$.
Hence, $S^{int}_{wc}$ is a delayed version of $S^*_{wc}$ with a delay factor of two. 
We can view $S^{int}_{wc}$ as a schedule obtained by inserting 
jobs one by one following the order specified in $S^*_{wc}$ 
to the idle time slots in $S^{int}_{cs}$.
For each job $J$, set 
$S'_{wcs}(J) = \min{\{S^{int}_{wc}(J), S^{int}_{cs}(J)\}}$.
We will show that $S'_{wcs}$ satisfies the precedence constraints. 
Finally, we remove the idle time slots in $S'_{wcs}$ to 
obtain the final schedule $S_{wcs}$.
We then have
\begin{equation*}
wc(S_{wcs}) \leq wc(S^{int}_{wc}) \leq 2 \cdot wc(S^*_{wc})
\end{equation*}
and 
\begin{equation*}
cs(S_{wcs}) \leq cs(S^{int}_{cs}) \leq 2^2 \cdot cs(S^*_{cs}).
\end{equation*}
Thus,
\begin{equation*}
wcs(S_{wcs}) = wc(S_{wcs})+cs(S_{wcs})
             \leq 4(wc(S^*_{wc})+cs(S^*_{cs})).
\end{equation*} 
Since $wc(S^*_{wc})+cs(S^*_{cs})$ is a lower bound of the optimum of the 
Min-WCS problem, $S_{wcs}$ is a 4-approximation solution. 

\begin{algorithm}[t]
\begin{small}
\caption{An Algorithm for the Min-WCS Problem with Parameter $p$}
\label{algo: WCS}
$S^*_{wc} \gets$ the schedule obtained by Algorithm~\ref{algo: WC}\\ 
$S^*_{cs} \gets$ the schedule obtained by Algorithm~\ref{algo: CS}\\ 
$S^{int}_{cs} \gets S^*_{cs}$\\
$T \gets |\mathcal{C}_1|+|\mathcal{C}_2|+\cdots+|\mathcal{C}_n|$\\

\For{$i \gets 1$ \KwTo $T-1$}{
    $X_i$ is set to 1 with probability $p$ and is set to 0 with probability $1-p$\\
    \If{$X_i = 1$}{
         \ForAll{Job $J$ such that $S^*_{cs}(J) > i$}{
             $S^{int}_{cs}(J) \gets S^{int}_{cs}(J)+1$
         }
    }
}


\For{$i \gets 1$ \KwTo $T$}{
    $J \gets$ the $i$th completed job under $S^*_{wc}$\\
    $S^{int}_{wc}(J) \gets$ 
    the finish time of the $i$th idle time slot in $S^{int}_{cs}$\\ 
}

\ForAll{$J \in \mathcal{C}_1 \cup \mathcal{C}_2 \cup \cdots \cup \mathcal{C}_n$}
{
    $S'_{wcs}(J) \gets \min{\{S^{int}_{cs}(J), S^{int}_{wc}(J)\}}$\\
}

\For{$i \gets 1$ \KwTo $T$}{
    $J \gets$ the $i$th completed job under $S'_{wcs}$\\
    $S_{wcs}(J) = i$\\
}
\end{small}
\end{algorithm}

In hindsight, we first insert idle time slots to $S^*_{cs}$ and
then insert jobs to the idle time slots following the order specified in 
$S^{*}_{wc}$. To improve the algorithm, we insert idle time slots to 
$S^*_{cs}$ randomly.
Specifically, let $p$ be a number in $[0,1]$.
Initially, $S^{int}_{cs} = S^*_{cs}$.
For every two jobs $J_1$ and $J_2$ 
that are processed contiguously in $S^{*}_{cs}$ 
(i.e., $|S^{*}_{cs}(J_2) - S^{*}_{cs}(J_1)|=1$),
we insert an idle time slot between $S^{int}_{cs}(J_1)$ and 
$S^{int}_{cs}(J_2)$ with probability $p$.
Notice that, in $S^{int}_{cs}$, we never insert two or more 
contiguous idle time slots, 
which is a critical property that will be used in the analysis.
Algorithm~\ref{algo: WCS} summarizes the pseudocode.
Observe that this randomized algorithm degenerates to Algorithm~\ref{algo: CS}
when $p = 0$, and this randomized algorithm degenerates to the aforementioned 
deterministic 4-approximation algorithm when $p = 1$.

\begin{figure}[t]
    \includegraphics[width=13 cm]{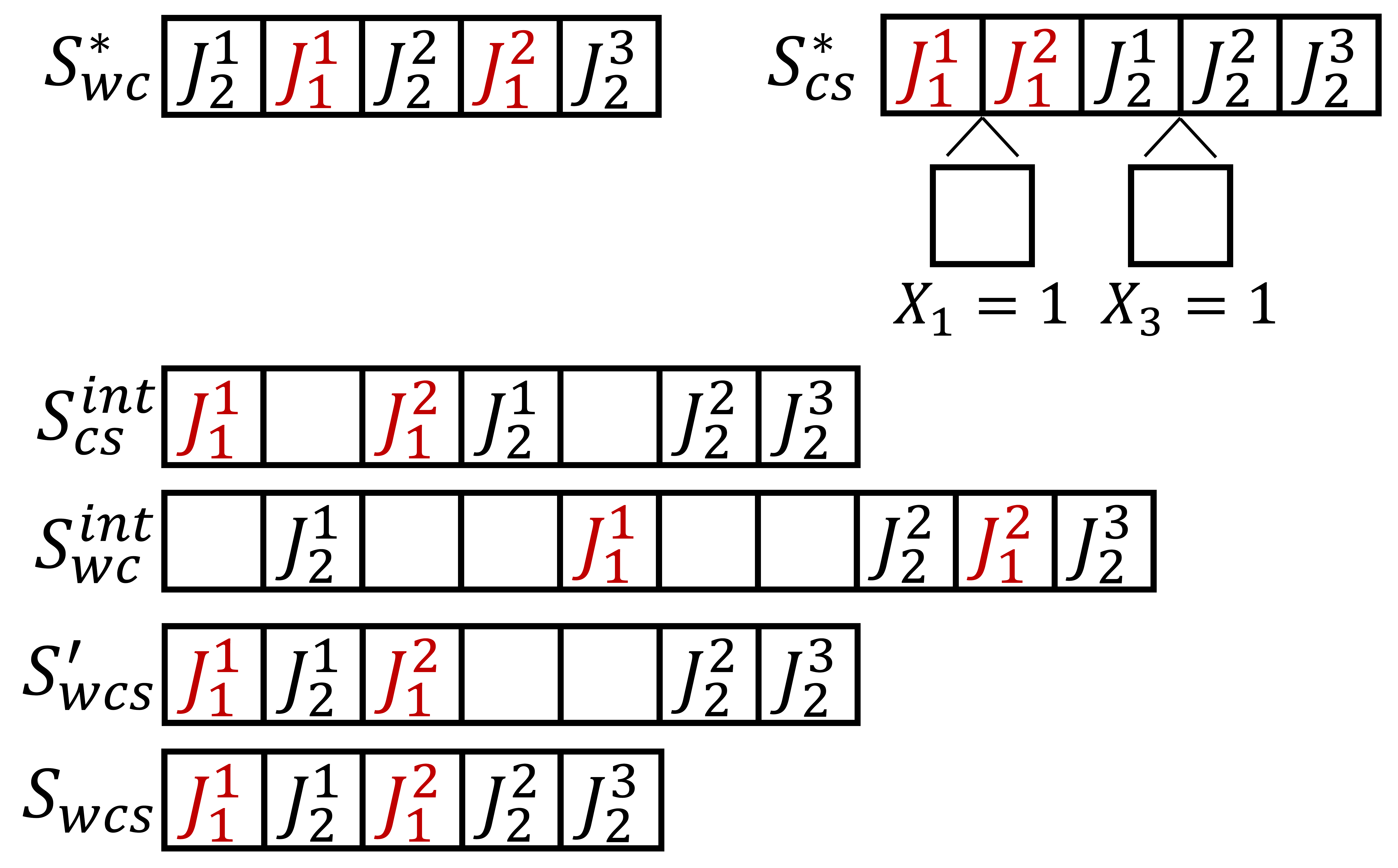}
    \caption{An example of Algorithm~\ref{algo: WCS}.}
    \label{fig: ex_algo}
\end{figure}

\begin{ex}[Algorithm~\ref{algo: WCS}]
Consider a Min-WCS problem instance with two job chains where 
$|\mathcal{C}_1| = 2$ and $|\mathcal{C}_2| = 3$.
Hence, the job completion order under $S^*_{cs}$ is 
$J^1_1, J^2_1, J^1_2, J^2_2, J^3_2$.
Assume that the job completion order under $S^*_{wc}$ is 
$J^1_2, J^1_1, J^2_2, J^2_1, J^3_2$.
Assume $X_1 = X_3 = 1, X_2 = X_4 = 0$.
$S^{int}_{wc}, S^{int}_{cs}, S'_{wcs}$ and $S_{wcs}$ are shown in 
Fig.~\ref{fig: ex_algo}.
\end{ex}

Since $S^{int}_{wc}$ and $S^{int}_{cs}$ do not overlap, 
we never execute two jobs at the same time in $S'_{wcs}$.
Thus, to prove that $S_{wcs}$ is feasible, 
it remains to prove the following lemma.
\begin{lemma} \label{lemma: feasible}
$S'_{wcs}$ follows the precedence constraints.
\end{lemma}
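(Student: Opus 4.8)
The plan is to verify the chain precedence constraints straight from the definition $S'_{wcs}(J) = \min\{S^{int}_{cs}(J), S^{int}_{wc}(J)\}$, by reducing the claim to showing that \emph{each} of $S^{int}_{cs}$ and $S^{int}_{wc}$ already obeys those constraints. The reduction rests on a one-line fact about minima: if $a_1 < a_2$ and $b_1 < b_2$ then $\min\{a_1,b_1\} < \min\{a_2,b_2\}$ --- indeed, if $\min\{a_2,b_2\}=a_2$ then $\min\{a_1,b_1\}\le a_1 < a_2$, while if $\min\{a_2,b_2\}=b_2$ then $\min\{a_1,b_1\}\le b_1 < b_2$. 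Fix a chain $\mathcal{C}_i$ and an index $1 \le j < |\mathcal{C}_i|$; applying this fact with $a_1 = S^{int}_{cs}(J_i^j)$, $a_2 = S^{int}_{cs}(J_i^{j+1})$, $b_1 = S^{int}_{wc}(J_i^j)$, $b_2 = S^{int}_{wc}(J_i^{j+1})$ reduces Lemma~\ref{lemma: feasible} to the two inequalities $S^{int}_{cs}(J_i^j) < S^{int}_{cs}(J_i^{j+1})$ and $S^{int}_{wc}(J_i^j) < S^{int}_{wc}(J_i^{j+1})$.

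For the first inequality I would argue that $S^{int}_{cs}$ preserves the order of $S^*_{cs}$. The schedule $S^*_{cs}$ produced by Algorithm~\ref{algo: CS} is feasible, so $S^*_{cs}(J_i^j) < S^*_{cs}(J_i^{j+1})$. Algorithm~\ref{algo: WCS} builds $S^{int}_{cs}$ from $S^*_{cs}$ by repeatedly performing the step ``add $1$ to the current completion time of every job whose \emph{original} $S^*_{cs}$-position exceeds a threshold $i$''. Because the test is against the fixed $S^*_{cs}$-positions and $S^*_{cs}(J_i^j) < S^*_{cs}(J_i^{j+1})$, each such step increments $J_i^{j+1}$ whenever it increments $J_i^j$; hence $S^{int}_{cs}(J_i^{j+1}) - S^{int}_{cs}(J_i^j)$ never decreases and stays positive, giving $S^{int}_{cs}(J_i^j) < S^{int}_{cs}(J_i^{j+1})$. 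For the second inequality: in Algorithm~\ref{algo: WCS} the $\ell$th job completed under $S^*_{wc}$ is assigned the finish time of the $\ell$th idle time slot of $S^{int}_{cs}$, and idle time slots enumerated by position have strictly increasing finish times, so the map ``$\ell \mapsto$ completion time in $S^{int}_{wc}$ of the $\ell$th job of $S^*_{wc}$'' is strictly increasing. Since $S^*_{wc}$ is feasible, $J_i^j$ precedes $J_i^{j+1}$ in the $S^*_{wc}$-completion order, whence $S^{int}_{wc}(J_i^j) < S^{int}_{wc}(J_i^{j+1})$. Combining the two inequalities through the minima fact gives $S'_{wcs}(J_i^j) < S'_{wcs}(J_i^{j+1})$, which is Lemma~\ref{lemma: feasible}; and since the final loop of Algorithm~\ref{algo: WCS} only compacts $S'_{wcs}$ by deleting idle slots (an order-preserving relabelling), $S_{wcs}$ inherits all precedence constraints and is therefore feasible.

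No deep obstacle is expected, as the statement is ultimately about order preservation, but two points deserve a careful sentence. The first is the claim that the suffix-shifts constructing $S^{int}_{cs}$ preserve the relative order of \emph{every} pair of jobs (not just pairs inside the shifted suffix); the clean way to see this is the ``monotone difference'' argument above. The second is the transfer of feasibility from $S^*_{wc}$ to $S^{int}_{wc}$, which depends on the idle time slots of $S^{int}_{cs}$ being filled in increasing order of finish time --- here one should also note, as a preliminary, that $S^{int}_{cs}$ supplies at least $T$ idle time slots (counting the idle slot after the last scheduled job), so that $S^{int}_{wc}$ is well defined; this follows directly from the construction and is not otherwise used.
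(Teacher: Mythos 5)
Your proof is correct and takes essentially the same route as the paper: both arguments boil down to the fact that $S^{int}_{cs}$ and $S^{int}_{wc}$ each respect the chain order, so their pointwise minimum does too (the paper phrases this as a contradiction-based case analysis, you phrase it directly via the observation that $a_1<a_2$, $b_1<b_2$ imply $\min\{a_1,b_1\}<\min\{a_2,b_2\}$). Your write-up is merely a bit more explicit, since you also verify the order-preservation of the suffix-shift construction of $S^{int}_{cs}$ and of the idle-slot filling defining $S^{int}_{wc}$, which the paper asserts without proof.
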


\begin{proof}
For the sake of contradiction, assume that there are two jobs 
$J^a_i$ and $J^b_i$ 
from the same job chain such that $S'_{wcs}(J^a_i) > S'_{wcs}(J^b_i)$ but 
$a < b$. We first consider the case 
where $S'_{wcs}(J^a_i) = S^{int}_{wc}(J^a_i)$.
Hence, we must have $S'_{wcs}(J^b_i) = S^{int}_{cs}(J^b_i)$ 
(otherwise, $S^{int}_{wc}$ and $S^*_{wc}$ would violate 
the precedence constraints).
Since $S^{int}_{cs}$ follows the precedence constraints, 
$S^{int}_{cs}(J^b_i) > S^{int}_{cs}(J^a_i)$. 
Finally, since $S'_{wcs}(J^a_i) > S'_{wcs}(J^b_i) = S^{int}_{cs}(J^b_i)$, 
we have $S'_{wcs}(J^a_i) > S^{int}_{cs}(J^a_i)$, 
which contradicts to the definition of $S'_{wcs}$.
The case where $S'_{wcs}(J^a_i) = S^{int}_{cs}(J^a_i)$ 
can be proved in a similar way.
\end{proof}

Throughout this paper, we use $\mathbf{E}[X]$ to denote the 
expected value of $X$. The following theorem expresses 
the approximation ratio of Algorithm~\ref{algo: WCS} as a function of $p$.
\begin{thrm}\label{thrm: appox}
$\frac{\mathbf{E}[wcs(S_{wcs})]}{wc(S^*_{wc}) + cs(S^*_{cs})} 
\leq \max{\{1+\frac{1}{p}, 1+3p\}}$.
\end{thrm}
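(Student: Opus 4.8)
The plan is to bound $\mathbf{E}[wc(S_{wcs})]$ and $\mathbf{E}[cs(S_{wcs})]$ separately and then combine them using the elementary fact that $ax+by\le\max\{a,b\}(x+y)$ for nonnegative $x,y$. The key preliminary observation is that $S_{wcs}$ never schedules a job later than $S^{int}_{wc}$ or than $S^{int}_{cs}$: since $S_{wcs}$ is obtained from $S'_{wcs}$ by deleting idle time slots while preserving the order of the remaining jobs, we have $S_{wcs}(J)\le S'_{wcs}(J)=\min\{S^{int}_{cs}(J),S^{int}_{wc}(J)\}$ for every job $J$. Consequently $wc(S_{wcs})\le wc(S^{int}_{wc})$ and $cs(S_{wcs})\le cs(S^{int}_{cs})$, so it suffices to prove
\begin{equation*}
\mathbf{E}\bigl[wc(S^{int}_{wc})\bigr]\le\Bigl(1+\tfrac1p\Bigr)wc(S^*_{wc})
\qquad\text{and}\qquad
\mathbf{E}\bigl[cs(S^{int}_{cs})\bigr]\le(1+3p)\,cs(S^*_{cs}),
\end{equation*}
since these give $\mathbf{E}[wcs(S_{wcs})]\le(1+\tfrac1p)wc(S^*_{wc})+(1+3p)cs(S^*_{cs})\le\max\{1+\tfrac1p,1+3p\}\bigl(wc(S^*_{wc})+cs(S^*_{cs})\bigr)$.

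For the $cs$-bound, I would fix a leaf job $J$ with $k=S^*_{cs}(J)$ and note that in $S^{int}_{cs}$ this job is delayed by exactly the number of idle slots inserted into the $k-1$ gaps preceding it, so $S^{int}_{cs}(J)=k+Y$ where $Y=\sum_{i=1}^{k-1}X_i$ is a sum of $k-1$ independent Bernoulli$(p)$ variables. Using $\mathbf{E}[Y]=(k-1)p$ and $\mathbf{E}[Y^2]=(k-1)p(1-p)+(k-1)^2p^2$, the goal is the elementary inequality $\mathbf{E}[(k+Y)^2]\le(1+3p)k^2$ for all integers $k\ge1$ and all $p\in(0,1]$; after cancelling a factor $p$, the remaining expression is affine and non-decreasing in $p$ (for $k\ge2$; the case $k=1$ is trivial), hence maximized at $p=1$, where it equals $(k-1)(3k-1)=3k^2-4k+1\le 3k^2$. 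Summing over all leaf jobs gives $\mathbf{E}[cs(S^{int}_{cs})]\le(1+3p)cs(S^*_{cs})$.

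For the $wc$-bound, I would fix a job $J$, write $\ell=S^*_{wc}(J)$, and recall that $S^{int}_{wc}(J)$ equals the finish time of the $\ell$-th idle time slot of $S^{int}_{cs}$. If $G_\ell$ denotes the $S^*_{cs}$-index of the $\ell$-th gap that receives an idle slot (set $G_\ell:=T$ when fewer than $\ell$ idle slots were inserted, consistent with regarding every position past the last scheduled job as idle), then the $\ell$-th idle slot is preceded by exactly $G_\ell$ jobs and $\ell-1$ idle slots, so it occupies position $G_\ell+\ell$; hence $S^{int}_{wc}(J)=G_\ell+\ell$. Since $G_\ell$ is stochastically dominated by the index of the $\ell$-th success in an infinite sequence of independent Bernoulli$(p)$ trials, $\mathbf{E}[G_\ell]\le\ell/p$, so $\mathbf{E}[S^{int}_{wc}(J)]\le(1+\tfrac1p)\ell=(1+\tfrac1p)S^*_{wc}(J)$; multiplying by the weight of $J$ and summing over all jobs yields the bound on $\mathbf{E}[wc(S^{int}_{wc})]$.

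The two moment computations are routine, and the hypothesis $p\le1$ is used only in the $cs$ inequality, whose monotonicity-in-$p$ step I would verify carefully rather than expand by brute force. The step I expect to require the most care is the identity $S^{int}_{wc}(J)=G_\ell+\ell$ together with $\mathbf{E}[G_\ell]\le\ell/p$: one must handle the boundary case in which $S^{int}_{cs}$ contains fewer than $T$ idle slots, so that some jobs of $S^*_{wc}$ are effectively appended after the end of $S^{int}_{cs}$, and confirm that this truncation does not spoil the negative-binomial mean estimate — truncating only decreases $G_\ell$, so the estimate survives, but this point deserves an explicit argument.
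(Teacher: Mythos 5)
Your proposal is correct and follows essentially the same route as the paper: the same reduction to $\mathbf{E}[wc(S^{int}_{wc})]\leq(1+\frac{1}{p})wc(S^*_{wc})$ and $\mathbf{E}[cs(S^{int}_{cs})]\leq(1+3p)cs(S^*_{cs})$, with the idle-slot finish times controlled by geometric/negative-binomial waiting times and the delays in $S^{int}_{cs}$ by sums of Bernoulli variables, then combined via the $\max$ bound. Your handling of the two technical points (the exact binomial second moment, and the truncation when fewer than $\ell$ idle slots are inserted) is in fact slightly more careful than the paper's shorthand $Z_i=i+(i-1)B_p$ and $Y_i=\min\{i(G_p+1),T+i\}$, but the argument is the same.
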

\begin{proof}
It is sufficient to prove
\begin{equation}\label{eq: WC}
\mathbf{E}[wc(S^{int}_{wc})] \leq (1+\frac{1}{p})wc(S^*_{wc})
\end{equation}
and
\begin{equation}\label{eq: CS}
\mathbf{E}[cs(S^{int}_{cs})] \leq (1+3p)cs(S^*_{cs}).
\end{equation}
This is because, by Eq.~\eqref{eq: WC} and Eq.~\eqref{eq: CS}, we have
\begin{align*}
&\mathbf{E}[wcs(S_{wcs})] \\
&= \mathbf{E}[wc(S_{wcs})+cs(S_{wcs})]\\
&=\mathbf{E}[wc(S_{wcs})]+\mathbf{E}[cs(S_{wcs})]\\
&\leq \mathbf{E}[wc(S^{int}_{wc})] + \mathbf{E}[cs(S^{int}_{cs})]\\ 
&\leq (1+1/p)wc(S^*_{wc}) + (1+3p)cs(S^*_{cs})\\
&\leq \max{\{1+1/p, 1+3p\}} \cdot (wc(S^*_{wc}) + cs(S^*_{cs})).
\end{align*}

Let $J^i_{wc}$ be the $i$th completed job under $S^{*}_{wc}$.
Hence, $S^{*}_{wc}(J^i_{wc}) = i$.
Let $Y_i$ be the completion time of $J^i_{wc}$ under $S^{int}_{wc}$, 
i.e., $Y_i = S^{int}_{wc}(J^i_{wc})$.
Similarly, let $J^i_{cs}$ be the $i$th completed job under $S^{*}_{cs}$, 
and let $Z_i = S^{int}_{cs}(J^i_{cs})$.

To prove Eq.~\eqref{eq: WC},
it suffices to show that 
$\mathbf{E}[Y_i] \leq i(1+1/p)$ holds 
for all $1 \leq i \leq T$.  
Let $G_p$ be a random variable indicating 
the number of trials required to get the first success where 
the probability of success in each independent trial is $p$.
By the setting of $S^{int}_{wc}$, 
we have $Y_1 = \min{\{G_p+1, T+1\}}$.
Moreover, for all $2 \leq i \leq T$, 
$Y_i = \min{\{Y_{i-1} + G_p + 1, T+i\}} 
= \min{\{i(G_p+1), T+i\}}$. 
Hence,
\begin{equation*}
\mathbf{E}[Y_i] \leq \mathbf{E}[i(G_p+1)]
=i(1+\mathbf{E}[G_p]) = i(1+1/p).
\end{equation*} 

To prove Eq.~\eqref{eq: CS},
it suffices to prove that
$\mathbf{E}[Z_i^2] \leq (1+3p)i^2$ holds 
for all $1 \leq i \leq T$.  
Let $B_p$ be a random variable such that 
$B_p = 1$ with probability $p$ and $B_p = 0$ with probability $1-p$.
By the setting of $S^{int}_{cs}$, 
we have $Z_1 = 1$, and $Z_2 = Z_1+B_p+1 = 2+B_p$. 
In general, for all $2 \leq i \leq T$, we have 
$Z_i = Z_{i-1}+B_p+1 = i+(i-1)B_p$.
Hence, 
\begin{align*}
&\mathbf{E}[Z_i^2] = \mathbf{E}[(i+(i-1)B_p)^2]\\
&=\mathbf{E}[i^2+2i(i-1)B_p+(i-1)^2B_p^2]\\
&=i^2+2i(i-1)\mathbf{E}[B_p]+(i-1)^2\mathbf{E}[B_p^2]\\
&=i^2+2i(i-1)p+(i-1)^2p \\
&= (1+3p)i^2 -4pi+p\\
&\leq (1+3p)i^2 \text{ (since } i \geq 1\text{)}. \qedhere
\end{align*} 
\end{proof}

When $p = \frac{1}{\sqrt{3}}$, $1+\frac{1}{p} = 1+3p = 1+\sqrt{3}$.
Note that, $\frac{1}{\sqrt{3}} \approx 0.57735$
and $1+\sqrt{3} \approx 2.73205$.
\begin{coro}\label{coro: appox}
Algorithm~\ref{algo: WCS} is a randomized 
2.733-approximation algorithm for the Min-WCS problem when $p = 0.57735$. 
\end{coro}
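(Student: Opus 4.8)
The plan is to verify that the choice $p = \frac{1}{\sqrt 3}$ balances the two terms inside the maximum from Theorem~\ref{thrm: appox} and then simply evaluate the resulting common bound. First I would observe that $1 + \frac1p$ is strictly decreasing in $p$ on $(0,1]$ while $1 + 3p$ is strictly increasing, so the maximum of the two is minimized at the unique crossing point, obtained by solving $\frac1p = 3p$, i.e.\ $p^2 = \frac13$, which gives $p = \frac{1}{\sqrt 3}$ (the positive root, which indeed lies in $[0,1]$ since $\frac{1}{\sqrt 3} \approx 0.57735$). Substituting back, $1 + \frac{1}{p} = 1 + \sqrt 3$ and $1 + 3p = 1 + \frac{3}{\sqrt 3} = 1 + \sqrt 3$, so both arguments of the maximum coincide at $1 + \sqrt 3$.

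Next I would plug $p = 0.57735$ (the stated numerical approximation of $\frac{1}{\sqrt3}$) into Theorem~\ref{thrm: appox} to conclude
\begin{equation*}
\frac{\mathbf{E}[wcs(S_{wcs})]}{wc(S^*_{wc}) + cs(S^*_{cs})}
\leq \max\Bigl\{1 + \tfrac{1}{p},\, 1 + 3p\Bigr\} = 1 + \sqrt 3 < 2.733 .
\end{equation*}
Since $wc(S^*_{wc}) + cs(S^*_{cs})$ is a lower bound on the optimum $wcs(S^*_{job})$ of the Min-WCS instance (each term individually lower-bounds the corresponding component of any feasible schedule's objective), it follows that $\mathbf{E}[wcs(S_{wcs})] \leq (1+\sqrt3)\, wcs(S^*_{job}) < 2.733\, wcs(S^*_{job})$. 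Combined with Lemma~\ref{lemma: feasible} (and the non-overlap remark preceding it), which guarantees that $S_{wcs}$ is always a feasible schedule, and the fact that Algorithm~\ref{algo: WCS} runs in polynomial time (Algorithms~\ref{algo: WC} and \ref{algo: CS} are polynomial, and the remaining steps are linear-time passes), this establishes that Algorithm~\ref{algo: WCS} with $p = 0.57735$ is a randomized polynomial-time $2.733$-approximation.

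There is essentially no hard step here: the corollary is a direct specialization of Theorem~\ref{thrm: appox}. The only minor subtlety worth a sentence is that $0.57735$ is a truncation of $\frac{1}{\sqrt3}$ rather than the exact value, so strictly speaking the two terms $1+\frac1p$ and $1+3p$ are not exactly equal at $p = 0.57735$; one should check that the larger of them is still below $2.733$. Indeed $1 + \frac{1}{0.57735} < 1 + \sqrt 3 + 10^{-4} < 2.733$ and $1 + 3(0.57735) = 2.73205 < 2.733$, so the bound holds with room to spare. I expect the write-up to be only a few lines.
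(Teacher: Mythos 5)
Your proposal is correct and follows essentially the same route as the paper: the corollary is obtained by specializing Theorem~\ref{thrm: appox} at the balancing point $p=\tfrac{1}{\sqrt{3}}$, where $1+\tfrac{1}{p}=1+3p=1+\sqrt{3}\approx 2.73205<2.733$, together with the lower bound $wc(S^*_{wc})+cs(S^*_{cs})\leq wcs(S^*_{job})$ and the feasibility of $S_{wcs}$ already established in Section~\ref{subsec: Min-WCS}. Your extra check that the truncated value $p=0.57735$ still keeps both terms below $2.733$ is a small but welcome refinement that the paper leaves implicit.
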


\begin{coro}\label{coro: deter-appox}
Algorithm~\ref{algo: WCS} is a deterministic
4-approximation algorithm for the Min-WCS problem when $p = 1$. 
\end{coro}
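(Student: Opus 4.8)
The plan is to obtain this as an immediate specialization of Theorem~\ref{thrm: appox} to $p = 1$, after first noting that at $p = 1$ the algorithm carries no randomness. Concretely, when $p = 1$ every Bernoulli variable $X_i$ in Algorithm~\ref{algo: WCS} equals $1$ with probability $1$, so $S^{int}_{cs}$ is a fixed function of the input, and hence so are $S^{int}_{wc}$, $S'_{wcs}$, and the final schedule $S_{wcs}$. Consequently $wcs(S_{wcs})$ is deterministic and $\mathbf{E}[wcs(S_{wcs})] = wcs(S_{wcs})$. Feasibility of $S_{wcs}$ is not specific to $p < 1$: it follows from Lemma~\ref{lemma: feasible} (precedence constraints) together with the observation that $S^{int}_{wc}$ and $S^{int}_{cs}$ never overlap, exactly as in the general case.

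Next I would evaluate the bound of Theorem~\ref{thrm: appox} at $p = 1$, namely $\max\{1 + \tfrac{1}{p},\, 1 + 3p\} = \max\{2, 4\} = 4$, which with the previous paragraph gives $wcs(S_{wcs}) \le 4\,(wc(S^*_{wc}) + cs(S^*_{cs}))$. I would then invoke the lower bound already recorded in the discussion preceding Algorithm~\ref{algo: WCS}: since $S^*_{wc}$ and $S^*_{cs}$ are optimal for the Min-WC and Min-CS problems respectively, any feasible schedule $S$ of the Min-WCS instance satisfies $wc(S) \ge wc(S^*_{wc})$ and $cs(S) \ge cs(S^*_{cs})$, so $wc(S^*_{wc}) + cs(S^*_{cs})$ is at most the optimum of the Min-WCS instance. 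Chaining the two inequalities yields $wcs(S_{wcs}) \le 4 \cdot OPT$, i.e.\ a deterministic $4$-approximation.

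There is essentially no obstacle here; the corollary is a one-line consequence of Theorem~\ref{thrm: appox} plus the remark (made when the deterministic $4$-approximation was first sketched, before the randomized version was introduced) that $p = 1$ removes all randomness. The only point worth stating explicitly is that the reasoning of Theorem~\ref{thrm: appox} is already pointwise in the sample space: its proof bounds $\mathbf{E}[Y_i]$ and $\mathbf{E}[Z_i^2]$, and at $p = 1$ the underlying variables $G_p$ and $B_p$ are the constants $1$, so the bounds hold surely rather than merely in expectation — which is precisely why the resulting guarantee is deterministic.
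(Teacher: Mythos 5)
Your proposal is correct and matches the paper's justification: the paper establishes exactly the bounds $wc(S_{wcs}) \leq 2\,wc(S^*_{wc})$ and $cs(S_{wcs}) \leq 4\,cs(S^*_{cs})$ together with the lower bound $wc(S^*_{wc})+cs(S^*_{cs}) \leq OPT$, and notes that at $p=1$ Algorithm~\ref{algo: WCS} degenerates to this deterministic schedule, which is precisely your specialization of Theorem~\ref{thrm: appox} (with the observation that all randomness vanishes, so the expectation bound holds surely).
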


\section{An Exact Algorithm for the Min-WCS Problem}
\label{sec: DP}
Next, we solve the Min-WCS problem by dynamic programming.
The objective function $wcs$ can be stated as follows.
\begin{equation*}
wcs(S) = \sum_{i=1}^{n}{\sum_{j = 1}^{|\mathcal{C}_i|}{f_i^j(S(J_i^j))}},
\end{equation*}
where $\displaystyle f^j_i(t) = w^j_i \cdot t \text{ if } j \neq |\mathcal{C}_i|$,
and $\displaystyle f^j_i(t) = w^j_i \cdot t + t^2 \text{ if } j = |\mathcal{C}_i|$.
In other words, $f^j_i(t)$ is the cost incurred by job $J^j_i$ if 
$J^j_i$ is completed at time $t$, and $wcs(S)$ is simply the total cost 
incurred by all jobs under schedule $S$.

In the dynamic program, we consider subproblems of the Min-WCS problem where
job chains can be executed partially. 
Specifically, for each job chain $\mathcal{C}_i$, we only need to 
schedule the first $\mathbf{L}[i]$ 
jobs $J_i^1, J_i^2, \cdots, J_i^{\mathbf{L}[i]}$, 
where $\mathbf{L}[i] \leq |\mathcal{C}_i|$ and $\mathbf{L}[i]$ may be zero.
When $\mathbf{L}[i] = 0$, we do not need to schedule any job in $\mathcal{C}_i$.
More formally, let $\mathbf{L}$ be any vector of length $n$ such that, 
for all $1 \leq i \leq n$, 
the $i$th element of 
$\mathbf{L}$, denoted by $\mathbf{L}[i]$, 
is in $\{0, 1, 2, \cdots, |\mathcal{C}_i|\}$.
In this paper, for any vector $\mathbf{V}$, we use $\mathbf{V}[i]$ to denote
the $i$th element of $\mathbf{V}$.
Define 
\begin{equation*}
wcs_{\mathbf{L}}(S) = \sum_{i=1}^{n}{\sum_{j = 1}^{\mathbf{L}[i]}{f_i^j(S(J_i^j))}}.
\end{equation*}
Let $\mathcal{J}(\mathbf{L})$ be the set of jobs whose costs are considered 
in $wcs_{\mathbf{L}}$. 
Let $\mathbf{MinWCS}(\mathbf{L})$ be a subproblem of the Min-WCS problem where 
we only need to schedule jobs in $\mathcal{J}(\mathbf{L})$ 
with objective function $wcs_{\mathbf{L}}$. 
Hence, the Min-WCS problem is equivalent to 
$\mathbf{MinWCS}(\mathbf{L})$ when 
$\mathbf{L}[i] = |\mathcal{C}_i|$ for all $1 \leq i \leq n$.
Let $S^*_{\mathbf{L}}$ be the optimal schedule of $\mathbf{MinWCS}(\mathbf{L})$.

Observe that the last completed job under $S^*_{\mathbf{L}}$ must be
$J^{\mathbf{L}[k]}_k$ for some $1 \leq k \leq n$.
To find $S^*_{\mathbf{L}}$, 
we try every possible last completed job $J^{\mathbf{L}[k]}_k$
and consider the subproblem obtained by removing $J^{\mathbf{L}[k]}_k$ from 
$\mathbf{MinWCS}(\mathbf{L})$.
Define $\mathbf{L}_{-k}$ as a vector of length $n$ such that
$\mathbf{L}_{-k}[j] = \mathbf{L}[j]$ if $j \neq k$
and $\mathbf{L}_{-k}[k] = \mathbf{L}[k]-1$.
Thus, the subproblem $\mathbf{MinWCS}(\mathbf{L}_{-k})$ can
be obtained by removing $J^{\mathbf{L}[k]}_k$ from 
$\mathbf{MinWCS}(\mathbf{L})$.
We then compute $S^*_{\mathbf{L}}$ based on 
$S^*_{\mathbf{L}_{-1}}, S^*_{\mathbf{L}_{-2}}, \cdots, S^*_{\mathbf{L}_{-n}}$.
Specifically, define $S^*_{\mathbf{L}_{-k}} \oplus J^{\mathbf{L}[k]}_k$ 
as the schedule obtained by setting the
completion time of job $J^{\mathbf{L}[k]}_k$ 
to $\sum_{1 \leqslant i \leqslant n}{\mathbf{\mathbf{L}}[i]}$ and 
setting the completion time of every job $J'$ in $\mathcal{J}(\mathbf{L}_{-k})$ to 
$S^*_{\mathbf{L}_{-k}}(J')$. 
Since $wcs_{\mathbf{L}}$ simply adds up the cost incurred by each job in 
$\mathcal{J}(\mathbf{L})$,
it is easy to see that
\begin{equation*}
S^*_{\mathbf{L}} = \argmin_{S \in \{ S^*_{\mathbf{L}_{-k}} \oplus J^{\mathbf{L}[k]}_k |
 1 \leq k \leq n, \mathbf{L}[k] \geq 1\}}
{wcs_{\mathbf{L}}(S)}.
\end{equation*}

Given the values of $\sum_{1 \leqslant i \leqslant n}{\mathbf{L}[i]}$ and 
$wcs_{\mathbf{L}_{-k}}(S^*_{\mathbf{L}_{-k}})$, 
$wcs_{\mathbf{L}}(S^*_{\mathbf{L}_{-k}} \oplus J^{\mathbf{L}[k]}_k)$ 
can be computed in $O(1)$ time. Hence, given all the required 
$S^*_{\mathbf{L}_{-k}}$ and $wcs_{\mathbf{L}_{-k}}(S^*_{\mathbf{L}_{-k}})$,
$\min_{S \in \{ S^*_{\mathbf{L}_{-k}} \oplus J^{\mathbf{L}[k]}_k |
1 \leq k \leq n, \mathbf{L}[k] \geq 1\}}
{wcs_{\mathbf{L}}(S)}$ can be computed in $O(n)=O(T)$ time. 
We can then construct $S^*_{\mathbf{L}}$ 
in $O(\sum_{1 \leqslant i \leqslant n}{\mathbf{L}[i]}) = O(T)$ time.
The base case where $\mathbf{L}[i] = 0$ for all $1 \leq i \leq n$ is trivial, 
and we can solve the dynamic program by a bottom-up approach.
Since there are $\prod_{i=1}^{n}{(|\mathcal{C}_i|+1)}$ subproblems, 
we have the following results. 

\begin{thrm}\label{thrm: dp}
The Min-WCS problem can be solved optimally in 
$O(T\prod_{i=1}^{n}{(|\mathcal{C}_i|+1)})$ time.
\end{thrm}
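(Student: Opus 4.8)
The plan is to formalize the bottom-up dynamic program already sketched in the surrounding text and verify its correctness and running time. The state space is indexed by prefix-length vectors $\mathbf{L}$ with $\mathbf{L}[i] \in \{0, 1, \dots, |\mathcal{C}_i|\}$, and the quantity computed for each state is the pair $(S^*_{\mathbf{L}}, wcs_{\mathbf{L}}(S^*_{\mathbf{L}}))$, i.e. an optimal schedule of the subproblem $\mathbf{MinWCS}(\mathbf{L})$ together with its objective value. First I would establish the recurrence: any feasible schedule of $\mathbf{MinWCS}(\mathbf{L})$ occupies exactly the time slots $\{1, 2, \dots, \sum_i \mathbf{L}[i]\}$, so its last slot $\sum_i \mathbf{L}[i]$ is assigned to the last job $J_k^{\mathbf{L}[k]}$ of some chain $k$ with $\mathbf{L}[k] \geq 1$ (it must be the last job of its chain-prefix because of the precedence constraint within the chain). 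Removing that job leaves a feasible schedule of $\mathbf{MinWCS}(\mathbf{L}_{-k})$, and conversely any feasible schedule of $\mathbf{MinWCS}(\mathbf{L}_{-k})$ extended by placing $J_k^{\mathbf{L}[k]}$ in the final slot is feasible for $\mathbf{MinWCS}(\mathbf{L})$. Because $wcs_{\mathbf{L}}$ is a sum of per-job costs $f_i^j(\cdot)$ that depend only on that job's completion time, the cost decomposes additively across the split, so $wcs_{\mathbf{L}}(S) = wcs_{\mathbf{L}_{-k}}(S \text{ restricted to } \mathcal{J}(\mathbf{L}_{-k})) + f_k^{\mathbf{L}[k]}(\sum_i \mathbf{L}[i])$; minimizing over the choice of $k$ and using optimality of $S^*_{\mathbf{L}_{-k}}$ for each fixed $k$ yields exactly the displayed formula for $S^*_{\mathbf{L}}$. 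This is a standard exchange/decomposition argument and I would present it concisely.

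Next I would address the complexity bound. The base case $\mathbf{L} = (0, \dots, 0)$ is the empty schedule with value $0$. Processing the states in nondecreasing order of $\sum_i \mathbf{L}[i]$ ensures every predecessor $\mathbf{L}_{-k}$ is available when $\mathbf{L}$ is handled. For a single state, as the text observes, given the precomputed values $wcs_{\mathbf{L}_{-k}}(S^*_{\mathbf{L}_{-k}})$ one evaluates the $n$ candidates $wcs_{\mathbf{L}_{-k}}(S^*_{\mathbf{L}_{-k}} \oplus J_k^{\mathbf{L}[k]})$ in $O(1)$ time each by adding the single new cost term, takes the minimum in $O(n)$ time, and then materializes the winning schedule $S^*_{\mathbf{L}}$ in $O(\sum_i \mathbf{L}[i]) = O(T)$ time by copying the predecessor schedule and appending one job. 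Since $n \leq T$, the per-state work is $O(T)$, and there are $\prod_{i=1}^n (|\mathcal{C}_i| + 1)$ states, giving the claimed $O\bigl(T \prod_{i=1}^n (|\mathcal{C}_i| + 1)\bigr)$ total. Taking $\mathbf{L} = (|\mathcal{C}_1|, \dots, |\mathcal{C}_n|)$ recovers the Min-WCS problem itself, so the final answer is read off from that state.

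**The main obstacle** is not really a deep one — it is making the decomposition and the correctness of the recurrence airtight, in particular justifying that it suffices to store only \emph{one} optimal schedule per state rather than all optimal schedules: this works precisely because the cost contribution of the newly appended leaf-or-internal job depends only on the fixed final time slot $\sum_i \mathbf{L}[i]$ and not on the internal structure of the predecessor schedule, so an optimal predecessor schedule yields an optimal extension. I would also want to note explicitly that materializing $S^*_{\mathbf{L}}$ in $O(T)$ time (rather than reconstructing schedules only at the end via back-pointers) is a valid implementation choice that keeps the per-state cost at $O(T)$; alternatively one stores only objective values plus a pointer to the optimal $k$ and reconstructs once at the end in $O(T)$ total, which does not change the asymptotic bound. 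Everything else is bookkeeping, and the proof can be kept short by leaning on the detailed exposition already given before the theorem statement.
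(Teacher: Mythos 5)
Your proposal is correct and follows essentially the same route as the paper: the same prefix-vector state space $\mathbf{L}$, the same recurrence obtained by guessing the last completed job $J_k^{\mathbf{L}[k]}$ and reducing to $\mathbf{MinWCS}(\mathbf{L}_{-k})$, and the same per-state accounting ($O(n)$ for the minimization plus $O(T)$ to materialize the schedule) over $\prod_{i=1}^{n}(|\mathcal{C}_i|+1)$ states. The additional points you make explicit (additive decomposition of $wcs_{\mathbf{L}}$ justifying that one optimal schedule per state suffices, and processing states in nondecreasing order of $\sum_i \mathbf{L}[i]$) are exactly the implicit ingredients of the paper's argument.
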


Thus, the Min-WCS problem can be solved optimally in polynomial time when 
the number of job chains is a constant (i.e., $n = O(1)$), 
even if there are arbitrarily many jobs.

\begin{coro}\label{coro: dp}
The Min-WCS problem can be solved optimally in polynomial time if $n=O(1)$.
This is true even if there are arbitrarily many jobs.
\end{coro}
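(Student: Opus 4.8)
The plan is to obtain the corollary as an immediate specialization of the running-time bound in Theorem~\ref{thrm: dp}. First I would recall that the number of jobs is $T = |\mathcal{C}_1| + |\mathcal{C}_2| + \cdots + |\mathcal{C}_n|$, so every chain satisfies $|\mathcal{C}_i| \le T$. Hence the number of dynamic-programming subproblems is bounded by
\[
\prod_{i=1}^{n}{(|\mathcal{C}_i|+1)} \le (T+1)^n,
\]
and substituting this into Theorem~\ref{thrm: dp} yields a running time of $O\!\left(T \cdot (T+1)^n\right) = O(T^{n+1})$.

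Next I would argue that $O(T^{n+1})$ is polynomial in the input size whenever $n = O(1)$. A Min-WCS instance must list all $T$ jobs together with their weights, so the encoding length is $\Omega(T)$; therefore $O(T^{n+1})$ with a constant exponent $n+1$ is polynomial in the input size. I would also point out that this bound places no constraint on $T$ itself — the chains may be arbitrarily long — which is exactly the content of the second sentence of the corollary. If desired, one can append the remark that combining this with Theorem~\ref{thrm: equiv} shows the Min-Age problem is solvable in polynomial time when the number of sender-receiver pairs is constant.

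The only step requiring any care is pinning down what ``polynomial time'' means here: one must confirm both that $T$ is polynomially bounded by the encoding length (immediate, since the jobs and their weights are part of the input) and that $(T+1)^n$ remains polynomial once $n$ is fixed (immediate, since a constant exponent produces a polynomial). Neither of these is a genuine obstacle; all the real work is already done in Theorem~\ref{thrm: dp}, and this corollary merely reads off its consequence for constant $n$.
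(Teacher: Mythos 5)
Your proposal is correct and follows the same route as the paper, which presents this corollary as an immediate consequence of Theorem~\ref{thrm: dp}: bounding $\prod_{i=1}^{n}(|\mathcal{C}_i|+1)$ by $(T+1)^n$ and noting that the input size is $\Omega(T)$, so the running time is polynomial for constant $n$. Your extra care about the encoding length is sound but unnecessary detail beyond what the paper records.
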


\section{NP-Hardness of the Min-Age Problem}\label{sec: NP}
In this section, we prove that the Min-Age problem is NP-hard.
He \textit{et al.} proved that a certain generalization of the Min-Age problem is 
NP-hard~\cite{IT}. However, this result does not preclude the possibility of 
solving the Min-Age problem optimally in polynomial time.
Specifically, He \textit{et al.} studied a generalization 
of the Min-Age problem where senders in the same \textit{candidate group} 
can send messages simultaneously. 
The list of candidate groups
are either explicitly specified in the inputs 
or can be derived from an interference model based on SINR.
This generalization greatly increases the complexity of the scheduling problem. 
In fact, He \textit{et al.} proved that, 
even if every sender $s_i$ has only one message to be scheduled, 
(i.e., $|\mathcal{M}_i| = 1$ for all $1 \leq i \leq n$), 
the generalization is still NP-hard~\cite{IT}. 
However, this special case can be solved optimally in polynomial time for 
the Min-Age problem.\footnote{To see this, consider the corresponding Min-WCS 
problem instance, 
where every job chain has only one job.
Then, for any two feasible schedules $S_1$ and $S_2$ 
of the Min-WCS problem instance, $cs(S_1) = cs(S_2)$.
Hence, the corresponding Min-WCS problem instance 
can be solved optimally in polynomial time by Algorithm~\ref{algo: WC}.
This implies that the Min-Age problem can be solved optimally in 
polynomial time if $|\mathcal{M}_i| = 1$ for all $1 \leq i \leq n$.}
On the other hand, we transform the Min-Age problem into 
the Min-WCS problem, where the processing time of every job is one, 
and the precedence constraints are chain-like. 
Given such a simple setting, one may suspect that 
the Min-WCS problem is in P, and thus the Min-Age problem is in P as well.
Nevertheless, in this paper, we prove that
the Min-Age problem is NP-hard.
Hence, unless P $=$ NP, 
the best polynomial-time algorithm for the Min-Age problem 
is an approximation algorithm.

{The big picture of the proof is the following.
We first consider a special case of the Min-WCS problem 
called the Constrained-Min-WCS problem. 
We prove that, if the Constrained-Min-WCS problem is NP-hard, 
then the Min-Age problem is NP-hard (Section~\ref{sec: NP1}). 
We then prove that the Constrained-Min-WCS problem is NP-hard by a reduction 
from the 3-Partition problem~\cite{Garey:1990:CIG:574848} 
(Section~\ref{sec: NP2}).

\subsection{Reduction from the Constrained-Min-WCS Problem 
to the Min-Age Problem}
\label{sec: NP1}
Recall that, in Section~\ref{sec: trans}, we give a transformation from the 
Min-Age problem to the Min-WCS problem. 
Notice that we can reverse the transformation 
when job weights satisfy certain properties. 
\begin{defi}
The \textbf{Constrained-Min-WCS problem} is a special case 
of the Min-WCS problem, where the weight of every internal 
job is an even number greater than zero, 
and the weight of every leaf job is an odd number greater than zero.
\end{defi}

\begin{lemma}
\label{lemma: NP0}
If the Constrained-Min-WCS problem is NP-hard, 
then the Min-Age problem is NP-hard.
\end{lemma}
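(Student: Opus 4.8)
The plan is to reverse the transformation of Section~\ref{sec: trans}: given an instance $I_{job}$ of the Constrained-Min-WCS problem, I would construct a Min-Age instance $I_{age}$ that satisfies Eq.~\eqref{eq: trans1}--\eqref{eq: trans4}, so that Lemmas~\ref{lemma: trans} and~\ref{lemma: equiv} apply to it verbatim. Concretely, set $n = n_{chain}$ and $|\mathcal{M}_i| = |\mathcal{C}_i|$, and solve Eq.~\eqref{eq: trans3} for the birthdays of internal messages: $b(M_i^j) = b(M_i^{j-1}) + w_i^j/2$ for $1 \le j \le |\mathcal{C}_i|-1$. Since every internal weight $w_i^j$ is a positive even number, each $w_i^j/2$ is a positive integer, so the $b(M_i^j)$ come out as strictly increasing integers as soon as $b(M_i^0)$ is a non-negative integer. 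The remaining degrees of freedom are $T_0$, the values $b(M_i^0)$, and the dummy birthdays $b(M_i^{|\mathcal{M}_i|})$, which (as noted right after the definition of $age$) do not affect the objective and need only satisfy $b(M_i^{|\mathcal{M}_i|-1}) < b(M_i^{|\mathcal{M}_i|}) \le T_0$.

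Next I would pin down $T_0$ and the $b(M_i^0)$ from Eq.~\eqref{eq: trans4}. Rearranging it gives $b(M_i^{|\mathcal{M}_i|-1}) = T_0 - \tfrac12 - w_i^{|\mathcal{C}_i|}/2$, and combining this with the telescoping identity $b(M_i^{|\mathcal{M}_i|-1}) = b(M_i^0) + \sum_{j=1}^{|\mathcal{C}_i|-1} w_i^j/2$ yields $b(M_i^0) = T_0 - \tfrac12 - \tfrac12\sum_{j=1}^{|\mathcal{C}_i|} w_i^j$. This is exactly where the parity hypothesis of the Constrained-Min-WCS problem is needed: the sum $\sum_{j=1}^{|\mathcal{C}_i|} w_i^j$ is a sum of even numbers plus one odd number, hence odd, so $\tfrac12 + \tfrac12\sum_j w_i^j$ is an integer and therefore $b(M_i^0)$ is an integer whenever $T_0$ is. I would then take $T_0$ to be any integer at least $1 + \sum_{\text{all jobs}} w_i^j$; this is polynomial in the size of $I_{job}$, makes every $b(M_i^0) \ge 0$, and, since each leaf weight is odd and hence at least $1$, forces $b(M_i^{|\mathcal{M}_i|-1}) \le T_0 - 1$, so that $b(M_i^{|\mathcal{M}_i|}) := T_0$ is a legal dummy birthday. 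A routine check then confirms that $I_{age}$ is a valid Min-Age instance, that Eq.~\eqref{eq: trans1}--\eqref{eq: trans4} hold by construction, and that the whole construction runs in polynomial time.

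Having the instance, I would close the argument with the equivalence already established: by Lemma~\ref{lemma: equiv}, $2\,Age(S^{*}_{age}) = wcs(S^{*}_{job})$, and by Lemma~\ref{lemma: trans} the map $S_{age}(M_i^j) = S_{job}(J_i^j) + T_0$ converts an optimal Min-Age schedule for $I_{age}$ into an optimal Min-WCS schedule for $I_{job}$ (and conversely) in polynomial time. Hence a polynomial-time algorithm that solves the Min-Age problem optimally would, via this reduction, solve the Constrained-Min-WCS problem optimally in polynomial time; in decision-problem terms, a threshold $B$ on $wcs$ translates to the threshold $B/2$ on $Age$, so the reduction is a genuine polynomial-time many-one reduction. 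Therefore NP-hardness of the Constrained-Min-WCS problem implies NP-hardness of the Min-Age problem.

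I expect the main obstacle to be the second step: verifying that one fixed value of $T_0$ can simultaneously (i) keep every recovered birthday a non-negative integer, (ii) respect $0 \le b(M_i^0) < b(M_i^1) < \cdots < b(M_i^{|\mathcal{M}_i|}) \le T_0$ in every chain, and (iii) be consistent with Eq.~\eqref{eq: trans4} for all chains at once. The even/odd constraints defining the Constrained-Min-WCS problem are precisely what makes the division by two harmless here, and this is the only place where that hypothesis enters.
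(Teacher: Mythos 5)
Your proposal is correct and follows essentially the same route as the paper: invert the transformation of Section~\ref{sec: trans} by solving Eq.~\eqref{eq: trans3}--\eqref{eq: trans4} for the birthdays, use the even/odd weight hypothesis to guarantee the recovered birthdays are integers, and then conclude via Lemmas~\ref{lemma: trans} and~\ref{lemma: equiv}. The only difference is cosmetic: the paper takes $T_0 = W_{max}$ with $W_i = \tfrac12 + \tfrac12\sum_j w_i^j$ (the smallest admissible value), whereas you take any sufficiently large integer $T_0$, which works equally well.
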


\begin{proof}
Given an instance $I_{job}$ of the Constrained-Min-WCS problem, we construct a 
corresponding instance $I_{age}$ of the Min-Age problem.
In $I_{age}$, the number of sender-receiver pairs, $n$, 
is equal to the number of 
job chains in $I_{job}$, $n_{chain}$.
$|\mathcal{M}_i| = |\mathcal{C}_i|$, for all $1 \leq i \leq n$.
For each $1 \leq i \leq n$, define 
$W_i = \frac{w_i^{|\mathcal{C}_i|}+1}{2}+
\sum_{j=1}^{|\mathcal{C}_i|-1}{\frac{w_i^j}{2}}$.
Furthermore, define $W_{max} = \max_{1 \leq i \leq n}{W_i}$.
We set $b(M_i^j)$ and $T_0$ as follows.
\begin{align*}
&b(M_i^0) = W_{max}-W_i, &\forall 1 \leq i \leq n,\\
&b(M_i^j) = b(M_i^{j-1}) + \frac{w_i^j}{2}, 
&\forall 1 \leq i \leq n, 1 \leq j \leq |\mathcal{M}_i| - 1\\
&b(M_i^{|\mathcal{M}_i|}) = W_{max}, &\forall 1 \leq i \leq n,\\
&T_0 = W_{max}. &
\end{align*}
Obviously, this transformation can be done in polynomial time.

We first prove that $I_{age}$ is a valid Min-Age problem instance.
Since the weight of every leaf job is odd, 
and the weight of every internal job is even, 
$W_i$ is a non-negative integer for all $1 \leq i \leq n$.
It is then easy to see that, for every message $M$ in $I_{age}$, 
$b(M)$ is an integer.
Since the weight of an internal job is at least two,
we have $b(M_i^0) < b(M_i^1) < ... < b(M_i^{|\mathcal{M}_i|-1})$.
To prove $b(M_i^{|\mathcal{M}_i|-1}) < b(M_i^{|\mathcal{M}_i|})$, 
observe that under our setting, 
\begin{equation*}
b(M_i^{|\mathcal{M}_i|-1}) 
= \sum_{j=1}^{|\mathcal{C}_i|-1}{\frac{w_i^j}{2}}+W_{max}-W_i
= W_{max}-\frac{w_i^{|\mathcal{C}_i|}+1}{2}
< W_{max} = b(M_i^{|\mathcal{M}_i|}).
\end{equation*}

To complete the proof, it suffices to prove that 
$I_{age}$ and $I_{job}$ satisfy Eq.~\eqref{eq: trans3} and 
Eq.~\eqref{eq: trans4}. 
The proof then follows from Lemmas~\ref{lemma: equiv} and \ref{lemma: trans}.
When $j \in \{1, 2, \cdots, |\mathcal{C}_i|-1\}$, since
\begin{equation*}
2(b(M_i^j) - b(M_i^{j-1})) = 2(\frac{w_i^j}{2}) = w_i^j,
\end{equation*}
Eq.~\eqref{eq: trans3} holds.
Finally, Eq.~\eqref{eq: trans4} holds because
\begin{equation*}
2(T_0 - 0.5 - b(M_i^{|\mathcal{M}_i|-1}))
=2(W_{max}-0.5-(W_{max}-\frac{w_i^{|\mathcal{C}_i|}+1}{2}))=w_i^{|\mathcal{C}_i|}.
\end{equation*}
\end{proof}

\subsection{Reduction from the 3-Partition Problem to the 
Constrained-Min-WCS Problem}
\label{sec: NP2}  
We first give the definition of the 3-Partition problem.
\begin{defi}
Given a set $L$ of $3m$ positive integers, 
$a_1, a_2, \cdots, a_{3m}$, and a positive integer $B$  
such that $\sum_{a \in L}{a} = mB$ and $\frac{B}{4} < a < \frac{B}{2}$ 
for all $a \in L$, the \textbf{3-Partition problem} asks for a partition of $L$
into $m$ subsets of $L$, $P_1, P_2, \cdots, P_m$, 
such that $\sum_{a \in P_i}{a} = B$, for all $1 \leq i \leq m$.    
Note that, since $\frac{B}{4} < a < \frac{B}{2}$ 
for all $a \in L$, $|P_i| = 3$ for all $1 \leq i \leq m$.
\end{defi}
The 3-Partition problem is \textit{unary NP-hard}, i.e.,
the 3-Partition problem is NP-hard even if integers are encoded in 
unary~\cite{Garey:1990:CIG:574848}.
If integers are encoded in unary, 
then the space required to represent an integer $n$ is $\Theta(n)$, 
instead of $\Theta(\log n)$ in binary encoding.
Moreover, even if all the integers in $L$ and $B$ are even numbers greater than 
zero, the 3-Partition problem is still unary NP-hard.
To see this, given any instance $I_{3P}$ with input 
$L = \{a_1, a_2, \cdots, a_{3m}\}$ 
and $B$ of the 3-Partition problem in unary encoding, 
we can construct an instance $I^{even}_{3P}$ of the 3-Partition problem 
in unary encoding with input $L' = \{2a_1, 2a_2, \cdots, 2a_{3m}\}$ 
and $B' = 2B$ in polynomial time. 
Since $a > \frac{B}{4} > 0$ for all $a \in L$, 
all integers in $L'$ and $B'$ are even numbers greater than zero. 
It is easy to see that $I_{3P}$ has a feasible solution if and only if 
$I^{even}_{3P}$ has a feasible solution. 

In the following proofs, we will consider an extension of the Min-WCS problem 
called the \textbf{NonUni-Min-WCS problem}.
The only difference between the NonUni-Min-WCS problem and the Min-WCS problem 
is that in the NonUni-Min-WCS problem, 
different jobs may have different processing times.
Like the Min-WCS problem, there is only one machine 
and the schedule is non-preemptive in the NonUni-Min-WCS problem. 

The proof of the NP-hardness of the Constrained-Min-WCS problem consists 
of two steps. In the first step, 
we prove that the 3-Partition problem in unary encoding
can be reduced to the NonUni-Min-WCS problem in polynomial time even if the 
processing time is in unary encoding (Lemma~\ref{lemma: NP1}). 
We then prove that the Constrained-Min-WCS problem is NP-hard by a reduction 
from the NonUni-Min-WCS problem where the processing time is in unary encoding 
(Lemma~\ref{lemma: NP2}).
To avoid ambiguity, we use $wcs$ and $\overbar{wcs}$ to denote 
the objective functions of the NonUni-Min-WCS problem and 
the Constrained-Min-WCS problem, respectively

In the following lemmas and proofs, when we refer to the NonUni-Min-WCS problem 
or the Constrained-Min-WCS problem, 
we mean the \textit{decision version} of these problems. 
Specifically, let $P$ be a minimization problem with objective function $f$.
In the decision version of $P$, 
we are given one additional input $Q$,
and we have to decide whether the problem has a solution $S$ 
such that $f(S) \leq Q$ and $S$ satisfies all the constraints of $P$.
A solution $S$ is \textbf{valid} for the decision version of $P$, 
if it satisfies all the constraints of $P$.
Moreover, a solution $S$ is \textbf{feasible} for the decision version of $P$, 
if it is valid and $f(S) \leq Q$.
Again, to avoid ambiguity, we use $Q$ and $\overbar{Q}$ to denote 
the additional inputs of the NonUni-Min-WCS problem and 
the Constrained-Min-WCS problem, respectively.

\begin{lemma}
\label{lemma: NP1}
The NonUni-Min-WCS problem is NP-hard, 
even if the processing time is in unary encoding and 
all the following properties are satisfied:
\begin{itemize}
\item[P1:] All the weights and processing times 
of internal jobs are even numbers greater than zero.
\item[P2:] All the weights of leaf jobs are odd numbers 
greater than zero.
\item[P3:] The processing time of every leaf job is one.
\item[P4:] Every job chain has at most two jobs.
\end{itemize}
\end{lemma}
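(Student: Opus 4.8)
The plan is to reduce from the 3-Partition problem in unary encoding, where all integers $a_i$ and $B$ are even numbers greater than zero (as observed in the excerpt, this restricted version is still unary NP-hard). Given such an instance $I_{3P}$ with $L = \{a_1, \dots, a_{3m}\}$ and target $B$, I will build a NonUni-Min-WCS instance with $3m$ single-job chains (one ``item job'' per $a_i$) together with $m$ auxiliary two-job chains acting as ``separators'' or ``bin boundaries.'' The item job for $a_i$ should be a leaf job (satisfying P2 and P3: weight odd, processing time one). To make the weight odd while the size information lives in the processing time, the natural trick is to give the item job a small weight and encode $a_i$ via the processing time of an internal job in one of the separator chains — but since item jobs must have processing time one (P3), I instead plan to put the ``bulk'' into the two-job separator chains: each separator chain $\mathcal{C}_k$ has an internal job of large even weight and even processing time, followed by a leaf job of odd weight. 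The idea is that a feasible schedule of small cost is forced to interleave exactly three item jobs between consecutive separator internal jobs, with their total processing time equal to $B$, which is exactly a valid 3-partition.

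The key steps, in order: (1) Fix the gadget. Scale up if needed so that all weights are even on internal jobs; make item jobs leaves with weight $1$ (odd) and processing time $1$; make each of the $m$ separator chains consist of an internal job $J_k^1$ with some large even weight $w$ and even processing time (I would actually give the separators processing time something convenient, perhaps even $2$, and carry the size $a_i$ in the item jobs' \emph{weights} instead — but weights of leaves must be odd, so I'd use weight $2a_i - 1$ or similar, keeping parity right; the exact choice is a routine but delicate calculation). (2) Set the threshold $Q$ by computing, in closed form, the value $wcs(S)$ of the ``ideal'' schedule in which the $k$th block of three item jobs is processed, then the $k$th separator chain, and so on; this is a finite sum over completion times and completion-times-squared. (3) Prove soundness: if $I_{3P}$ has a solution, the block schedule described in (2) achieves cost exactly $Q$, hence is feasible. (4) Prove completeness: if some valid schedule has cost $\le Q$, argue via an exchange/averaging argument that the item jobs must be grouped into $m$ consecutive runs of total processing time exactly $B$ each, separated by the separator chains — any deviation strictly increases either $wc$ (because a heavy separator job is delayed) or $cs$ (because a separator leaf is delayed), pushing the cost above $Q$. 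This ``deviation is costly'' estimate is the crux.

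The main obstacle I expect is step (4): showing that the unique (up to permutation within a block) minimizer forces an exact $B$-partition. Because the Min-WCS objective mixes a linear term ($wc$) and a quadratic term ($cs$), I cannot rely on a pure smoothing/convexity argument; I have to make the separator weights and the threshold $Q$ tight enough that any ``overloaded'' block (total item processing time $> B$) delays the corresponding separator internal job by at least one unit, incurring a penalty that provably exceeds the slack in $Q$, while the $\frac{B}{4} < a_i < \frac{B}{2}$ constraint guarantees $|P_k| = 3$ and rules out degenerate reshufflings. I would handle this by choosing the separator internal-job weight $w$ super-polynomially smaller than would break unary encoding but large enough to dominate the quadratic fluctuations — concretely, pick $w$ on the order of $m B$ so that one unit of delay on any separator costs $\Theta(mB)$, which I then check exceeds the maximum possible saving elsewhere. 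Verifying that the processing times and weights remain expressible in unary in polynomial size (so the reduction stays polynomial-time) is a bookkeeping step I would do at the end, and properties P1–P4 should then be immediate from the construction.
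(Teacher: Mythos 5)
Your overall plan---reduce from 3-Partition with even integers in unary, use separator gadgets, and set $Q$ to the cost of an ideal block schedule---matches the paper's strategy, but your concrete gadget has a gap that I do not think can be patched as described. Because of P3 you make the item jobs unit-time leaves and move the size information $a_i$ into their weights (e.g.\ $2a_i-1$), while the separator chains have fixed processing times (a heavy internal job of some constant even length plus a unit leaf). But then every processing time in the instance is a constant independent of the $a_i$, so the multiset of completion times of any idle-free schedule is determined solely by the order pattern (how many items sit between consecutive separators), not by which item occupies which slot. Given any such pattern, the only term that sees the $a_i$ is $\sum_i w_i C_i$ over the item jobs, and by a simple exchange (rearrangement) argument this is minimized by putting heavier items in earlier slots---a pure sorting condition that is completely insensitive to whether the weights can be split into groups of equal total size. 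Hence the optimal value of your constructed instance is the same whether or not $I_{3P}$ is a yes-instance, and no choice of $Q$ can make your step (4) (completeness) work: the deviation you need to penalize, an unbalanced grouping of item weights, costs nothing. Your first instinct---``total processing time equal to $B$ per block''---was the right one, but it is incompatible with the weight encoding you then adopt, since your items all have processing time one.

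The paper resolves precisely the P2/P3 tension you ran into by flipping the roles. Each $a_i$ becomes a two-job chain whose \emph{internal} job (an $a$-job) has processing time $a_i$ and weight $r a_i$ with $r=10mB(B+1)$ (all $a$-jobs share the same weight-to-length ratio, so by Smith's rule their relative order is cost-neutral), followed by a unit-time dummy leaf of weight $1$; the separators are $m-1$ single-job unit-length leaf chains with tuned weights $r-2i(B+1)+1$. An exchange argument pushes all dummy leaves to the end, so $cs$ depends only on the separators' (and final dummies') positions, and the tuned weights give $wcs(S)-Q=\sum_{i}\bigl((\Delta_i)^2+\Delta_i\bigr)$, where $\Delta_i$ is the deviation of separator $i$ from time $i(B+1)$; evenness of the $a_i$ and $B$ rules out $\Delta_i=-1$, forcing $\Delta_i=0$ and hence total $a$-job processing time exactly $B$ in each block. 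If you want to salvage your write-up, adopt this ``internal $a$-job plus unit dummy leaf'' encoding (sizes in processing times, weights only to enforce the timing); your remaining steps (the closed-form $Q$, soundness, and the unary/polynomiality and P1--P4 bookkeeping) then go through essentially as in the paper. Also note that the weights need not be small or unary-bounded---only processing times are required to be unary---so your worry about weight magnitudes is moot.
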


\begin{proof}
Given an instance $I_{3P}$ of the 3-Partition problem 
with inputs $L = \{a_1, a_2, \cdots, a_{3m}\}$ and $B$,
we construct an instance $I_{job}$ of the NonUni-Min-WCS problem.
As discussed previously, we can assume that all integers 
in $L$ and $B$ are even numbers greater than zero, 
and these numbers are in unary encoding.
Before we proceed to the reduction, we stress that we only need to consider 
schedules that has no idle time slots, 
i.e., the machine is always processing some job unless
all jobs are completed. This is because, given a schedule with idle time slots,
we can always find in polynomial time a better schedule that has no idle time slots.
The high-level idea of the reduction is the following:
\begin{enumerate}
\item For each $a_i \in L$, we create a corresponding job chain $\mathcal{C}_i$ 
consisting of two jobs. The processing time of the first job in $\mathcal{C}_i$ 
is $a_i$. We thus refer to the first job in $\mathcal{C}_i$ as an 
\textbf{$a$-job}. 
The second job (leaf job) is a \textbf{dummy job}, 
whose purpose is to make sure that
the completion time of the $a$-job is not counted in $cs(S)$. 
We will set the weights of dummy jobs to 
some relatively small numbers so that it is safe to assume that all 
these $3m$ dummy jobs are completed lastly in a feasible schedule
(Lemma~\ref{lemma: dummy}).
\item We then create $m-1$ job chains 
$\mathcal{C}^*_1, \cdots, \mathcal{C}^*_{m-1}$.
Each job chain $\mathcal{C}^*_i$ has only one job, 
which, by definition, is a leaf job. 
We refer to these jobs as the \textbf{separating jobs}.
Since dummy jobs are completed lastly,
jobs completed between separating jobs are $a$-jobs, 
and thus $m-1$ separating jobs partition the set of $a$-jobs into $m$ subsets. 
The processing time of every separating job is one.
We will fine-tune the job weights and $Q$, 
so that if $I_{job}$ has a feasible schedule,
then the completion times of these $m-1$
separating jobs under the feasible schedule must be 
$(B+1), 2(B+1), \cdots, (m-1)(B+1)$ (Lemma~\ref{lemma: separating}).
Hence, the partition yielded by separating jobs 
then corresponds to a feasible partition of $I_{3P}$.
\end{enumerate}

We are now ready to construct the corresponding instance $I_{job}$ of 
the NonUni-Min-WCS problem. Let
\begin{equation*}
r = 10mB(B+1).
\end{equation*}
For each $a_i \in L$, we create a job chain $\mathcal{C}_i$ consisting 
of two jobs.
The processing times of the first job ($a$-job) and the second job (dummy job)
in $\mathcal{C}_i$ are $a_i$ and one, respectively.
The weights of the $a$-job and the dummy job in $\mathcal{C}_i$ are $ra_i$ 
and one, respectively.
We use $J_i^a$, $w_i^a$, and $p_i^a$ to denote the 
$a$-job in $\mathcal{C}_i$, its weight, and its processing time, respectively.
Similarly, we use $J_i^d$, $w_i^d$, and $p_i^d$ to denote the dummy-job 
in $\mathcal{C}_i$, its weight, and its processing time, respectively.
Finally, we create $m-1$ single-job job chains, 
$\mathcal{C}^*_1, \mathcal{C}^*_2, \cdots, \mathcal{C}^*_{m-1}$. 
The processing time and the weight of the job in 
$\mathcal{C}^*_{i}$ are one and $r-2i(B+1)+1$, respectively.
We use $J_i$, $w_i$, and $p_i$ to denote the job (separating job) in 
$\mathcal{C}^*_i$, its weight, and its processing time, respectively.
We thus have
\begin{align*}
&\mathcal{C}_i = J_i^a \rightarrow J_i^d, & 1 \leq i \leq 3m,\\
&w_i^a = ra_i, w_i^d = 1 &1 \leq i \leq 3m,\\
&p_i^a = a_i, p_i^d = 1 &1 \leq i \leq 3m,\\
&\mathcal{C}^*_i = J_i, & 1 \leq i \leq m-1,\\
&w_i = r-2i(B+1)+1 &1 \leq i \leq m-1,\\
&p_i = 1, &1 \leq i \leq m-1.
\end{align*}

Finally, we set
\begin{align*}
&Q = \sum_{i = 1}^{3m}{(w_i^a\sum_{j = 1}^{i}{p_j^a})} 
+ \sum_{i = 1}^{m-1}{r(m-i)B}
+ \sum_{i = 1}^{m-1}{w_ii(B+1)}\\
&+\sum_{i=1}^{3m}{w_i^d(m(B+1)-1+i)}\\
&+\sum_{i = 1}^{3m}{(m(B+1)-1+i)^2} +\sum_{i = 1}^{m-1}{(i(B+1))^2}.
\end{align*}
It is easy to see that $I_{job}$ satisfies all the four properties in 
Lemma~\ref{lemma: NP1}. Furthermore, even if the processing time 
in $I_{job}$ is in 
unary encoding, the reduction can be done in polynomial time with respect 
to the size of $I_{3P}$ in unary encoding.

Our goal is to complete separating job $J_i$ at time $i(B+1)$ 
in a feasible schedule for all $1 \leq i \leq m-1$.
Throughout this proof, for all $1 \leq i \leq m-1$, 
define $\Delta^{S}_i = S(J_i) - i(B+1)$. 
In other words, $\Delta^{S}_i$ is the difference between the scheduled 
completion time of $J_i$ under $S$ and the goal.
We stress that $\Delta^{S}_i$ may be negative.
We have the following lemmas, whose proofs can be found in the appendix.

\begin{lemma}\label{lemma: as-job only-}
For any valid schedule $S$ of $I_{job}$, 
if all the dummy jobs are completed lastly in $S$ and 
$\Delta^{S}_i = 0$ holds for all $1 \leq i \leq m-1$, 
then the total weighted completion time of all $a$-jobs and separating jobs
under $S$ is $\sum_{i = 1}^{3m}{(w_i^a\sum_{j = 1}^{i}{p_j^a})} 
+ \sum_{i = 1}^{m-1}{r(m-i)B} 
+ \sum_{i = 1}^{m-1}{w_ii(B+1)}$. 
\end{lemma}

\begin{lemma} \label{lemma: dummy}
Let $S$ be any feasible schedule of $I_{job}$ such that some dummy job 
is completed before some non-dummy job (i.e., $a$-job or separating job). 
There is a feasible schedule $S'$ of $I_{job}$ 
such that all dummy jobs are completed lastly in $S'$. 
Moreover, given $S$, $S'$ can be found in polynomial time.
\end{lemma}

\begin{lemma} \label{lemma: separating}
Let $S$ be any feasible schedule of $I_{job}$ such that 
1) all dummy jobs are completed lastly in $S$, and 2) $S$ has no idle time slots.
For every separating job $J_i$, $S(J_i) = i(B+1)$ (i.e., $\Delta^{S}_i = 0$).
\end{lemma}

We first prove that if $I_{3P}$ has a feasible partition, then 
$I_{job}$ has a feasible schedule $S$. 
Let $\{P_1, P_2, \cdots, P_m\}$ be a feasible partition of $L$ in $I_{3P}$ 
such that $\sum_{a \in P_i}{a} = B$ and $|P_i| = 3$, for all $1 \leq i \leq m$.
The construction of $S$ is as follows.
First, the $3m$ dummy jobs are completed lastly.
Thus, $S$ satisfies precedence constraints.
Since the total processing time of all the $a$-jobs and the separating jobs is 
$mB+(m-1) = m(B+1)-1$, the first dummy job can be completed at $m(B+1)$.
Specifically, $S(J^d_i) = m(B+1)+i-1$, for all $1 \leq i \leq 3m$.
Separating job $J_i$ is completed at time $S(J_i) = i(B+1)$, 
for all $1 \leq i \leq m-1$. 
Hence, $\Delta^S_i  = 0$ for all $1 \leq i \leq m-1$.
Let $a_{x(i)}, a_{y(i)}$, and $a_{z(i)}$ be the three elements in $P_i$, 
for all $1 \leq i \leq m$.
Hence, $J^a_{x(i)}$, $J^a_{y(i)}$, and $J^a_{z(i)}$ are the corresponding 
$a$-jobs of $a_{x(i)}, a_{y(i)}$, and $a_{z(i)}$, respectively.
For all $1 \leq i \leq m$, we then set 
$S(J^a_{x(i)}) = (i-1)(B+1)+p^a_{x(i)}$, 
$S(J^a_{y(i)}) = S(J^a_{x(i)}) + p^a_{y(i)}$, 
and $S(J^a_{z(i)}) = S(J^a_{y(i)}) + p^a_{z(i)} = (i-1)(B+1)+B$.
It is easy to see that we never process two jobs simultaneously under $S$.
Moreover, $cs(S) = \sum_{i = 1}^{3m}{(m(B+1)-1+i)^2} 
+ \sum_{i = 1}^{m-1}{(i(B+1))^2}$, 
and the total weighted completion time of dummy jobs
is $\sum_{i=1}^{3m}{w^d_i(m(B+1)-1+i)}$. 
Thus, by Lemma~\ref{lemma: as-job only-}, $wcs(S) = Q$.

Finally, we prove that if $I_{job}$ has a feasible schedule, 
then $I_{3P}$ has a feasible partition. 
Given a feasible schedule $S$ of $I_{job}$,
by Lemma~\ref{lemma: dummy}, we can find a feasible schedule $S'$ of $I_{job}$ 
in polynomial time such that
1) all the dummy jobs are completed lastly in $S'$, 
and 2) $S'$ has no idle time slots. 
By Lemma~\ref{lemma: separating},
in $S'$, the separating jobs partition the $a$-jobs into $m$ sets,
$A_1, A_2, \cdots, A_m$, each of which has a total processing time of $B$.
Since $\frac{B}{4} < p^a_i < \frac{B}{2}$ for all $1 \leq i \leq 3m$, 
each $A_i$ has three jobs, denoted by
$J^a_{x(i)}, J^a_{y(i)}$ and $J^a_{z(i)}$.
Set $P_i = \{a_{x(i)}, a_{y(i)}, a_{z(i)}\}$ for all $1 \leq i \leq m$.
$\{P_1, P_2, \cdots, P_m\}$ is then a feasible partition 
of $I_{3P}$.
\end{proof}

Finally, we prove that the Constrained-Min-WCS problem is NP-hard by 
a reduction from the NonUni-Min-WCS problem 
where the processing time is in unary encoding.
In the reduction, for each job $J$ in the NonUni-Min-WCS problem instance, 
if its processing time, $p$, is greater than one, 
we create a job chain of length $p$ in the corresponding 
Constrained-Min-WCS problem instance.
\begin{lemma}
\label{lemma: NP2}
The Constrained-Min-WCS problem is NP-hard.
\end{lemma}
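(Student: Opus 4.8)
The plan is to reduce from the NonUni-Min-WCS problem, which is NP-hard by Lemma~\ref{lemma: NP1}; it suffices to handle the instances $I_{job}$ produced there, whose chains are either a lone separating job of processing time $1$, or an $a$-job of processing time $a_i$ and weight $ra_i$ followed by a dummy leaf job of processing time $1$ and weight $1$, and where $r = 10mB(B+1)$ strictly exceeds the weight of every separating job and every dummy job. Given such an $I_{job}$ with threshold $Q$, I would build a Constrained-Min-WCS instance $\overbar{I}$ by replacing each $a$-job $J_i^a$ (processing time $a_i$) with a chain of $a_i$ unit-processing-time jobs $J_i^{a,(1)} \to \cdots \to J_i^{a,(a_i)}$, each of weight $r$, keeping the dummy leaf appended after the last of them; separating jobs and dummy jobs are copied verbatim. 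The $a_i$ new jobs of block $i$ are internal in $\overbar{I}$ and their common weight $r = 10mB(B+1)$ is a positive even number; the leaf jobs (dummies, weight $1$; separating jobs, weight $r - 2i(B+1)+1$) all have positive odd weight. Hence $\overbar{I}$ is a valid Constrained-Min-WCS instance and the construction is polynomial. I would set $\overbar{Q} = Q - \frac{r}{2}\sum_{i=1}^{3m} a_i(a_i-1)$, an integer since each $a_i$ and $r$ are even.

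The rationale for the weights is that a block $J_i^{a,(1)},\dots,J_i^{a,(a_i)}$ run in $a_i$ consecutive slots ending at time $t$ contributes $\sum_{k=1}^{a_i} r\,(t-a_i+k) = ra_i\,t - \frac{r}{2}a_i(a_i-1)$ to $\overbar{wcs}$, i.e.\ exactly the contribution $ra_i\,t$ of $J_i^a$ up to the instance-wide constant $D := \frac{r}{2}\sum_i a_i(a_i-1)$. For the easy direction, given a feasible schedule $S$ of $I_{job}$, placing each block in the $a_i$ slots formerly occupied by $J_i^a$ and leaving every separating and dummy job where it was yields a valid schedule $S'$ of $\overbar{I}$: splitting creates no new leaf job and preserves the total processing time, so $cs$ is unchanged while $wc$ drops by exactly $D$; hence $\overbar{wcs}(S') = wcs(S) - D \le Q - D = \overbar{Q}$.

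For the converse I would take an optimal schedule $S'$ of $\overbar{I}$ and show that $\overbar{wcs}(S') \le \overbar{Q}$ forces $I_{job}$ to admit a schedule of value at most $Q$. Optimality first rules out idle slots, so positions coincide with completion times. The key point is that every block job has weight $r$, strictly larger than the weight of any other job, and that the first job of each block has no predecessor; an interchange argument then shows that in $S'$ the block jobs occupy exactly the first $mB$ slots — if a block job were immediately preceded by a non-block job while its in-block predecessor (if any) lay further back, transposing the two would strictly decrease $\overbar{wcs}$, so inductively position $1$ and then positions $1,\dots,mB$ must all hold block jobs. Within these $mB$ slots every job has weight $r$, so any reordering that respects the intra-block precedences leaves $\overbar{wcs}$ unchanged; I reorder so that each block occupies $a_i$ consecutive slots. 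Merging each block back into one processing interval of length $a_i$ yields a valid schedule $S$ of $I_{job}$ with $wcs(S) = \overbar{wcs}(S') + D \le \overbar{Q} + D = Q$.

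The main obstacle is exactly this last interchange argument. It is to make it go through — first pinning the block jobs to a prefix of the schedule, then rearranging them freely into contiguous blocks — that every block job is assigned the identical large weight $r$, rather than, say, concentrating the weight $ra_i$ of $J_i^a$ on a single job of its block: a split leaving some block job lighter than a separating job would allow that job to be displaced, so $\overbar{wcs}$ could dip below $\overbar{Q}$ even when $I_{job}$ has no good schedule, and the reduction would fail. The remaining work is routine bookkeeping of the constant $D$ and verifying parity of the weights, both of which are immediate from the construction.
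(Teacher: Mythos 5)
There is a genuine gap here, and it is not just that the interchange step is unproven---the reduction itself is unsound. First, the interchange claim: swapping a separating job $J_i$ (a \emph{leaf} of weight $r-2i(B+1)+1$) from slot $t$ with a block job in slot $t+1$ changes the objective by $(w_i - r) + \bigl((t+1)^2 - t^2\bigr) = 2\bigl((t+1) - i(B+1)\bigr)$, because delaying a leaf job also increases the $cs$ term by $2t+1$, which is comparable to the weight gap $r-w_i=2i(B+1)-1$. The swap is therefore \emph{worsening} as soon as $J_i$ sits beyond slot $i(B+1)$, so optimal schedules do not push the block jobs into a prefix; on the contrary, the schedules witnessing yes-instances interleave the separating jobs at completion times $i(B+1)$ among the block jobs, directly contradicting your ``first $mB$ slots'' claim.

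The deeper defect comes from giving every split piece the identical weight $r$: since all block jobs are internal and equally weighted, the value of any feasible schedule of your instance depends only on the positions of the leaf jobs (the block jobs contribute $r$ times the sum of the complementary slots). Hence, whether or not the 3-Partition instance is solvable, the schedule that places $J_i$ at slot $i(B+1)$, the dummies in the last $3m$ slots, and the unit block pieces in the remaining slots is feasible---unit pieces can fill \emph{any} set of slots, unlike the indivisible $a$-jobs, and all block jobs precede all dummies---and a short computation shows its value is exactly $\overbar{Q}$; consistently, your $\overbar{Q}=Q-\tfrac{r}{2}\sum_i a_i(a_i-1)$ no longer depends on the individual $a_i$ at all (the $\sum_i a_i^2$ terms cancel), only on $m$ and $B$. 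So your construction maps no-instances of 3-Partition to yes-instances and cannot establish NP-hardness; the parity/rigidity that forces $\Delta^S_i=0$ in Lemma~\ref{lemma: NP1} is destroyed by the uniform-weight splitting. The paper distributes the weights the opposite way: the first $p_i^1-1$ pieces get the \emph{minimum} weight $2$ and the original weight (plus $2$) stays on the last piece, so the objective still records where each heavy piece completes; the cost of splitting is the instance-wide constant $2(1+2+\cdots+\sum_i|\overbar{\mathcal{C}}_i|)$, and Lemma~\ref{lemma: conti} shows the light, non-leaf pieces can be made contiguous without increasing the objective---exactly the step your uniform heavy weights were intended to shortcut, but which they instead invalidate.
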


\begin{proof}
Given an instance $I_{job}$ of the NonUni-Min-WCS problem, 
we will construct a corresponding instance $\overbar{I}_{job}$ of 
the Constrained-Min-WCS problem.
By Lemma~\ref{lemma: NP1}, we can assume that
the processing time of $I_{job}$ is in unary encoding, 
and $I_{job}$ satisfies the properties specified in Lemma~\ref{lemma: NP1}.
In this proof, we use $J_i^j$ and $\overbar{J}_i^j$ 
to denote the $j$th job in the $i$th job chain in $I_{job}$ and 
$\overbar{I}_{job}$, respectively. 
Similarly, we use $w_i^j$ and $\overbar{w}_i^j$ 
to denote the weights of $J_i^j$ and $\overbar{J}_i^j$, respectively.

The high-level idea is to replace a job of processing time $p > 1$ with 
a series of $p$ jobs, each of which has unit processing time.
For each job chain $\mathcal{C}_i$ in $I_{job}$, 
we create a job chain $\overbar{\mathcal{C}}_i$.
By P4, $|\mathcal{C}_i| = 1$ or $2$. 
Let $n$ be the number of job chains in $I_{job}$.
Let $n'$ be the number of job chains that have two jobs in $I_{job}$.
Without loss of generality, 
assume that each of the first $n'$ job chains in $I_{job}$ has two jobs, 
and each of the last $n-n'$ job chains in $I_{job}$ has exactly one job. 
The construction of $\overbar{\mathcal{C}}_i$ 
is divided into the following two cases.

\noindent Case 1: $|\mathcal{C}_i| = 2$, i.e., $i \in \{1, 2, \cdots, n'\}$.
Let $p_i^1$ be the processing time of the first job in $\mathcal{C}_i$.
By P1, $p_i^1$ is at least two.
By P3, the processing time of the second job in $\mathcal{C}_i$ is one. 
In $\overbar{I}_{job}$, we create a corresponding job chain
$\overbar{\mathcal{C}}_i$ that has $p_i^1+1$ jobs.
In $\overbar{\mathcal{C}}_i$, the first $p_i^1$ jobs 
simulate the first job in $\mathcal{C}_i$, 
and the last job simulates the last job in $\mathcal{C}_i$.
The weights of the first $p_i^1-1$ jobs in 
$\overbar{\mathcal{C}}_i$ are two. 
$\overbar{w}_i^{p_i^1} = w_i^1 + 2$ and 
$\overbar{w}_i^{p_i^1+1} = w_i^2 + 2$.
This step can be done in polynomial time 
(with respect to the size of $I_{job}$) because the processing time 
of the NonUni-Min-WCS problem is encoded in unary.

\noindent Case 2: $|\mathcal{C}_i| = 1$, 
i.e., $i \in \{n'+1, n'+2, \cdots, n\}$. 
In $\overbar{I}_{job}$, we create a corresponding job chain
$\overbar{\mathcal{C}}_i$ that has only one job $\overbar{J}_i^1$, 
and we set $\overbar{w}_i^1$ to $w_i^1+2$. 

Finally, we set $\overbar{Q} = Q 
+ 2(1 + 2 + \cdots + \sum_{i=1}^{n}{|\overbar{\mathcal{C}}_i|})$.
It is easy to see that this reduction can be done in polynomial time
(with respect to the size of $I_{job}$ where the processing time is 
in unary encoding).

We next prove that $\overbar{I}_{job}$ is a valid Constrained-Min-WCS problem.
In the reduction, for each leaf job $J$ with weight $w$ in $I_{job}$, 
we create a leaf job with weight $w+2$ in $\overbar{I}_{job}$. 
By P2, all the weights of leaf jobs in $\overbar{I}_{job}$ are odd numbers 
greater than zero. For each internal job $J$ with weight $w$ in
$I_{job}$, we create internal jobs with weights $w+2$ or two 
in $\overbar{I}_{job}$. By P1, all the weights of internal jobs in 
$\overbar{I}_{job}$ are even numbers greater than zero. Hence,
$\overbar{I}_{job}$ is a valid Constrained-Min-WCS problem.

It is then sufficient to show that $I_{job}$ has a feasible schedule 
$S$ if and only if $\overbar{I}_{job}$ 
has a feasible schedule $\overbar{S}$.
We first prove the ``only if'' direction. 
The construction of $\overbar{S}$ is straightforward.
Let $J_i^{leaf}$ and $\overbar{J}_i^{leaf}$ be the 
leaf jobs in $\mathcal{C}_i$ and $\overbar{\mathcal{C}}_i$, respectively.
For each leaf job $\overbar{J}_i^{leaf}$ in $\overbar{I}_{job}$, 
$\overbar{S}(\overbar{J}_i^{leaf}) = S(J_i^{leaf})$.
For each internal job $\overbar{J}_i^{p_i^1}$ 
in $\overbar{I}_{job}$, 
$\overbar{S}(\overbar{J}_i^{p_i^1}) = S(J_i^1)$. 
Moreover, 
$\overbar{S}(\overbar{J}_i^{p_i^1})
=\overbar{S}(\overbar{J}_i^{p_i^1-1})+1 
=\overbar{S}(\overbar{J}_i^{p_i^1-2})+2 
=\cdots = \overbar{S}(\overbar{J}_i^1) +(p^1_i-1)$.
Since $S$ is a valid schedule for $I_{job}$, 
$\overbar{S}$ is a valid schedule for $\overbar{I}_{job}$.
It is then sufficient to show that 
$\overbar{wcs}(\overbar{S}) \leq \overbar{Q}$.
Observe that $S$ and $\overbar{S}$ have the same total 
completion time squared of all leaf jobs.
Let $w_i^{leaf}$ and $\overbar{w}_i^{leaf}$ be the 
weights of $J_i^{leaf}$ and $\overbar{J}_i^{leaf}$, respectively.
Compared to $S$, $\overbar{S}$ increases the total weighted 
completion time by 
\begin{align*}
&\sum_{i=1}^{n}
{\overbar{S}(\overbar{J}_i^{leaf})(\overbar{w}_i^{leaf}-w_i^{leaf})}
+\sum_{i=1}^{n'}
{\overbar{S}(\overbar{J}_i^{p_i^1})(\overbar{w}_i^{p_i^1}-w_i^1)}
+\sum_{i=1}^{n'}
{\sum_{j=1}^{p_i^1 -1}{2\overbar{S}(\overbar{J}_i^j)}}\\
&=\sum_{i=1}^{n}
{2\overbar{S}(\overbar{J}_i^{leaf})}
+\sum_{i=1}^{n'}
{2\overbar{S}(\overbar{J}_i^{p_i^1})}
+\sum_{i=1}^{n'}
{\sum_{j=1}^{p_i^1 -1}{2\overbar{S}(\overbar{J}_i^j)}}\\
&= 2(1 + 2 + \cdots + \sum_{i=1}^{n}{|\overbar{\mathcal{C}}_i|}), 
\end{align*}
where the last equality holds since $\overbar{S}$ is a one-to-one and 
onto mapping from the set of all jobs in $\overbar{I}_{job}$ to 
$\{1, 2, \cdots, \sum_{i=1}^{n}{|\overbar{\mathcal{C}}_i|}\}$.
Since $wcs(S) \leq Q$, 
$\overbar{wcs}(\overbar{S}) \leq Q + 2(1 + 2 + \cdots + 
\sum_{i=1}^{n}{|\overbar{\mathcal{C}}_i|}) = \overbar{Q}$.

Finally, we prove the ``if'' direction.
We first state the following lemma, whose proof can be found in the appendix.
\begin{lemma}\label{lemma: conti}
Let $\overbar{S}$ be any feasible schedule of $\overbar{I}_{job}$. 
There exists a feasible schedule $\overbar{S'}$ of $\overbar{I}_{job}$ 
such that $\overbar{S'}(\overbar{J}_i^{p_i^1}) 
= \overbar{S'}(\overbar{J}_i^1) + (p_i^1 - 1)$ for all $1 \leq i \leq n'$.
In other words,  
$\overbar{J}_i^1, \overbar{J}_i^2, \cdots, \overbar{J}_i^{p_i^1}$ 
are processed contiguously under $\overbar{S'}$ for all $1 \leq i \leq n'$.
Moreover, given $\overbar{S}$, $\overbar{S'}$ can be 
found in polynomial time.
\end{lemma}
By Lemma~\ref{lemma: conti}, given a feasible schedule $\overbar{S}$ for $\overbar{I}_{job}$,
we can find in polynomial time 
a feasible schedule $\overbar{S'}$ for $\overbar{I}_{job}$ such that 
$\overbar{J}_i^1, \overbar{J}_i^2, \cdots, \overbar{J}_i^{p_i^1}$ 
are processed contiguously under $\overbar{S'}$ for all $1 \leq i \leq n'$.
We can then view these $p_i^1$ jobs as a single job with process time $p_i^1$.
Specifically, for each leaf job $J_i^{leaf}$ in $I_{job}$, 
we set $S(J_i^{leaf}) = \overbar{S'}(\overbar{J}_i^{leaf})$. 
For each internal job $J_i^1$ in $I_{job}$, we set 
$S(J_i^1) = \overbar{S'}(\overbar{J}_i^{p_i^1})$.
Since $\overbar{S'}$ is a valid schedule for $\overbar{I}_{job}$,
$S$ is a valid schedule for $I_{job}$.
Finally, by the same argument in the proof of the ``only if'' direction,
$wcs(S) \leq Q$.
\end{proof}

By Lemma~\ref{lemma: NP0} and Lemma~\ref{lemma: NP2}, we have the following 
theorem.
\begin{thrm} \label{thrm: NP}
The Min-Age problem is NP-hard.
\end{thrm}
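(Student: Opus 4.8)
The plan is to prove Theorem~\ref{thrm: NP} by assembling the two reduction lemmas that precede it. The overall strategy is a chain of polynomial-time reductions:
\[
\text{3-Partition} \;\leq_p\; \text{NonUni-Min-WCS} \;\leq_p\; \text{Constrained-Min-WCS} \;\leq_p\; \text{Min-Age},
\]
where the first link is Lemma~\ref{lemma: NP1}, the second is Lemma~\ref{lemma: NP2}, and the third is Lemma~\ref{lemma: NP0}. Since the 3-Partition problem is (unary) NP-hard~\cite{Garey:1990:CIG:574848}, composing these reductions shows that the Min-Age problem is NP-hard. First I would invoke Lemma~\ref{lemma: NP2} to conclude that the Constrained-Min-WCS problem is NP-hard; then I would invoke Lemma~\ref{lemma: NP0}, which states that NP-hardness of the Constrained-Min-WCS problem implies NP-hardness of the Min-Age problem. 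The theorem follows immediately.

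Concretely, the proof is just the two-line observation: by Lemma~\ref{lemma: NP2} the Constrained-Min-WCS problem is NP-hard, and hence by Lemma~\ref{lemma: NP0} the Min-Age problem is NP-hard. The one subtlety worth flagging is that the reductions pass through the \emph{unary} encoding of 3-Partition, which is essential because the processing times created in Lemma~\ref{lemma: NP1} (the $a$-jobs of processing time $a_i$) and the chain lengths created in Lemma~\ref{lemma: NP2} are proportional to the input integers rather than to their logarithms; so the composed reduction runs in time polynomial in the unary size of the 3-Partition instance, which is exactly what unary NP-hardness requires. One should also note this establishes NP-hardness (not NP-completeness) of the optimization problem; the decision-version framing set up before Lemma~\ref{lemma: NP1} makes the reduction chain formally clean.

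The only place where real work happens — and hence the main obstacle — is entirely contained in the already-stated Lemma~\ref{lemma: NP1}, specifically in its supporting Lemmas~\ref{lemma: dummy} and~\ref{lemma: separating}: one must choose the multiplier $r = 10mB(B+1)$ and the separating-job weights $w_i = r - 2i(B+1)+1$ so large and so finely tuned that any feasible schedule is forced to place dummy jobs last and to complete separating job $J_i$ at exactly time $i(B+1)$, thereby encoding a valid 3-partition. Since I am permitted to assume all results stated earlier in the excerpt, I would treat Lemmas~\ref{lemma: NP0},~\ref{lemma: NP1}, and~\ref{lemma: NP2} as given and let the proof of Theorem~\ref{thrm: NP} be the short composition argument above; the heavy lifting has already been discharged in those lemmas.
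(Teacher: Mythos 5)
Your proposal is correct and matches the paper's own argument, which derives Theorem~\ref{thrm: NP} exactly by combining Lemma~\ref{lemma: NP2} (NP-hardness of the Constrained-Min-WCS problem, via Lemma~\ref{lemma: NP1} and 3-Partition) with Lemma~\ref{lemma: NP0}. Your remarks about the unary encoding and where the real work lies are accurate but not needed beyond the two-lemma composition.
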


\begin{figure}[t]
    \includegraphics[width=13 cm]{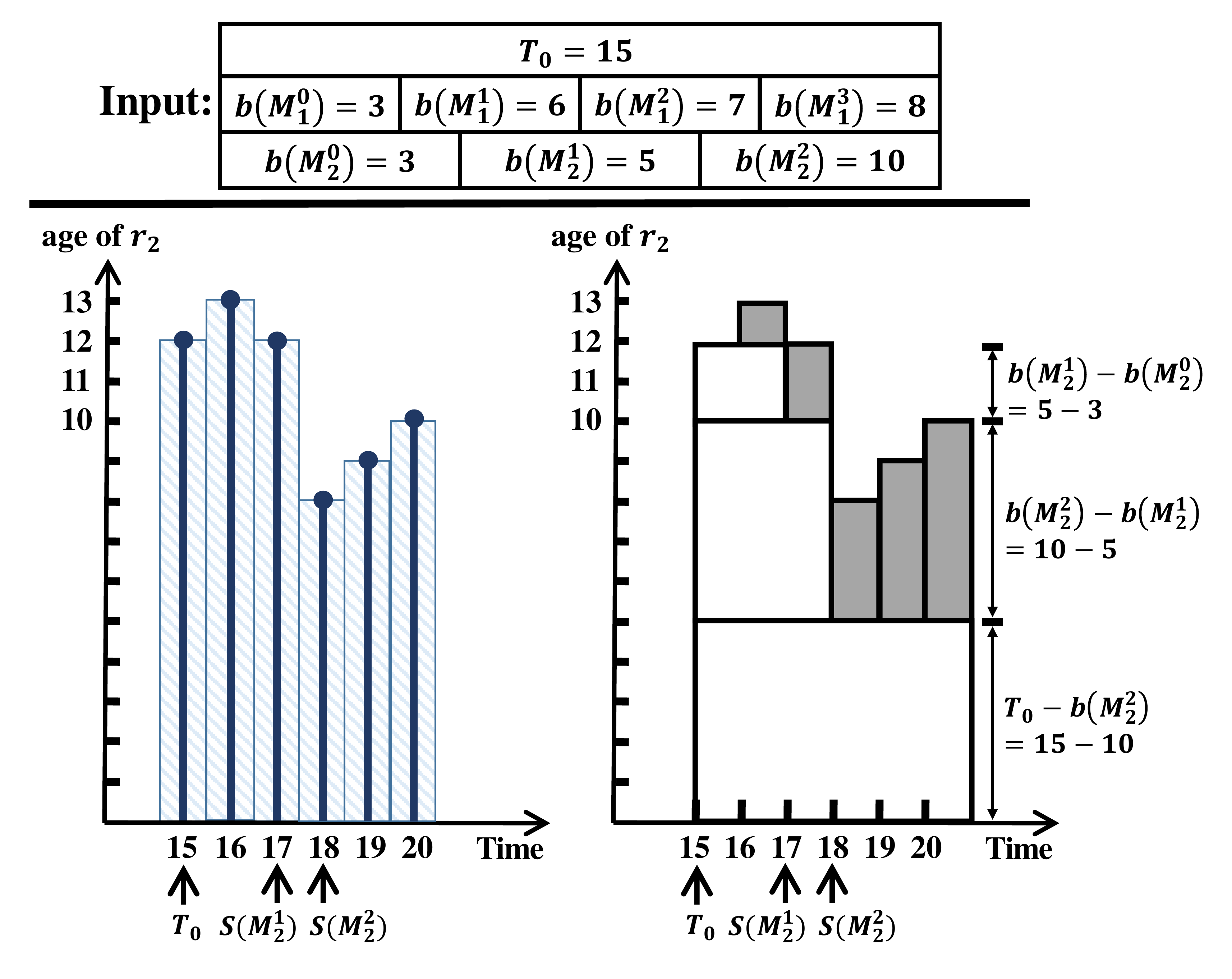}
    \caption{A geometric interpretation of a special receiver's age.}
    \label{fig: discussion}
\end{figure}

\section{Concluding Remarks}\label{sec: conclusion}
In this paper, we assume that the age of a receiver 
$r_i$ becomes zero once $r_i$ receives all messages in $\mathcal{M}_i$. 
One of the rationales behind the design is to make the scheduling 
algorithm transmit the last message for each sender-receiver pair 
as early as possible. 
Nevertheless, we can solve the problem even if the age of some receiver $r_i$ 
is not set to zero after receiving all messages in $\mathcal{M}_i$. 
We call such a receiver a \textbf{special} receiver.
Hence, for a special receiver $r_i$, 
its age is always the age of the most recently received message. 
To solve the Min-Age problem with special receivers, 
we adjust the geometric interpretation given 
in Section~\ref{sec: trans} accordingly. 
An example of the new geometric interpretation is shown in Fig.~\ref{fig: discussion}, 
where $r_2$ is a special receiver.
Recall that $T = |\mathcal{M}_1|+\cdots+|\mathcal{M}_n|$.
Compared to Fig.~\ref{fig: ex_trans}, we have
three critical observations for each special receiver $r_i$: 
\begin{enumerate}
\item The number of white rectangles is increased by one, 
and the area of the bottom white rectangle is
$(T+1)(T_0-b(M_i^{|\mathcal{M}_i|}))$, 
which is fixed regardless of the schedule.
\item The height of the $|\mathcal{M}_i|$th white rectangle becomes 
$b(M_i^{|\mathcal{M}_i|})-b(M_i^{|\mathcal{M}_i|-1})$. 
Therefore, we need to modify the job weight setting in the transformation
accordingly.
\item The total area of the gray rectangles is
$1+2+\cdots+T$, which is fixed regardless of the schedule.
\end{enumerate}
We thus update the objective function of the Min-WCS problem accordingly. 
Specifically, $wcs(S) = wc(S)+cs'(S)+ C$,
where $C$ is a non-negative number specified in the input. 
Moreover, $cs'(S) = \sum_{i=1}^{n_{chain}}
{(I_i\cdot S(J_i^{|\mathcal{C}_i|}) \cdot S(J_i^{|\mathcal{C}_i|}))}$, 
where $I_i$ is an input that can be zero or one.
In the problem transformation, if $r_i$ is a special receiver, 
we set $I_i = 0$. Otherwise, we set $I_i = 1$. 
Finally, we use $C$ to capture the total fixed rectangle area for the special receivers.
To minimize $cs'(S)$, all job chains $\mathcal{C}_i$ with $I_i = 0$ are completed 
lastly in the schedule, and all job chains $\mathcal{C}_i$ with $I_i = 1$ are scheduled 
by Algorithm~\ref{algo: CS}. Hence, we can still compute two optimal schedules that 
minimize $wc$ and $cs'$, respectively, and then 
apply Algorithm~\ref{algo: WCS} to approximate the modified Min-WCS problem. 
Since the constant $C$ in $wcs$ is non-negative, 
the approximation ratio cannot be worse than that in Theorem~\ref{thrm: appox}. 

\bibliographystyle{IEEEtran}

\appendix
\subsection{Proof of Proposition~\ref{prop: wc}}
Consider an instance $I$ of the Min-WCS problem that has $n-1$ single-job job chains.
Each job in these single-job job chains has weight one.
The last job chain in $I$ has $L$ jobs, and the value of $L$ will be determined later.
Each of these $L$ jobs has weight two.
Hence, these $L$ jobs are completed first in $S^*_{wc}$.
Thus, $cs(S^*_{wc}) = L^2+(L+1)^2+\cdots+(L+n-1)^2$.
Consider another feasible schedule $S'$ in which these $L$ jobs are completed lastly.
Thus, $cs(S') = 1^2+2^2+\cdots+(n-1)^2+(L+n-1)^2$ and 
$wc(S') = 1+2+\cdots+(n-1)+2[n+(n+1)+\cdots+(n+L-1)] \leq 2(L+n-1)^2$.
We set $L$ to a positive integer such that $1^2+2^2+\cdots+(n-1)^2 \leq (L+n-1)^2$.
Thus, $cs(S') \leq 2(L+n-1)^2$.
On the other hand, we have
\begin{align*}
cs(S^*_{wc}) &= \frac{(L+n-1)(L+n)(2L+2n-1)}{6}-\frac{(L-1)(L)(2L-1)}{6}\\
             &\geq \frac{(L+n-1)(L+n)(2L+2n-1)}{6}-\frac{(L+n-1)(L+n)(2L+n-1)}{6}\\
             &= \frac{n(L+n-1)(L+n)}{6} \geq \frac{n(L+n-1)^2}{6}.
\end{align*}
Thus, $\displaystyle \frac{wcs(S^*_{wc})}{wcs(S')} \geq 
\frac{cs(S^*_{wc})}{cs(S')+wc(S')} \geq \frac{\frac{n}{6}(L+n-1)^2}{4(L+n-1)^2} 
= \frac{n}{24}$.

\subsection{Proof of Proposition~\ref{prop: cs}}
Consider an instance $I$ of the Min-WCS problem that has $n-1$ single-job job chains.
Each job in these single-job job chains has weight one.
The last job chain $\mathcal{C}_{l}$ in $I$ has two jobs, where the last job has weight one and 
the first job $J_{h}$ has an extremely heavy weight $w_{h}$
such that, for any two feasible schedules $S_1$ and $S_2$ of $I$,
we have $\frac{cs(S_1)}{wc(S_1)}\approx 0$, $\frac{cs(S_2)}{wc(S_2)}\approx 0$, 
and $\frac{wc(S_1)}{wc(S_2)}
\approx \frac{w_{h} \cdot S_1(J_{h})}{w_{h} \cdot S_2(J_{h})} 
= \frac{S_1(J_{h})}{S_2(J_{h})}$.
Thus, for any two feasible schedules $S_1$ and $S_2$ of $I$,
$\frac{wc(S_1)+cs(S_1)}{wc(S_2)+cs(S_2)} = 
\frac{1+\frac{cs(S_1)}{wc(S_1)}}{1+\frac{cs(S_2)}{wc(S_2)}}\cdot \frac{wc(S_1)}{wc(S_2)}
\approx \frac{wc(S_1)}{wc(S_2)} \approx \frac{S_1(J_{h})}{S_2(J_{h})}$.
In $S^*_{cs}$, $\mathcal{C}_{l}$ is completed lastly 
and thus $S^*_{cs}(J_{h}) = n$.
Consider a feasible schedule $S'$ in which $\mathcal{C}_{l}$ is completed first 
and thus $S'(J_{h}) = 1$. Hence, $\frac{wcs(S^*_{cs})}{wcs(S')} \approx n$.

\subsection{Proof of Lemma~\ref{lemma: as-job only-}}
Let $I^{as}_{job}$ be an instance obtained by removing all the dummy 
jobs from $I_{job}$. 
Hence, $I^{as}_{job}$ consists of $a$-jobs and separating jobs.
It suffices to prove the following lemma, which will be 
used again in the proof of Lemma~\ref{lemma: separating}.

\begin{lemma}\label{lemma: as-job only}
Let $S$ be any valid schedule of $I^{as}_{job}$ such that 
$S(J_1) < S(J_2) < \cdots < S(J_{m-1})$.
Then, $wc(S) = \sum_{i = 1}^{3m}{(w_i^a\sum_{j = 1}^{i}{p_j^a})} 
+ \sum_{i = 1}^{m-1}{r((m-i)B-\Delta^S_i)} 
+ \sum_{i = 1}^{m-1}{w_i(i(B+1)+\Delta^S_i)}$. 
\end{lemma}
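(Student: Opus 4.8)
The plan is to exploit two structural features of $I^{as}_{job}$: once the dummy jobs are deleted every chain is a single job, so there are no precedence constraints, and each $a$-job $J_i^a$ carries weight $w_i^a=ra_i$ equal to $r$ times its processing time $p_i^a=a_i$. Since we restrict attention to idle-free schedules, the $4m-1$ jobs are processed back to back, and the hypothesis $S(J_1)<S(J_2)<\cdots<S(J_{m-1})$ means the $m-1$ separating jobs split the time axis into $m$ consecutive blocks $\mathcal{B}_0,\mathcal{B}_1,\dots,\mathcal{B}_{m-1}$ of $a$-jobs, with $\mathcal{B}_0$ before $J_1$, $\mathcal{B}_k$ between $J_k$ and $J_{k+1}$ for $1\le k\le m-2$, and $\mathcal{B}_{m-1}$ after $J_{m-1}$. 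Writing $P_k=\sum_{J_i^a\in\mathcal{B}_k}p_i^a$ for the length of $\mathcal{B}_k$, so that $\sum_{k=0}^{m-1}P_k=mB$, I would first record $S(J_k)=\bigl(\sum_{l=0}^{k-1}P_l\bigr)+k$, hence $\Delta^S_k=S(J_k)-k(B+1)=\sum_{l=0}^{k-1}P_l-kB$, and that block $\mathcal{B}_k$ occupies the $P_k$ slots immediately following time $S(J_k)$ (following time $0$ when $k=0$).

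Next I would split $wc(S)=\sum_{i=1}^{m-1}w_iS(J_i)+\sum_{i=1}^{3m}w_i^aS(J_i^a)$. The first sum equals $\sum_{i=1}^{m-1}w_i\bigl(i(B+1)+\Delta^S_i\bigr)$ by the definition of $\Delta^S_i$, which is precisely the last term of the claimed formula. For the $a$-job part, the key observation is that the total weighted completion time of the $a$-jobs inside one block is independent of how they are ordered within that block: if jobs of processing times $q_1,\dots,q_\ell$ with weights $rq_1,\dots,rq_\ell$ fill the $P=\sum_jq_j$ slots right after time $t$, then their total weighted completion time, expanded and simplified via $\sum_{j<j'}q_jq_{j'}=\tfrac12\bigl(P^2-\sum_jq_j^2\bigr)$, equals $r\bigl(tP+\tfrac12(P^2+\sum_jq_j^2)\bigr)$, a symmetric function of the $q_j$. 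Applying this to each $\mathcal{B}_k$ with $t=S(J_k)$ (and $t=0$ for $\mathcal{B}_0$) and summing, the per-block terms $\sum_jq_j^2$ assemble into $\sum_{i=1}^{3m}(p_i^a)^2$ because the blocks partition the $a$-jobs, so that $\sum_{i=1}^{3m}w_i^aS(J_i^a)=r\bigl[\sum_{k=1}^{m-1}S(J_k)P_k+\tfrac12\sum_{k=0}^{m-1}P_k^2+\tfrac12\sum_{i=1}^{3m}(p_i^a)^2\bigr]$.

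Finishing is then routine algebra. Using $\sum_{0\le l<k\le m-1}P_lP_k+\tfrac12\sum_{k=0}^{m-1}P_k^2=\tfrac12(mB)^2$ together with $S(J_k)P_k=\bigl(\sum_{l=0}^{k-1}P_l+k\bigr)P_k$, the $a$-job part collapses to $r\bigl[\tfrac12(mB)^2+\sum_{k=1}^{m-1}kP_k+\tfrac12\sum_{i=1}^{3m}(p_i^a)^2\bigr]$, and since $\tfrac{r}{2}\bigl((mB)^2+\sum_{i=1}^{3m}(p_i^a)^2\bigr)=r\sum_{i=1}^{3m}a_i\sum_{j\le i}a_j=\sum_{i=1}^{3m}\bigl(w_i^a\sum_{j=1}^{i}p_j^a\bigr)$ (the first term of the claimed formula), it only remains to match $r\sum_{k=1}^{m-1}kP_k$ with $\sum_{i=1}^{m-1}r\bigl((m-i)B-\Delta^S_i\bigr)$. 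That identity drops out of the formula for $\Delta^S_i$ above, since $(m-i)B-\Delta^S_i=mB-\sum_{l=0}^{i-1}P_l=\sum_{l=i}^{m-1}P_l$, and swapping the order of summation in $\sum_{i=1}^{m-1}\sum_{l=i}^{m-1}P_l$ gives $\sum_{l=1}^{m-1}lP_l$. I expect the one genuinely non-obvious ingredient to be the within-block order-independence — equivalently, the fact that the $a$-job contribution to $wc(S)$ depends only on the block lengths $P_k$ and not on which $a$-jobs land in which block; everything after that is careful bookkeeping with the $\Delta^S_i$ substitutions and the index ranges.
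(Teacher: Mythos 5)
Your proof is correct; I checked the within-block identity and the bookkeeping with the $P_k$ and $\Delta^S_i$, and everything matches the claimed formula (including the implicit restriction to idle-free schedules, which is consistent with the paper's notion of a valid schedule and its earlier remark that only such schedules need be considered). Your route is, however, organized differently from the paper's. The paper argues in two steps: first it invokes Smith's rule to show that every schedule of the $a$-jobs alone has the same total weighted completion time, equal to $\sum_{i=1}^{3m} w_i^a \sum_{j\le i} p_j^a$ (Lemma~\ref{lemma: a-job only}); then it builds an arbitrary schedule of $I^{as}_{job}$ by inserting the separating jobs $J_1,\dots,J_{m-1}$ one at a time, observing that inserting $J_i$ at time $i(B+1)+\Delta^S_i$ shifts every later $a$-job by one unit, which costs exactly $r$ times the remaining $a$-processing time, $r\bigl((m-i)B-\Delta^S_i\bigr)$, plus $J_i$'s own term $w_i\bigl(i(B+1)+\Delta^S_i\bigr)$. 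You instead compute $wc(S)$ in closed form via the block lengths $P_k$, proving the crucial order-independence of the $a$-job contribution directly from the symmetry of $r\bigl(tP+\tfrac12(P^2+\sum_j q_j^2)\bigr)$ rather than citing Smith's rule, and then massaging the sums with the substitution $\Delta^S_i=\sum_{l<i}P_l-iB$. The paper's incremental accounting keeps the algebra minimal and makes the origin of each term in the formula transparent; your version is more computational but fully self-contained and makes explicit that the $a$-job cost depends only on the block lengths, which is the same structural fact the 3-Partition reduction exploits.
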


Let $I^a_{job}$ be an instance obtained by removing 
$\mathcal{C}^*_1, \mathcal{C}^*_2, \cdots, \mathcal{C}^*_{m-1}$ and 
all the dummy jobs from $I_{job}$. Hence, $I^{a}_{job}$ consists of $a$-jobs.
Before we prove Lemma~\ref{lemma: as-job only},
We first prove the following lemma.

\begin{lemma}\label{lemma: a-job only}
Every valid schedule $S$ of $I^a_{job}$ has the 
same total weighted completion time, 
$wc(S) = \sum_{i = 1}^{3m}{(w_i^a\sum_{j = 1}^{i}{p_j^a})}$. 
\end{lemma}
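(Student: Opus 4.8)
The plan is to exploit the defining feature of the $a$-jobs, namely that each $a$-job $J_i^a$ has weight $w_i^a = r a_i$ equal to $r$ times its processing time $p_i^a = a_i$. Since $I^a_{job}$ is obtained from $I_{job}$ by deleting every dummy job and every separating job, each of its $3m$ job chains consists of the single job $J_i^a$, so there are no precedence constraints to respect; moreover, as noted above, it suffices to consider schedules with no idle time slots. Hence a valid schedule of $I^a_{job}$ amounts to a permutation $\pi$ of $\{1,\dots,3m\}$, and the $a$-job processed in the $k$th position completes at time $\sum_{l=1}^{k} p^a_{\pi(l)}$.

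First I would show that $wc(S)$ is unchanged under swapping two jobs that are processed in consecutive positions. Suppose $J^a_{i}$ and $J^a_{i'}$ occupy positions $k$ and $k+1$ while all other jobs stay put; since $p_i^a + p_{i'}^a$ is unaffected by the swap, the completion times of all jobs outside these two positions are preserved, and a direct expansion shows that the swap changes $wc(S)$ by exactly $w_i^a p_{i'}^a - w_{i'}^a p_i^a$. Substituting $w_i^a = r p_i^a$ and $w_{i'}^a = r p_{i'}^a$ makes this difference zero. Because every permutation of the $3m$ $a$-jobs is reachable from any other by a sequence of adjacent transpositions, and there are no precedence constraints, all valid schedules of $I^a_{job}$ share the same total weighted completion time.

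It then remains to evaluate this common value on one convenient schedule, e.g.\ the one that processes $J_1^a, J_2^a, \dots, J_{3m}^a$ in that order; there the completion time of $J_i^a$ is $\sum_{j=1}^{i} p_j^a$, so
\[
wc(S) = \sum_{i=1}^{3m}\Bigl(w_i^a\sum_{j=1}^{i} p_j^a\Bigr),
\]
which is the claimed formula. Equivalently, writing the general schedule as a permutation $\pi$ and using $w_i^a = r p_i^a$, one has $wc(S) = r\sum_{1\le l\le k\le 3m} p^a_{\pi(k)} p^a_{\pi(l)} = \tfrac{r}{2}\bigl[(\sum_i p_i^a)^2 + \sum_i (p_i^a)^2\bigr]$, which is manifestly independent of $\pi$ and can be matched to the displayed expression.

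I do not expect a genuine obstacle here. The only point that deserves explicit care is recording that a valid schedule of $I^a_{job}$ carries no precedence constraints (each chain has a single job) and that we restrict attention to idle-time-free schedules, since the statement would otherwise be false; the remainder is the routine interchange computation above.
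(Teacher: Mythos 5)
Your proof is correct and rests on exactly the same observation as the paper's: every $a$-job has weight-to-processing-time ratio $r$, so the total weighted completion time is order-independent. The only difference is that the paper cites Smith's rule for this invariance while you rederive it via the adjacent-interchange computation, which is a self-contained but essentially identical route.
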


\begin{proof}
All the jobs in $I^a_{job}$ are $a$-jobs, and all of them have the same 
ratio of weight to processing time.
By Smith's rule~\cite{smith}, all valid schedules of $I^a_{job}$
have the same total weighted completion time.
It is easy to see that $\sum_{i = 1}^{3m}{(w_i^a\sum_{j = 1}^{i}{p_j^a})}$ is 
the total weighted completion time of the schedule where we process $J_i^a$ 
in increasing order of $i$.
\end{proof}

We are now ready to prove Lemma~\ref{lemma: as-job only}.
Every valid schedule $S$ of $I^{as}_{job}$ 
can be obtained by inserting separating jobs
to some valid schedule $S^a$ of $I^a_{job}$.
Consider a schedule $S_1$ obtained by inserting $J_1$ to $S^{a}$ at time 
$(B+1)+\Delta^{S}_1$.
Specifically, $S_1(J_1) = (B+1)+\Delta^{S}_1$, 
and, for every $a$-job $J^a_i$ that is completed after time $B+\Delta^S_1$ 
in $S^{a}$, 
$S_1(J^a_i) = S^a(J^a_i)+1$.
Let $P'$ be the set of jobs that are completed after time $B+\Delta^S_1$ 
in $S^{a}$. Let $w(P')$ be the total weight of jobs in $P'$. 
Hence, $wc(S_1) - wc(S^a)= 
(\sum_{J^a_i \in P'}{w^a_i \cdot 1}) +w_1(B+1+\Delta^S_1)
= w(P')+w_1(B+1+\Delta^S_1)$. 
Since the total processing time of jobs in $P'$ is $mB-(S_1(J_1)-1) 
= (m-1)B-\Delta^S_1$,
the total weight of jobs in $P'$, $w(P')$, 
is equal to $r((m-1)B-\Delta^S_1)$. 
Thus, $wc(S_1) - wc(S^a)= r((m-1)B-\Delta^S_1)+w_1(B+1+\Delta^S_1)$.

We can then insert $J_2$ to $S_1$ at time $2(B+1)+\Delta^S_2$, 
where $2(B+1)+\Delta^S_2 > (B+1)+\Delta^S_1$.
Let $S_2$ be the new schedule. Thus, $S_2(J_2) = 2(B+1)+\Delta^S_2$.
Since $S_2(J_1) < S_2(J_2)$, the total processing time of jobs 
completed after $S_2(J_2)$ in $S_2$ is 
$mB-(S_2(J_2) - 2) = (m-2)B-\Delta^S_2$.
Hence, 
$wc(S_2) - wc(S_1) = r((m-2)B-\Delta^S_2)+ w_2 \cdot (2(B+1)+\Delta^S_2)$. 
Repeat the above process of inserting separating jobs to obtain $S$. 
By Lemma~\ref{lemma: a-job only}, we then get
$wc(S) = \sum_{i = 1}^{3m}{(w_i^a\sum_{j = 1}^{i}{p_j^a})} 
+ \sum_{i = 1}^{m-1}{r((m-i)B-\Delta^S_i)} 
+ \sum_{i = 1}^{m-1}{w_i(i(B+1)+\Delta^S_i)}$.

\subsection{Proof of Lemma~\ref{lemma: dummy}}
Let $J^d$ be a dummy job and let $J$ be a non-dummy job with processing time 
$p$ such that $S(J) = S(J^d)+p$.
By the assumption of $S$, such $J^d$ and $J$ must exist.
Let $S'$ be the schedule obtained by swapping the order of 
$J^d$ and $J$ in $S$. 
Hence, $S'(J) = S(J)-1$ and $S'(J^d) = S(J)$.
Note that $S'$ satisfies precedence constraints because $J$ and $J^d$ must be 
from different job chains; otherwise, $S$ is infeasible.
It suffices to show that $wcs(S') < wcs(S)$, 
because we can repeat the swapping process until
all dummy jobs are completed lastly, which can be done in polynomial time.
We first consider the case where $J$ is an $a$-job.
Let $w$ be the weight of $J$.
We have
\begin{align*}
&wcs(S') - wcs(S)\\ 
&= [w \cdot S'(J) + 1 \cdot S'(J^d) + (S'(J^d))^2] 
- [w \cdot S(J) + 1 \cdot S(J^d) + (S(J^d))^2]\\
&= [w \cdot (S(J)-1) + 1 \cdot S(J) + (S(J))^2] 
- [w \cdot S(J) + 1 \cdot (S(J)-p) + (S(J)-p)^2]\\
&= w\cdot(-1) + 1\cdot(p) + (2S(J)-p)(p)\\
&= (2S(J)-p+1)p - w < 2S(J)p-w\\
&< 2[m(B+1)+3m-1]B - 20mB(B+1)\\
&<2[m(B+1)+3m(B+1)]B - 20mB(B+1)\\ 
&= 8mB(B+1) - 20mB(B+1) < 0,
\end{align*}
where the first equality follows from the fact that jobs other than 
$J$ or $J^d$ have the same completion time in $S$ and $S'$, 
and the second inequality follows from 
the fact that $S(J) \leq m(B+1)+3m-1$,
$p < B$, and $w = pr \geq 2r = 20mB(B+1)$. 

When $J$ is a separating job, $p = 1$. We then have
\begin{align*}
&wcs(S') - wcs(S)\\ 
&= [w \cdot S'(J) + 1 \cdot S'(J^d) + (S'(J^d))^2 + (S'(J))^2]
- [w \cdot S(J) + 1 \cdot S(J^d) + (S(J^d))^2 + (S(J))^2]\\
&= [w \cdot (S(J)-1) + 1 \cdot S(J) + (S(J))^2 + (S(J)-1)^2] 
- [w \cdot S(J) + 1 \cdot (S(J)-1) + (S(J)-1)^2 + S(J)^2]\\
&= w\cdot(-1) + 1\cdot 1 = 1 - w\\ 
&\leq 1-(r-2(m-1)(B+1)+1)\\ 
&= 2(m-1)(B+1)-r < 0.
\end{align*}

\subsection{Proof of Lemma~\ref{lemma: separating}}
We first assume that $S(J_1) < S(J_2) < \cdots < S(J_{m-1})$.
By Lemma~\ref{lemma: as-job only}, we have 
\begin{align*}
&wcs(S) - Q\\ 
&=\sum_{i = 1}^{m-1}{\{[r((m-i)B-\Delta^S_i)]+[w_i(i(B+1)+\Delta^S_i)]+[(i(B+1)+\Delta^S_i)^2]\}}\\
&-\sum_{i = 1}^{m-1}{\{[r((m-i)B)]+[w_i(i(B+1))]+[(i(B+1))^2]\}}\\
&=\sum_{i=1}^{m-1}{\{[-r\Delta^S_i]
+[w_i\Delta^S_i]+[(2i(B+1)+\Delta^S_i)(\Delta^S_i)]\}}\\
&=\sum_{i=1}^{m-1}{\{(\Delta^S_i)^2+(2i(B+1)+w_i-r)\Delta^S_i\}}\\
&=\sum_{i=1}^{m-1}{\{(\Delta^S_i)^2+(2i(B+1)+(-2i(B+1)+1))\Delta^S_i\}}\\
&=\sum_{i=1}^{m-1}{\{(\Delta^S_i)^2+\Delta^S_i\}}.
\end{align*}
Note that $(\Delta^S_i)^2+\Delta^S_i \leq 0$ if and only if 
$-1 \leq \Delta^S_i \leq 0$.
Obviously, $\Delta^S_i$ must be an integer.
Next, we argue that $\Delta^S_i$ cannot be $-1$, for all $1 \leq i \leq m-1$.
First, there are $i$ separating jobs completed by time
$S(J_i)$ under $S$, for all $1 \leq i \leq m-1$.
Moreover, $S$ does not have idle time slots, 
and the processing time of every separating job is one.
Therefore, if $S(J_i) = i(B+1)-1$ 
(i.e., $\Delta^S_i = -1$),
then the total processing time of $a$-jobs that are completed before $J_i$
is $S(J_i) - i= iB-1$. Since $B$ is even, $iB-1$ is odd. 
This is impossible, because, by the assumption that
every $a_i \in L$ is even and that the processing time of $J^a_i$ is $a_i$,
the total processing time of any set of $a$-jobs must be an even number.
Hence, $\Delta^S_i \neq -1$ and 
$\min_{\Delta^S_i \in \mathbb{Z} \setminus \{-1\}}{\{(\Delta^S_i)^2+\Delta^S_i\}} = 0$, 
where the minimum happens only when $\Delta^S_i = 0$. 
Because $S$ is a feasible schedule, 
$wcs(S) - Q = \sum_{i=1}^{m-1}{\{(\Delta^S_i)^2+\Delta^S_i\}} \leq 0$.
Thus, $(\Delta^S_i)^2+\Delta^S_i = 0$ for all $1 \leq i \leq m-1$.
As a result, $\Delta^S_i = 0$ for all $1 \leq i \leq m-1$.

Notice that, we have proved that 
if $S(J_1) < S(J_2) < \cdots < S(J_{m-1})$, then $wcs(S) = Q$.
Observe that $w_1 > w_2 > \cdots > w_{m-1}$.
Hence, if $S(J_1) < S(J_2) < \cdots < S(J_{m-1})$ 
and we swap the order of any two separating jobs in $S$, 
then the objective value will increase and thus will exceed $Q$.
Thus, if $S$ is a feasible schedule of $I_{job}$ such that
all dummy jobs are completed lastly in $S$, 
then $S(J_1) < S(J_2) < \cdots < S(J_{m-1})$ must hold.

\subsection{Proof of Lemma~\ref{lemma: conti}}
Assume that there is a job chain $\overbar{\mathcal{C}}_i$ such that 
$\overbar{S}(\overbar{J}_i^{p_i^1}) 
> \overbar{S}(\overbar{J}_i^1) + (p_i^1 - 1)$. Otherwise, we are done.
Let $\overbar{J}_i^j$ be the last job in $\overbar{\mathcal{C}}_i$ 
such that $\overbar{J}_i^1, \overbar{J}_i^2, \cdots, \overbar{J}_i^j$ 
are processed contiguously under $\overbar{S}$.
Let $\overbar{J}$ be the job completed immediately after $\overbar{J}_i^j$ 
under $\overbar{S}$. Hence, $\overbar{J}$ is not in $\overbar{\mathcal{C}}_i$
and $j < p_i^1$.
Let $\overbar{S^*}$ be a new schedule obtained by swapping the order of
$\overbar{J}$ and $\overbar{J}_i^1, \cdots, \overbar{J}_i^j$ in $\overbar{S}$.
Specifically, $\overbar{S^*}(\overbar{J}) = \overbar{S}(\overbar{J}_i^1)$, and
$\overbar{S^*}(\overbar{J}_i^h) = \overbar{S}(\overbar{J}_i^h)+1$  
for all $1 \leq h \leq j$.
Jobs other than $\overbar{J}$ and $\overbar{J}_i^1, \cdots, \overbar{J}_i^j$
have the same completion time under $\overbar{S}$ and $\overbar{S^*}$.
Since $\overbar{J}$ is not from $\overbar{\mathcal{C}}_i$, 
$\overbar{S^*}$ satisfies precedence constraints.
Notice that the weights of $\overbar{J}_i^1, \cdots, \overbar{J}_i^j$ are two, 
and thus these jobs have the smallest weight in the problem instance.
Therefore, the total weighted completion time can only decrease after swapping. 
Moreover, $\overbar{J}_i^1, \cdots, \overbar{J}_i^j$ are not leaf jobs. 
Thus, the total completion time squared of all leaf jobs can only decrease 
after swapping.
Hence, $\overbar{wcs}(\overbar{S^*}) \leq \overbar{wcs}(\overbar{S})$.
Repeat the swapping process until  
$\overbar{J}_i^1, \overbar{J}_i^2, \cdots, \overbar{J}_i^{p_i^1}$ 
are scheduled contiguously.
We then apply the same procedure to other job chains that violate the 
contiguous property. Note that, once a job chain satisfies the 
contiguous property, it still satisfies the contiguous property after we 
apply the procedure to other job chains. 
Thus, the desired schedule $\overbar{S'}$ can be found in polynomial time.
\end{document}